\documentclass[pra, a4paper,10pt, showpacs, twocolumn]{revtex4}

\usepackage{bm, bbm, amsmath, amssymb, amsthm}
\usepackage[T1]{fontenc}
\usepackage[latin9]{inputenc}
\usepackage{graphicx}
\usepackage{geometry}
\geometry{tmargin=2cm,bmargin=2cm,lmargin=2cm,rmargin=2cm}

\theoremstyle{plain}
\newtheorem{thm}{Theorem}
\theoremstyle{plain}
\newtheorem{lem}[thm]{Lemma}
\theoremstyle{definition}
\newtheorem{example}[thm]{Example}
\theoremstyle{plain}
\newtheorem*{thm*}{Theorem}
\theoremstyle{definition}
\newtheorem{defn}[thm]{Definition}
\theoremstyle{remark}
\newtheorem{rem}[thm]{Remark}
\theoremstyle{plain}
\newtheorem{cor}[thm]{Corollary}

\newcommand{\one}{\mathbbm{1}}

\newcommand{\ket}[1]{\left|{#1}\right\rangle}
\newcommand{\bra}[1]{\left\langle{#1}\right|}
\newcommand{\braket}[2]{\langle{#1}|{#2}\rangle}
\newcommand{\ketbrad}[1]{\left|{#1}\rangle\!\langle{#1}\right|}

\begin{document}

\title{Classical spin systems and the quantum stabilizer formalism: general
mappings and applications}

\author{R. Hübener$^{1}$, M. Van den Nest$^{3}$, W. Dür$^{1,2}$ and
H. J. Briegel$^{1,2}$}

\affiliation{$^{1}$ Institut für Theoretische Physik, Universität Innsbruck,
Technikerstraße 25, A-6020 Innsbruck, Austria\\
$^{2}$ Institut für Quantenoptik und Quanteninformation der Österreichischen
Akademie der Wissenschaften, Technikerstraße 21, A-6020 Innsbruck, Austria\\
$^{3}$ Max-Planck-Institut für Quantenoptik, Hans-Kopfermann-Str.
1, D-85748 Garching, Germany.}

\begin{abstract}
We present general mappings between classical spin systems and quantum physics. More precisely, we show how to express partition functions and correlation functions of arbitrary classical spin models as inner products between quantum stabilizer states and product states, thereby generalizing mappings for some specific models established in our previous work [Phys. Rev. Lett. 98, 117207 (2007)]. For Ising- and Potts-type models with and without external magnetic field, we show how the entanglement features of the corresponding stabilizer states are related to the interaction pattern of the classical model, while the choice of product states encodes the details of interaction. These mappings establish a link between the fields of classical statistical mechanics and quantum information theory, which we utilize to transfer techniques and methods developed in one field to gain insight into the other. For example, we use quantum information techniques to recover well known duality relations and local symmetries of classical models in a simple way, and provide new classical simulation methods to simulate certain types of classical spin models. We show that in this way all inhomogeneous models of $q$-dimensional spins with pairwise interaction pattern specified by a graph of bounded tree-width can be simulated efficiently. Finally, we show relations between classical spin models and measurement-based quantum computation.
\end{abstract}

\pacs{03.67.-a,03.67.Lx,75.10.Hk,75.10.Pq,02.70.-c}

\maketitle


\section{Introduction}

\emph{Classical spin systems} are widely studied in statistical physics \cite{Wu84}. They also play an important role in modeling complex behavior also in other disciplines, such as economics and biology. In spite of their often simple definition, spin models show a highly non-trivial behavior, as is, e.g., apparent from their phase structure and criticality. Surprisingly, even the simple Ising model of interacting 2-state spins arranged on a 2D square lattice (with external magnetic field) is in general not solvable, and calculating, e.g., the ground state energy or the partition function is known to be a computationally hard problem~\cite{Ba82}.

In \emph{quantum information theory} (QIT), on the other hand, (entanglement) properties of quantum systems are systematically studied, and possible applications regarding, e.g., quantum computation are investigated. QIT has become a field of interdisciplinary interest, and concepts and methods developed in QIT have found applications also in other branches of physics. In the context of QIT, methods to efficiently compute and simulate certain quantum systems and their properties have been developed~\cite{SDV06,VDB07}. In particular, so-called quantum ``stabilizer states''~\cite{Go97,HDM05,VPhD} and ``graph states''~\cite{Hei05, SW00} have been introduced and studied in detail. Stabilizer states are used for certain types of quantum error correction~\cite{Go97} and measurement-based quantum computation~\cite{RB01}, and can be described efficiently in terms of their stabilizing operators. This allows to determine many of their (entanglement) properties, and to efficiently simulate some processes classically.

In this paper, we present general mappings between classical spin systems and quantum physics related to QIT. More precisely, we show how to express the partition function and correlation functions of an arbitrary classical spin system as a quantum mechanical amplitude (scalar product) between a stabilizer state $|\psi\rangle$ encoding the interaction pattern, and a certain product state $\otimes_{j}|\alpha_{j}\rangle$ encoding the details of the interaction (i.e. the coupling strengths) and the temperature:
\begin{equation}
Z_{G}=\bra{\psi}\left(\bigotimes_{j}\ket{\alpha_{j}}\right).
\label{eq:Zeqoverlap}
\end{equation}
With such a mapping at hand, we can use methods and techniques established in one field to gain insight into the other, thereby providing a novel approach to these problems. We have initiated this approach in a recent publication~\cite{VDB06}, where such mappings have been established for Ising and Potts-type models. Here we generalize this approach, and discuss the mappings and their applications in more detail.

We further note that connections between quantum information theory and statistical mechanics have recently been studied by several other researchers~\cite{many,BR06}.

\subsection{Mappings between classical spin systems and quantum physics}

In this section we briefly sketch the general form of the proposed mappings between classical and quantum systems.

We consider classical $q$-state spin systems with an arbitrary pairwise interaction pattern, described by a graph $G$ with vertex set $V$ (position of the classical spins) and edge set $E$ (corresponding to interactions). Such systems are sometimes called ``edge models'' (i.e., the interactions take place on the \emph{edges}). Each spin $s$ may assume $q$ different states: $s\in\{0, \dots, q-1\}$. We will consider models where the pairwise interactions $h(s,s')$ between spins $s$ and $s'$ are of the following forms:
\begin{itemize}
\item[(i)] $h(s, s')$ only depends on the difference (modulo $q$) of
the two involved spins, $h\equiv h(|s -s'|_{q})$;
\item[(ii)] $h(s, s')$ is
of the form (i), but with additional local magnetic fields;
\item[(iii)] $h(s, s')$
is completely arbitrary.
\end{itemize}
We will also consider (iv) models with arbitrary $k$-body interactions.

The Ising- and Potts model without [with] magnetic field are of type (i) [(ii)] respectively, while so-called ``vertex models'' (i.e., the interactions take place on the vertices) are a special case of type (iv).

In each of the cases (i)-(iv), we show how one can express the \emph{partition function} $Z_{G}$ as an overlap between a quantum stabilizer state and a  complete product state, (Eq.~\eqref{eq:Zeqoverlap}). Depending on the different  forms of the interaction (as in (i)-(iv)), these quantum states will be defined slightly differently.
\begin{itemize}
\item[(i)] For models \emph{without} local fields, the corresponding quantum states consists of $|E|$ $q$-level quantum systems (one for each pairwise interaction term). We will denote the stabilizer state by $|\psi_{G}\rangle$. The product state has the form $|\alpha\rangle= \bigotimes_{e\in E}|\alpha_{e}\rangle$, where the coefficients of each $|\alpha_e\rangle$ encode the strengths of the pairwise couplings, as well as the temperature of the system.
\item[(ii)] For models \emph{with} local magnetic fields, the corresponding quantum states consist of $|V|+|E|$ $q$-level quantum systems (one for each pairwise interaction term and one for each local field), with stabilizer states denoted by $|\varphi_{G}\rangle$ and a product state $|\alpha\rangle=\bigotimes_{e\in E}|\alpha_{e}\rangle\bigotimes_{a\in V}|\alpha_{a}\rangle$.
\item[(iii)] For models with \emph{general} pairwise interaction (iii), we provide a mapping where the stabilizer state is a tensor product of $|V|$ entangled states, $|\phi\rangle=\bigotimes_{a\in V}(\sum_{j=0}^{q-1} |j\rangle^{\otimes n_{a}})$. Here, $n_{a}$ is the degree of vertex $a$, i.e.\ the number of neighbors in the graph, which also determines the number of associated $q$-level quantum systems. Correspondingly, we now consider states $|\alpha_{ab}\rangle$ of dimension $q^{2}$ for the overlap, which are associated to one quantum particle belonging to vertex $a$ and and one quantum particle belonging to vertex $b$. A similar picture holds for models with arbitrary $k$-body interactions (iv), where the product states have now dimension $q^{k}$, and are associated with multiple vertices.
\end{itemize}

We will investigate the entanglement properties of the states $|\psi_{G}\rangle$ and $|\varphi_{G}\rangle$ and their relation to the underlying interaction pattern specified by the graph $G$, and provide a number of examples to illustrate this connection.

The mappings (ii)-(iv) can be extended, and will allow us to express also \emph{classical correlation functions} in a quantum language.

\subsection{Applications of the mappings}

Based on these mappings, we will then illustrate some applications. Here we briefly sketch which applications can be obtained.
\begin{itemize}
\item[(a)] Using well established stabilizer methods \cite{Hei05,Go97,HDM05,VPhD}, we show how one can recover the well known high-low temperature duality relations~\cite{Wu84} for classical spin models on arbitrary planar graphs.
\item[(b)] Using the fact that stabilizer states are stabilized by certain tensor product operators, we derive local symmetry relations for classical models, i.e.\ we identify models with different coupling strengths that lead to the same partition function.
\item[(c)] We show how one can use recently established results in QIT to classically simulate certain classes of quantum systems efficiently~\cite{SDV06, MS06, VDB07} and thus obtain novel simulation algorithms for classical spin system. More precisely, by describing stabilizer states in terms of an optimal tree tensor network~\cite{SDV06} of dimension $d$, one can compute the overlap with product states with an effort that is polynomial in $d$. This leads to an efficient algorithm to classically simulate arbitrary (inhomogeneous) classical $q$-state models on graphs with a bounded (or logarithmically growing) tree width. We also extend these results to models with $k$-body interaction.
\item[(d)] Finally, we discuss links between classical spin models and \emph{measurement based quantum computation}. This allows us to relate the computational complexity of computing partition functions of classical spin models with the quantum computational power of the associated graph states.

\end{itemize}
We also note that (d) has recently been used in Ref.~\cite{CDBip} to show a ``\emph{completeness}'' property of the 2D Ising model. That is, invoking the connection to measurement-based quantum computation, it was shown that the partition function of any model with pairwise interaction in arbitrary dimension can be expressed as a special instance of the partition function of a 2D Ising model on an (enlarged) 2D square lattice (with complex coupling strengths).

\subsection{Guideline through the paper}

The paper is organized as follows. We start in Sec.~\ref{sec:Background-on-spin} by briefly reviewing classical spin models, and collect some relevant results on stabilizer and graph states in Sec.\ \ref{sec:StabilizerStates}. We then introduce different mappings between classical spin systems and quantum mechanical amplitudes, and discuss the properties of the involved quantum states in Secs.~\ref{sec:Encoding-classical-spin} and \ref{sec:Extending-the-formalism}. We illustrate a number of applications of these mappings in Sec.\ \ref{sec:Applications}, and summarize and conclude in Sec.\ \ref{sec:Summary-and-Conclusion}.

\section{Background on spin models\label{sec:Background-on-spin}}

In this section we describe the classical models that we want to consider. Since the various approaches to be described later are related and can be viewed as derivations from an original scheme, we will focus on the original approach first.

The typical model to be considered by the original approach is the \emph{thermal state of a classical spin model described by a Hamiltonian function with two-body interaction}, and this model will serve as an introductory guide to the general idea. These systems have the virtue that they admit a description by means of a graph~\cite{GR01}: the spins of the system correspond to the vertices and the two-body interaction pattern between the spins is given by the edge set.

We will describe a mapping of such an interaction graph to a stabilizer state of a quantum system. Performing an overlap of this quantum stabilizer state with another quantum product state, encoding the temperature and individual interaction strengths, then yields the properties of the thermal state of the classical system. We want to emphasize that this evaluation is \emph{not approximate but exact}. Later on, extensions of this formalism will be given as well, going beyond this particular kind of graphical description and at the same time going beyond the limitation to two-body interactions.

It is important to keep in mind that the interaction \emph{pattern} and the interaction \emph{strengths} are encoded at different places: the graph encodes the interaction pattern, not the strengths, hence an edge connecting two vertices simply denotes the fact that there is an interaction taking place. The strength and nature of this interaction is not encoded in the graph, but in a product state to be specified later.  This encoding admits the strengths of all edge terms and all vertex terms to be chosen individually, hence the interaction strength may vary for different pairs of spins and also the local field may vary.

More precisely, let, for now, $H$ be a Hamiltonian function with two-body-interaction between classical spins $s$ that can assume $q$ possible values $s\in\left\{ 0,...,q-1\right\} $. In the graphical description of this Hamiltonian function, we let $G=\left(V,E\right)$ denote the graph associated with $H$, where the sets $V$ and $E$ contain the vertices and the edges of the graph respectively. In this picture, any vertex $v \in V$ corresponds to a classical spin site $s_v$ and any edge $e\in E$ between adjacent vertices $v_{1},v_{2}$ of the graph corresponds to an interaction term between the respective spins $s_{v_{1}}$ and $s_{v_{2}}$. Additionally, we  allow each spin $s_{v}$ to contribute a local term to the Hamiltonian function, i.e.\ a term that that depends on the state of the site $s_{v}$ alone, although this is not reflected in the graph. We might think of the energy of the spin in a local field. We choose the graph to be a directed one, denoting the orientation by $\sigma$. The exact choice of the directions can be arbitrary but has to be fixed. This way, the two adjacent vertices of an edge $e\in E$ can be distinguished as ``head'' $v_{e}^{+}$ and ``tail'' $v_{e}^{-}$ of the edge, respectively.

We will derive several different mappings for Hamiltonian functions described by these graphs. The first mapping admits descriptions of systems with classical Hamiltonian functions of the form
\begin{equation}
H\left(\left\{ s_{i}\right\} \right)
=\sum_{e\in E}h_{e}\bigl(\bigl|s_{v_{e}^{+}}-s_{v_{e}^{-}}\bigr|_{q}\bigr),
\label{eq:first_approach}
\end{equation}
with $h_{e}$ being an energy term that depends on the relative state of two interacting spins $s_{v_{e}^{+}}$ and $s_{v_{e}^{-}}$ modulo $q$. In the second mapping we extend the quantum description to be able to include also external fields
\begin{equation}
H\left(\left\{ s_{i}\right\} \right)
=\sum_{e\in E}h_{e}\bigl(\bigl|s_{v_{e}^{+}}-s_{v_{e}^{-}}\bigr|_{q}\bigr)
+\sum_{v\in V}b_{v}\bigl(s_{v}\bigr),\label{eq:second_approach}
\end{equation}
where $b_{v}$ is an energy term contributed by a local external field, acting on the spin $s_{v}$. To go beyond the limitation to interaction Hamiltonian functions that depend on the relative state of the spins only, we finally provide further approaches to treat Hamiltonian functions of the form
\[
H\left(\left\{ s_{i}\right\} \right)=\sum_{\left(ij\right)\in E}
h_{\left(ij\right)}\bigl(s_{i},s_{j}\bigr)
\]
as well as arbitrary Hamiltonian functions with $n$-body terms.

The degrees of freedom in the definitions of these Hamiltonian functions give rise to a large set of possible classical spin systems to be described -- even if we restricted ourselves to the sets of Hamiltonian functions specified in Eqns.~\eqref{eq:first_approach} and \eqref{eq:second_approach}. Among those are the Ising model, the Potts model and the clock model on arbitrary lattices, all equipped with (local) magnetic fields, and generalizations thereof~\cite{Wu84}.

\subsection{Ising model}

The Ising model describes a set of classical spins (or simply dipoles) that can point either up or down and are placed on a graph. All next neighbors have the same distance (hence the interaction strength is uniformly given by the real number $J$) and long-range forces  are neglected. Moreover, there is a global external field whose strength is given by the real number $B$, which puts an energetic bias on the possible configurations. Thus the classical Ising model is described by the Hamiltonian function
\begin{equation}
H_{\mbox{\tiny{Ising}}}\left(\left\{ s_{i}\right\} \right)=
J\sum_{\bigl\langle i,j\bigr\rangle}\bigl|s_{i}-s_{j}\bigr|_{2}+B\sum_{i}
\bigl(s_{i}-\frac{1}{2}\bigr),\label{eq:Ising, our form}
\end{equation}
where the $s_{i}\in\left\{ 0,1\right\} $ and $\bigl\langle i,j\bigr\rangle$ denotes that $i$ and $j$ are adjacent spins on the graph. We note that it can be rewritten as
\[
H_{\mbox{\tiny{Ising}}}\left(\left\{ \sigma_{i}\right\} \right)=-J'\sum_{\bigl\langle i,j\bigr\rangle}\sigma_{i}\sigma_{j}+B'\sum_{i}\sigma_{i},
\]
where $\sigma_{i}\in\left\{ +1,-1\right\}$. This is the more familiar form and  can be obtained from Eq.~\eqref{eq:Ising, our form} by a rescaling of parameters and an addition of a constant. Although this model is highly idealized, it features (in appropriate dimensions) many properties of realistic solids, such as phase transitions, spontaneous symmetry breaking etc. As will be shown, our treatment allows---without a change of computational effort---the generalization to spin-glass Hamiltonian functions, where the factor $J$ is actually dependent on the specific pairs of spins that interact: $J\rightarrow J_{ij}$.

\subsection{Potts and clock models}

A generalization of the Ising model is given by the Potts- and the clock model. Whereas the individual spins in the Ising model can take only one of two values and hence for neighbors there are only the alternatives of being \emph{parallel} or \emph{anti-parallel}, it might be desirable to allow the individual dipoles to assume more positions and hence to obtain more relative configurations of neighbors that can be discriminated energetically. Accordingly we choose spin states $s_{i}\in\left\{ 0,...,q-1\right\} $ and a Hamiltonian function
\begin{equation}
H\left(\left\{ s_{i}\right\} \right)=
-\sum_{\bigl\langle i,j\bigr\rangle}J\left(\Theta_{ij}\right)+b\sum_{i}
\bigl(s_{i}-\frac{q-1}{2}\bigr),\label{eq:HamPottsClock}
\end{equation}
where $\Theta_{ij}$ is a function that discriminates the relative states of neighboring spins. We can interpret it for instance as the angle between adjacent dipoles, provided that they can only rotate in a fixed plane, e.g., $\Theta_{ij}=\Theta_{i}-\Theta_{j}$ with discretised positions $\Theta_{i}=2\pi s_{i}/q$. The function $J$, which characterizes the Hamiltonian function, maps the relative angle (i.e., relative state) of adjacent spins to an energy value: The Potts model is defined by
\[
J_{\mbox{\tiny{Potts}}}(\Theta_{ij}):=-\varepsilon\delta(\Theta_{ij})
\]
with $\varepsilon\in\mathbb{R}$ and the clock model by
\[
J_{\mbox{\tiny{clock}}}(\Theta_{ij}):=-\varepsilon\cos(\Theta_{ij}).
\]

\subsection{Partition function}

The focus of this paper will be on the thermal equilibrium of these classical systems. More precisely, the central quantities of interest that we want to obtain are the partition function
\[
Z\left(\beta\right)=\sum_{\left\{ s_{i}\right\} }
e^{-\beta H\left(\left\{ s_{i}\right\} \right)}
\]
as well as the n-point correlation functions, whose definition can be found,  e.g., in Ref.~\cite{ME86}
\begin{multline*}
\left\langle s_{i_{1}},s_{i_{2}},...,s_{i_{n}}\right\rangle _{\beta}\\
=Z^{-1}\sum_{\left\{ s_{i}\right\}}\cos\left(\Theta_{i_{1}}\right)
\cos\left(\Theta_{i_{2}}\right)...\cos\left(\Theta_{i_{n}}\right)
e^{-\beta H\left(\left\{ s_{i}\right\} \right)}.
\end{multline*}
The partition function encodes the macroscopic properties of a thermal ensemble. The parameters that enter depend on the kind of ensemble we look at, e.g., the canonical (temperature), grand canonical (temperature and chemical potential) and others. In the present framework we will deal with the canonical ensemble, because the number of spin sites is fixed, but energy can be drawn from an external bath.

Let us briefly illustrate the importance of the partition function. The partition function of a canonical ensemble is
\[
Z=\sum_{i}e^{-\beta E_{i}},
\]
where the index $i$ is the index for the states with energy $E_{i}$ that the system can take and $\beta=\left(k_{B}T\right)^{-1}$ with the Boltzmann constant $k_{B}$. Moreover, $p_{i}=Z_{i}^{-1}e^{-\beta E_{i}}$ is the probability to find the system in the state with energy $E_{i}$. Several relevant quantities can now be derived from $Z$: We can extract the expectation value of the energy
\[
\left\langle E\right\rangle _{\beta}=Z^{-1}\sum_{i}E_{i}e^{-\beta E_{i}}
=-\frac{\partial\log Z}{\partial\beta},
\]
the variance of the expected energy
\[
\left\langle \left(\delta E\right)^{2}\right\rangle _{\beta}
=\frac{\partial^{2}\log Z}{\partial\beta^{2}},
\]
as well as the free energy
\[
F=\left\langle E\right\rangle _{\beta}-TS=-\beta^{-1}\log Z,
\]
where the entropy is $S=-k_{B}\sum_{i}p_{i}\log p_{i}$, and more. We refer the reader to standard text books on this topic.

\section{\label{sec:StabilizerStates}Stabilizer states and graph states}

In this section, we give the definition and some properties of stabilizer states~\cite{Go97,HDM05,VPhD} and graph states~\cite{Hei05,SW00}.  We will first consider spin-1/2 quantum systems, then proceed to higher  dimensional systems.

\subsection{Graph states}

Here we will briefly familiarize the reader with the graph states.  In the present context, a graph $G=\left(V,E\right)$ is identified with a quantum system. Each vertex $a$ represents a quantum spin, and the adjacent vertices (connected with $a$ by edges in the graph) form the neighborhood $N_{a}$ of $a$. This way, the graph defines a set of operators
\[
K_{a}:=\sigma_{x}^{\left(a\right)}\prod_{b\in N_{a}}\sigma_{z}^{\left(b\right)},
\]
where the sigma-matrices are defined as usual\begin{gather}
\sigma_{0}=\left(\begin{array}{cc}
1 & 0\\
0 & 1\end{array}\right),\sigma_{x}=\left(\begin{array}{cc}
0 & 1\\
1 & 0\end{array}\right),\label{eq:sigma-matrices}\\
\sigma_{y}=\left(\begin{array}{cc}
0 & -i\\
i & 0\end{array}\right),\sigma_{z}=\left(\begin{array}{cc}
1 & 0\\
0 & -1\end{array}\right),\nonumber \end{gather}
and the notation $O^{\left(a\right)}$ of an operator $O$ means the tensor product of the operator $O$, acting on the subspace of site $a$, and $\one$ everywhere else. A graph state $\left|G\right\rangle $ associated with to the graph $G$, and hence with the set $\left\{ K_{a}\right\} $, is the unique non-trivial fixed point of the operators $K_{a}$,
\[
\forall K_{a}:K_{a}\left|G\right\rangle =\left|G\right\rangle.
\]

Graph states are a subset of the stabilizer states, which play an important role in the context of one-way quantum computing. Conversely, every stabilizer state can be written, up to a local rotation, as a graph state.

\subsection{Stabilizer states}

We will now turn our attention to the slightly more general set of stabilizer states. The concept of defining a state as a simultaneous fixed point of a set of operators can be used in a slightly more general way than in the case of graph states, where the operators $K_{a}$ take a very special form. To construct more general sets of operators we consider the sigma-matrices, see formula ~\eqref{eq:sigma-matrices}, and the group they generate
\[
\mathcal{G}_{1}=\left\{ \pm\sigma_{0},\pm i\sigma_{0},\pm\sigma_{x},
\pm i\sigma_{x},\pm\sigma_{y},\pm i\sigma_{y},\pm\sigma_{z},\pm i\sigma_{z}\right\} .
\]
Tensor products of $\mathcal{G}_{1}$ with itself form the Pauli groups $\mathcal{G}_{n}:=\mathcal{G}_{1}^{\otimes n}$. It is known that any Abelian subgroup $\mathcal{S}\subset\mathcal{G}_{n}$ of a Pauli group with $\bigl|\mathcal{S}\bigr|=2^{n}$ that does not contain $-\one_{n}$ has a unique fixed point $\left|\psi\right\rangle $ in the Hilbert space $\mathcal{H}$ that it acts upon. We then call $\mathcal{S}$ the \emph{stabilizer} of $\left|\psi\right\rangle $ and $\left|\psi\right\rangle $ a \emph{stabilizer state.} It should be noted that each stabilizer can be identified with its generator, i.e., a set of operators that generate it. Generators are not unique sets, but share the necessary requirement to contain $n$ independent operators.

For our purposes, the prefactor $\left(\pm1,\pm i\right)$ of an element of a Pauli group will not be important. Moreover, there is a mapping between the Pauli group $\mathcal{G}_{n}/_{\sim}$ ($\mathcal{G}_{n}$ modulo prefactors) and the group $\mathbb{F}_{2}^{2n}$, which will be used later. Since $\sigma_{y}=i\sigma_{x}\sigma_{z}$ and $\sigma^{0}=\one_{2}$ for all sigma-matrices, we can encode the generators of $\mathcal{G}_{1}/_{\sim}$ as follows
\begin{align*}
\sigma_{0} & \sim\sigma_{x}^{0}\sigma_{z}^{0}\mapsto(00)\\
\sigma_{x} & \sim\sigma_{x}^{1}\sigma_{z}^{0}\mapsto(10)\\
\sigma_{y} & \sim\sigma_{x}^{1}\sigma_{z}^{1}\mapsto(11)\\
\sigma_{z} & \sim\sigma_{x}^{0}\sigma_{z}^{1}\mapsto(01).
\end{align*}
where $\sim$ denotes equality modulo prefactor. Tensor products of these operators and hence elements of the groups $\mathcal{G}_{n}/_{\sim}$ will be encoded by the mapping
\[
\mathcal{G}_{n}/_{\sim}\ni\bigotimes_{i=1}^{n}
\sigma_{x}^{\xi_{i}}\sigma_{z}^{\zeta_{i}}\mapsto\left(\xi_{1},...,\xi_{n},
\zeta_{1}...,\zeta_{n}\right)\in\mathbb{F}_{2}^{2n}.
\]

The generalization to $q$-dimensional quantum systems with  $\mathcal{H}=\left(\mathbb{C}^{2}\right)^{\otimes q}$ is straightforward. We replace $\sigma_{x}$ and $\sigma_{z}$ by the operators $X$ and $Z$ respectively, where
\[
X\left|j\right\rangle =\left|j+1\text{ mod }q\right\rangle ,\quad Z\left|j\right\rangle
=e^{2\pi ij/q}\left|j\right\rangle ,
\]
$q=2$ being a special case that gives us back $\sigma_{x}$ and $\sigma_{z}$. The higher-dimensional groups $\mathcal{G}_{n}^{q}/_{\sim}$ are thus generated by tensor products of $X^{a}Z^{b}$ where $a,b=0,...,q-1.$ The mapping is generalized to the group homomorphism
\begin{multline*}
\left(\mathcal{G}_{n}^{q}/_{\sim},
\cdot\right)\ni\bigotimes_{i=1}^{n}X^{\xi_{i}}Z^{\zeta_{i}}\\
\mapsto\left(\xi_{1},...,\xi_{n},\zeta_{1}...,\zeta_{n}\right)
\in\left(\mathbb{F}_{q}^{2n},+\right).
\end{multline*}

The number of elements in a stabilizer that stabilizes one single stabilizer state is $q^{n}$, the number of elements of its generator is $n$.

Related to this construction is a theorem that we will use later. Note that we do not neglect the phase this time.

\begin{lem}
\label{lem:commutation}Any two operators  $\bigotimes_{i=1}^{n}X^{\xi_{i}}Z^{\zeta_{i}}$ and
\textup{$\bigotimes_{i=1}^{n}X^{\xi'_{i}}Z^{\zeta'_{i}}$} commute
if and only if $\xi'\cdot\zeta-\xi\cdot\zeta'=0$ modulo $q$.
\end{lem}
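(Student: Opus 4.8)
The plan is to reduce the tensor-product statement to a single-mode calculation and then to track the overall phase that is produced when the two operators are swapped. First I would record the fundamental commutation relation on a single $q$-level system. Writing $\omega = e^{2\pi i/q}$, a direct computation on the basis states gives $ZX\ket{j} = \omega^{j+1}\ket{j+1}$ while $XZ\ket{j} = \omega^{j}\ket{j+1}$, so that $ZX = \omega\, XZ$. Iterating this relation over the individual powers yields the general single-mode identity $Z^{\zeta}X^{\xi} = \omega^{\xi\zeta}\, X^{\xi}Z^{\zeta}$ for all nonnegative integers $\xi,\zeta$, since moving each of the $\zeta$ copies of $Z$ past each of the $\xi$ copies of $X$ contributes one factor of $\omega$.

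Next I would compute the product of two single-mode operators $X^{\xi}Z^{\zeta}$ and $X^{\xi'}Z^{\zeta'}$ in both orders, bringing each product into the normal-ordered form $X^{\bullet}Z^{\bullet}$ by means of the identity above. This gives $X^{\xi}Z^{\zeta}X^{\xi'}Z^{\zeta'} = \omega^{\xi'\zeta}\, X^{\xi+\xi'}Z^{\zeta+\zeta'}$ and, symmetrically, $X^{\xi'}Z^{\zeta'}X^{\xi}Z^{\zeta} = \omega^{\xi\zeta'}\, X^{\xi+\xi'}Z^{\zeta+\zeta'}$, so the two single-mode products coincide up to the phase $\omega^{\xi'\zeta - \xi\zeta'}$. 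To pass to the $n$-fold tensor products I would use that operators supported on distinct factors commute, so the global swap simply accumulates these single-mode phases factor by factor. Concretely, $\bigl(\bigotimes_{i} X^{\xi_i}Z^{\zeta_i}\bigr)\bigl(\bigotimes_{i} X^{\xi'_i}Z^{\zeta'_i}\bigr) = \omega^{\sum_i \xi'_i\zeta_i}\,\bigotimes_{i} X^{\xi_i+\xi'_i}Z^{\zeta_i+\zeta'_i}$, while the opposite ordering produces the prefactor $\omega^{\sum_i \xi_i\zeta'_i}$ in place of $\omega^{\sum_i \xi'_i\zeta_i}$; the two operators therefore commute exactly when $\omega^{\xi'\cdot\zeta - \xi\cdot\zeta'} = 1$.

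Finally I would invoke that $\omega$ is a primitive $q$-th root of unity, so that $\omega^{k} = 1$ if and only if $k \equiv 0 \pmod q$. Applying this to the integer $k = \xi'\cdot\zeta - \xi\cdot\zeta'$ gives precisely the stated criterion.

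There is no genuinely hard step here: the whole argument is a bookkeeping of phases. The one real hazard is to fix, once and for all, the direction of the elementary relation $ZX = \omega\, XZ$ (and hence the sign of the exponent) and to use it consistently, so that the leftover phase comes out as $\omega^{\xi'\cdot\zeta - \xi\cdot\zeta'}$ rather than its inverse. Once that convention is pinned down, the reduction of the phase condition to the mod-$q$ equation is automatic.
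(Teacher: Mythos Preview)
Your proof is correct and follows essentially the same approach as the paper: compute the single-site commutation to extract the phase $e^{2\pi i(\xi'_i\zeta_i-\xi_i\zeta'_i)/q}$, multiply these phases over the tensor factors, and observe that commutation is equivalent to the total exponent vanishing modulo $q$. The paper's version is slightly terser (it writes down the normal-ordered form directly rather than first deriving $ZX=\omega XZ$), but the argument is the same.
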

\begin{proof}
The computation for the single spin site yields
\begin{multline*}
X^{\xi_{i}}Z^{\zeta_{i}}X^{\xi'_{i}}Z^{\zeta'_{i}}
=X^{\xi_{i}+\xi_{i}'}Z^{\zeta_{i}+\zeta_{i}'}e^{2\pi i\xi'_{i}\zeta_{i}/q}\\
=X^{\xi_{i}'}Z^{\zeta'_{i}}X^{\xi_{i}}Z^{\zeta_{i}}
e^{2\pi i(\xi'_{i}\zeta_{i}-\xi_{i}\zeta_{i}')/q}.
\end{multline*}
Hence for all sites together we obtain a phase factor  $e^{2\pi i(\xi'\cdot\zeta-\xi\cdot\zeta')/q}$.
\end{proof}

It is noteworthy that for $q=2$ each stabilizer state is related to a graph state by some local unitary transformations. This means that the two sets do not differ as far as their non-local properties are concerned.

The stabilizer states are interesting to us, because---as will be shown---the interaction patterns of the Hamiltonian functions of the classical spin systems that we look at correspond to such states. Moreover, stabilizer states are well investigated and elaborate techniques are known for their manipulation~\cite{Go97}, allowing us to investigate relationships between different (interaction) graphs and hence different Hamiltonian functions.

\section{\label{sec:Encoding-classical-spin}Encoding classical spin systems
in quantum language}

In this section we will investigate in detail the correspondence of the classical and the quantum systems that were presented in the preceding sections.

\subsection{The basic principle\label{sub:The-basic-principle}}

The basic approach, which was introduced in Ref.~\cite{VDB06}, is sufficient to describe systems with classical Hamiltonian functions of the form  $H\left(\left\{ s_{i}\right\} \right)=\sum_{e\in E}h_{e}\bigl(\bigl| s_{v_{e}^{+}}-s_{v_{e}^{-}}\bigr|_{q}\bigr)$. The idea is to map the graph $G$, describing the interaction pattern into a stabilizer state, together with a supplementary product state that encodes the interaction strengths as well as the temperature.

Let the classical spin system be defined by the (arbitrarily oriented)  interaction graph $G^{\sigma}=\left(V,E\right)$ over $\bigl|V\bigr|$ classical spins of dimension $q$, where $\sigma$ denotes the orientation.  Let in the following $M=\bigl|V\bigr|$ and $N=\bigl|E\bigr|$. Now consider the incidence matrix $B^{\sigma}$ of the interaction graph $G^{\sigma}$. This matrix has one row for each vertex and one column for each edge. The entries are either $0$ or $\pm1$, where $B_{v,e}^{\sigma}=-1$ if the index pair $\left(v,e\right)$ corresponds to the tail vertex $v$ of edge $e$, $B_{v,e}^{\sigma}=+1$ for the head vertex $v$ of edge $e$ and $B_{v,e}^{\sigma}=0$ otherwise. Consistent with our notation, we do not consider graphs with edges that connect one point with itself. The rows of $B^{\sigma}$ span the $\mathbb{Z}_{q}$-vector space $C_{G}\left(q\right)$, which is a linear subspace of $\mathbb{Z}_{q}^{N}$. The vectors $(B^{\sigma})^Ts\in C_{G}\left(q\right)$, where $T$ denotes transposition, correspond to the vectors that encode spin configurations $\left(s_{v}\right)_{v\in V}$, as the linear mapping 
\[
\bigl|s_{v_{e}^{+}}-s_{v_{e}^{-}}\bigr|_{q}
=\left|\sum_{v\in V}\left(B^{\sigma}\right)_{e,v}^Ts_{v}\right|_{q}
\]
shows.

\begin{lem}
\label{lem:prefactor}The kernel of the linear mapping $\left(B^{\sigma}\right)^T$ has $q^{\kappa}$ elements, where $\kappa$ is the number of connected sub-graphs of $G$ (without isolated points).
\end{lem}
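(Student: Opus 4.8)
The plan is to compute the kernel of $(B^{\sigma})^{T}$ directly from its defining action, rather than appealing to rank--nullity, which is delicate over $\mathbb{Z}_{q}$ when $q$ is not prime (there $\mathbb{Z}_{q}$ is only a ring, so the ``$\mathbb{Z}_{q}$-vector space'' language is loose and dimension counts cannot be used naively). The excerpt already records that the $e$-th component of $(B^{\sigma})^{T}s$ equals $\bigl|s_{v_{e}^{+}}-s_{v_{e}^{-}}\bigr|_{q}$, so a configuration $s=(s_{v})_{v\in V}$ lies in the kernel precisely when $s_{v_{e}^{+}}\equiv s_{v_{e}^{-}}\pmod{q}$ for every edge $e\in E$. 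The whole statement then reduces to counting the solutions of this system of congruences.

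First I would observe that the condition $s_{v_{e}^{+}}\equiv s_{v_{e}^{-}}$ simply says that $s$ takes the same value on the two endpoints of each edge. Chaining this along a path shows that $s$ must be constant on each connected component of $G$: any two vertices joined by a path carry equal spin values. Conversely, any $s$ that is constant on every connected component manifestly satisfies all the congruences and hence lies in the kernel. Thus the kernel is canonically identified with the set of functions that are constant on connected components, and the number of such functions is $q^{c}$, where $c$ is the number of components, since each component may independently be assigned any one of the $q$ values in $\mathbb{Z}_{q}$.

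The one point that needs care --- and the main obstacle to matching the stated formula exactly --- is the treatment of isolated vertices. An isolated vertex contributes an all-zero column to $(B^{\sigma})^{T}$, so its spin value is completely unconstrained and would add a spurious factor of $q$ to the count. The formula $q^{\kappa}$ with $\kappa$ the number of connected subgraphs \emph{without} isolated points is therefore precisely the statement that we count only the genuine, edge-containing components; equivalently, one restricts the domain to the vertices incident to at least one edge. Once this convention is fixed, the combinatorial count yields one free $\mathbb{Z}_{q}$-parameter per such component, i.e.\ $q^{\kappa}$ kernel elements, as claimed. I would close by noting that the argument is entirely combinatorial and so holds verbatim for arbitrary $q$, prime or composite.
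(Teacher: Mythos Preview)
Your proof is correct and follows essentially the same argument as the paper: both observe that the kernel condition forces equal spin values on adjacent vertices, chain this along paths to conclude constancy on each connected component, and then count $q$ free choices per component to obtain $q^{\kappa}$. Your version is slightly more careful in flagging the subtleties with isolated vertices and with non-prime $q$, but the underlying approach is identical to the paper's.
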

\begin{proof}
We re-arrange the rows of the matrix of $\left(B^{\sigma}\right)^T$ so that the connected sub-graphs $G_{i}=\left(E_{i},V_{i}\right)$ are described by blocks $B_{i}$, i.e.,
\[
\left(B^{\sigma}\right)^T\mapsto\left(\begin{array}{cccc}
B_{1} & 0 & 0 & \cdots\\
0 & B_{2} & 0 & \cdots\\
0 & 0 & B_{3} & \cdots\\
\vdots & \vdots & \vdots & \ddots\end{array}\right).
\]
Within each connected sub-graph $G_{i}$, there is at least one path from each vertex $v$ to each other vertex $v'$: $\left(v,v_{0},v_{1},...,v'\right)$, each edge $\left(v_{n},v_{n+1}\right)$ in this path being represented by one row in the corresponding matrix $B_{i}$. Since a vector $\mathbf{s}$ to be in the kernel of $B_{i}$ implies $\left|s_{v_{n}}-s_{v_{n+1}}\right|_{q}=0$ for each edge $\left(v_{n},v_{n+1}\right)$, we deduce immediately that $\left|s_{v}-s_{v'}\right|_{q}=0$ for any two vertices $v$ and $v'$ in $V_{i}$. Hence if $\mathbf{s}$ is in the kernel of $B_{i}$, all spins in $\left\{ s_{v_{n}}\right\} _{v_{n}\in V_{i}}$ take the same value. So there are $q$ different vectors in the kernel of each matrix $B_{i}$, of which there are $\kappa$.
\end{proof}

We are now ready to define an non-normalized stabilizer state encoding $G^{\sigma}$. We obtain it by first interpreting each vector  $c=\left(c_{1},c_{2},...,c_{N}\right)\in C_{G}\left(q\right)$ as a product state of a multipartite quantum spin system with spin-dimensionality $q$ according to the formula $\left|c\right\rangle  :=\left|c_{1}\right\rangle \otimes\left|c_{2}\right\rangle  \otimes...\otimes\left|c_{N}\right\rangle $ and by a subsequent summation of all these states~\cite{footnote}
\begin{equation}
\left|\psi_{G}\right\rangle
:=q^{\kappa}\sum_{c\in C_{G}\left(q\right)}\left|c\right\rangle =\sum_{\mathbf{s}\in\mathbb{Z}_{q}^{N}}\bigl|\left(B^{\sigma}\right)^T\mathbf{s}\bigr\rangle,
\label{eq:def-of-psiG}
\end{equation}
where the second equality and the factor $q^{\kappa}$ follow immediately from lemma (\ref{lem:prefactor}). For an illustrative example see Fig.~\ref{fig:nondecorated}.
\begin{figure*}
\includegraphics[width=1.2\columnwidth]{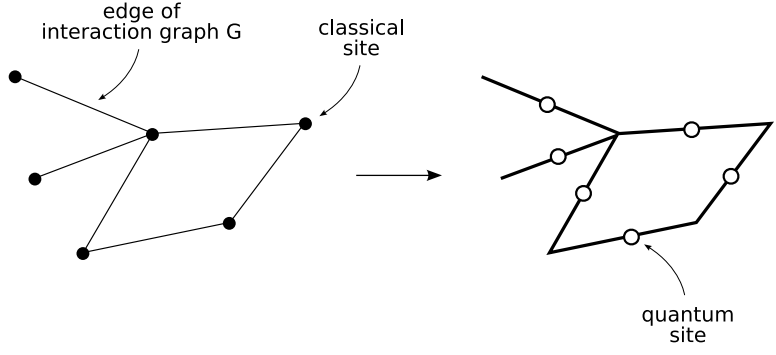}

\caption{\emph{The basic construction principle.} This figure shows an example of an encoding of a classical interaction pattern into a stabilizer state. Thin graph: the classical interaction graph G; thick graph: the derived graph relating quantum sites in a stabilizer state. The classical spin sites correspond to vertices in a graph G. The interacting pairs of sites are mapped to a quantum site, one for each edge (``edge qudits''). The quantum sites form, by construction, a stabilizer state.}

\label{fig:nondecorated}
\end{figure*}

\begin{lem}
\label{lem:stabilizer-state}The state $\left|\psi_{G}\right\rangle $ is a stabilizer state. Its stabilizer consists of the $q^{N}$ operators
\begin{equation}
X\left(v\right)Z\left(u\right)
:=\bigotimes_{e\in E}X^{v_{e}}Z^{u_{e}},\label{eq:XvZu}
\end{equation}
where $v\in C_{G}\left(q\right)$ and $u\in C_{G}\left(q\right)^{\perp}.$
\end{lem}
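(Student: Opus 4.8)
The plan is to exploit the fact that, by Eq.~\eqref{eq:def-of-psiG}, the state $\ket{\psi_{G}}$ is (up to the normalization $q^{\kappa}$) the uniform superposition $\sum_{c\in C_{G}(q)}\ket{c}$ over the linear code $C_{G}(q)\subseteq\mathbb{Z}_{q}^{N}$. I would organize the argument around three claims: that each proposed operator fixes $\ket{\psi_{G}}$, that the proposed set is an abelian group of the correct size, and that such a group determines a unique state, which must therefore be $\ket{\psi_{G}}$. First I would check directly that $X(v)Z(u)\ket{\psi_{G}}=\ket{\psi_{G}}$ for every $v\in C_{G}(q)$ and $u\in C_{G}(q)^{\perp}$. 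Acting with $Z(u)$ on a basis vector $\ket{c}$ produces the phase $e^{2\pi i(u\cdot c)/q}$; since $c\in C_{G}(q)$ and $u\in C_{G}(q)^{\perp}$ we have $u\cdot c=0$ modulo $q$, so $Z(u)$ acts as the identity on every term of the superposition. Acting with $X(v)$ then sends $\ket{c}\mapsto\ket{c+v}$, and because $v\in C_{G}(q)$ the translation $c\mapsto c+v$ is a bijection of the subspace $C_{G}(q)$ onto itself, so reindexing the sum leaves $\ket{\psi_{G}}$ invariant.

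Next I would establish the group structure. Using the single-site relation already invoked in the proof of Lemma~\ref{lem:commutation}, a site-by-site computation gives $X(v)Z(u)\,X(v')Z(u')=e^{2\pi i(u\cdot v')/q}X(v+v')Z(u+u')$; since $u\in C_{G}(q)^{\perp}$ and $v'\in C_{G}(q)$ the phase is trivial, so the product is again of the required form (with $v+v'\in C_{G}(q)$ and $u+u'\in C_{G}(q)^{\perp}$) and the set is closed under multiplication and contains $\one=X(0)Z(0)$. That the group is \emph{abelian} follows from Lemma~\ref{lem:commutation}: the commutator phase for $X(v)Z(u)$ and $X(v')Z(u')$ is governed by $v'\cdot u-v\cdot u'$, and both terms vanish modulo $q$ by the orthogonality of $C_{G}(q)$ and $C_{G}(q)^{\perp}$. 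Since the assignment $(v,u)\mapsto X(v)Z(u)$ is injective, no nontrivial element equals a multiple of the identity.

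Finally I would count. Lemma~\ref{lem:prefactor} gives $|\ker(B^{\sigma})^{T}|=q^{\kappa}$, so by the first isomorphism theorem $|C_{G}(q)|=q^{M}/q^{\kappa}=q^{M-\kappa}$, and by the duality $|C_{G}(q)|\cdot|C_{G}(q)^{\perp}|=q^{N}$ for subgroups of $\mathbb{Z}_{q}^{N}$ one gets $|C_{G}(q)^{\perp}|=q^{N-M+\kappa}$. Hence the number of operators $X(v)Z(u)$ is $q^{M-\kappa}\cdot q^{N-M+\kappa}=q^{N}$. An abelian subgroup of the $q$-dimensional Pauli group on $N$ qudits, of order $q^{N}$ and not containing a nontrivial multiple of $\one$, fixes a unique state (the $q$-dimensional analogue of the stabilizer-uniqueness statement recalled above); since $\ket{\psi_{G}}$ is fixed by all $q^{N}$ of these operators, it is precisely that state, which proves simultaneously that $\ket{\psi_{G}}$ is a stabilizer state and that the displayed operators are its stabilizer.

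The main obstacle I anticipate is the bookkeeping that makes the counting exact: one must ensure that the rank--nullity and duality relations $|C_{G}(q)|=q^{M-\kappa}$ and $|C_{G}(q)|\,|C_{G}(q)^{\perp}|=q^{N}$ hold over $\mathbb{Z}_{q}$ for arbitrary (not necessarily prime) $q$, where these objects are modules rather than genuine vector spaces. Invoking the first isomorphism theorem for finite abelian groups together with the perfectness of the standard pairing on $\mathbb{Z}_{q}^{N}$ handles this cleanly, but it is the step most easily taken for granted. A secondary point worth making explicit is that the cancellation $u\cdot v'=0$ is exactly what upgrades the projective (modulo-phase) homomorphism into an honest operator group with a one-dimensional joint $+1$ eigenspace.
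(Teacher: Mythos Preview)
Your proof is correct and follows essentially the same route as the paper: verify directly that each $X(v)Z(u)$ fixes $\ket{\psi_G}$, check the abelian group structure via the phase relation from Lemma~\ref{lem:commutation}, and then count to obtain $q^N$ operators. The only substantive difference is in the counting step: the paper defers the identity $|C_G(q)|\cdot|C_G(q)^{\perp}|=q^N$ to Appendix~\ref{sec:NumElProof}, where it is proved via an explicit module homomorphism, whereas you invoke the first isomorphism theorem together with the perfectness of the standard pairing on $\mathbb{Z}_q^N$; your caveat about composite $q$ is well placed, as this is precisely the point the paper's appendix is designed to address.
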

\begin{proof} From Lemma (\ref{lem:commutation}) we derive immediately that, by using the given construction rule for the operators, we obtain a commuting set. Considering the equation 
\[
X^{\xi_{i}}Z^{\zeta_{i}}X^{\xi'_{i}}Z^{\zeta'_{i}}
=X^{\xi_{i}+\xi_{i}'}Z^{\zeta_{i}+\zeta_{i}'}e^{2\pi i\xi'_{i}\zeta_{i}/q}
\]
and hence 
\[
X\left(v\right)Z\left(u\right)X\left(v'\right)Z\left(u'\right)
=X\left(v+v'\right)Z\left(u+u'\right)e^{2\pi iu \cdot v' /q},
\]
where $u \cdot v'=\sum_i u_i v'_i =0$ for each admissible choice of these vectors, we also see that these operators form a group. Furthermore, these operators actually stabilize the (nontrivial) state $\left|\psi_{G}\right\rangle $, since for all $v\in C_{G}\left(q\right)$ and for all $u\in C_{G}\left(q\right)^{\perp}$
\begin{align*}
X\left(v\right)Z\left(u\right)\left|\psi_{G}\right\rangle  & 
=X\left(v\right)Z\left(u\right)q^{\kappa}\sum_{c\in C_{G}\left(q\right)}\left|c\right\rangle \\
&=q^{\kappa}\sum_{c\in C_{G}\left(q\right)}X\left(v\right)Z\left(u\right)\left|c\right\rangle \\
&=q^{\kappa}\sum_{c\in C_{G}\left(q\right)}e^{2\pi i u \cdot c/q}\left|c+v\right\rangle \\
&=q^{\kappa}\sum_{c'\in C_{G}\left(q\right)}\left|c'\right\rangle =\left|\psi_{G}\right\rangle .
\end{align*} From this, we can moreover deduce that $-\one$ is not element of this set of operators.

For this set to be a stabilizer of a single state of our Hilbert space, the number of elements in this set has to be $q^{N}$. This part of the proof is given in Appendix \ref{sec:NumElProof}.
\end{proof}

\subsubsection{Thermal quantities}

Now we are able to formulate the central theorem of this section.

\begin{thm}
The partition function $Z_{G}\left(q,\left\{ h_{e}\right\} \right)$ of a classical spin system defined on the graph $G=\left(V,E\right)$ by the Hamiltonian function \textup{$H\left(\left\{ s_{i}\right\} \right)=\sum_{e\in E}h_{e}\bigl(\bigl|s_{v_{e}^{+}}-s_{v_{e}^{-}}\bigr|_{q}\bigr)$} can be written as the overlap of a stabilizer state and a product state
\[
Z_{G}\left(q,\left\{ h_{e}\right\} \right)=
(\bigotimes_{e\in E}\bra{\alpha_{e}})\ket{\psi_{G}}
\]
of a quantum mechanical spin-system, where
\begin{equation}
\ket{\alpha_{e}}
=\sum_{j=0}^{q-1}e^{-\beta h_{e}\left(j\right)}\ket{j}.
\label{eq:def-of-alpha}
\end{equation}

\end{thm}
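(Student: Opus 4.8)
The plan is to compute the overlap directly by expanding $|\psi_{G}\rangle$ in the computational basis and exploiting the product structure of $\bigotimes_{e\in E}|\alpha_{e}\rangle$. The cleanest starting point is the second expression for the stabilizer state in Eq.~\eqref{eq:def-of-psiG}, namely $|\psi_{G}\rangle=\sum_{\mathbf{s}}\bigl|(B^{\sigma})^{T}\mathbf{s}\bigr\rangle$, where the sum runs over all spin configurations $\mathbf{s}=(s_{v})_{v\in V}\in\mathbb{Z}_{q}^{M}$. The virtue of this form over $q^{\kappa}\sum_{c\in C_{G}(q)}|c\rangle$ is that it already contains exactly one basis ket per classical configuration, which is precisely what the partition sum ranges over; the degeneracy absorbed into the factor $q^{\kappa}$ (Lemma~\ref{lem:prefactor}) is then automatically accounted for and no spurious prefactor survives.

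First I would take the inner product term by term,
\[
\Bigl(\bigotimes_{e\in E}\bra{\alpha_{e}}\Bigr)\ket{\psi_{G}}
=\sum_{\mathbf{s}}\Bigl(\bigotimes_{e\in E}\bra{\alpha_{e}}\Bigr)\bigl|(B^{\sigma})^{T}\mathbf{s}\bigr\rangle.
\]
Since both the bra and each ket $\bigl|(B^{\sigma})^{T}\mathbf{s}\bigr\rangle=\bigotimes_{e}\bigl|((B^{\sigma})^{T}\mathbf{s})_{e}\bigr\rangle$ are product states across the $|E|$ edge qudits, the overlap factorizes into a product over edges of single-qudit overlaps $\langle\alpha_{e}|((B^{\sigma})^{T}\mathbf{s})_{e}\rangle$.

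The key single-site computation is to evaluate $\langle\alpha_{e}|j\rangle$ on a basis ket $|j\rangle$. From the definition \eqref{eq:def-of-alpha}, and using that the coefficients $e^{-\beta h_{e}(j)}$ are real, one gets $\langle\alpha_{e}|j\rangle=e^{-\beta h_{e}(j)}$. The final ingredient is the linear-algebra identity recorded just before Lemma~\ref{lem:prefactor}, namely that the $e$-th component of $(B^{\sigma})^{T}\mathbf{s}$ equals $\bigl|s_{v_{e}^{+}}-s_{v_{e}^{-}}\bigr|_{q}$; substituting $j=((B^{\sigma})^{T}\mathbf{s})_{e}$ turns each edge factor into $e^{-\beta h_{e}(|s_{v_{e}^{+}}-s_{v_{e}^{-}}|_{q})}$. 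Multiplying over all edges converts the product into the exponential of a sum, giving $\exp\bigl(-\beta\sum_{e}h_{e}(|s_{v_{e}^{+}}-s_{v_{e}^{-}}|_{q})\bigr)=e^{-\beta H(\mathbf{s})}$, and summing over $\mathbf{s}$ reproduces $Z_{G}=\sum_{\mathbf{s}}e^{-\beta H(\mathbf{s})}$ exactly.

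I expect the argument to be bookkeeping rather than hard analysis; the only points deserving care are (i) checking that the index $j=((B^{\sigma})^{T}\mathbf{s})_{e}$ is read modulo $q$, so that it matches both the domain $\{0,\dots,q-1\}$ of $h_{e}$ and the labels of the edge-qudit basis, and (ii) confirming that the orientation $\sigma$ fixed earlier makes the edge differences range consistently over $\mathbb{Z}_{q}$. The potential worry about the kernel degeneracy of $(B^{\sigma})^{T}$ does not arise here: because the $\sum_{\mathbf{s}}$ form of $|\psi_{G}\rangle$ places exactly one ket per configuration, the correspondence with the partition sum is genuinely term-by-term, and this is the step where I would be most careful to avoid double counting.
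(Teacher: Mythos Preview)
Your proposal is correct and follows essentially the same route as the paper: both start from the $\sum_{\mathbf{s}}\bigl|(B^{\sigma})^{T}\mathbf{s}\bigr\rangle$ form of $|\psi_{G}\rangle$, factorize the overlap over edges using the definition of $|\alpha_{e}\rangle$, and identify each edge factor as $e^{-\beta h_{e}(|s_{v_{e}^{+}}-s_{v_{e}^{-}}|_{q})}$ to recover $\sum_{\mathbf{s}}e^{-\beta H(\mathbf{s})}$. Your remarks on why the $q^{\kappa}$ degeneracy is already absorbed and on reading indices modulo $q$ are more explicit than the paper's presentation but address exactly the right subtleties.
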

\begin{proof}
The state $\left|\psi_{G}\right\rangle $ is a stabilizer state according to lemma (\ref{lem:stabilizer-state}), and we compute, with an arbitrarily chosen orientation $\sigma$ of the graph $G$,
\begin{eqnarray*}
(\bigotimes_{e\in E} \bra{\alpha_{e}}) \ket{\psi_{G}}
& \overset{(\ref{eq:def-of-psiG})}{=} & \sum_{\mathbf{s}\in\mathbb{Z}_{q}^{N}}
(\bigotimes_{e\in E} \bra{\alpha_{e}})\ket{\left(B^{\sigma}\right)^T
\mathbf{s}} \\
& \overset{(\ref{eq:def-of-alpha})}{=} & \sum_{\mathbf{s}\in\mathbb{Z}_{q}^{N}}
\prod_{e\in E} e^{-\beta h_{e}(\bigl|s_{v_{e}^{+}}-s_{v_{e}^{-}}\bigr|_{q})}\\
& = &\sum_{\mathbf{s}\in\mathbb{Z}_{q}^{N}}e^{-\beta H\left(\left\{ s_{i}\right\} \right)}
\end{eqnarray*}
which concludes the proof.
\end{proof}

Let us give a brief interpretation of the method used to encode the partition function. We observe that to calculate partition functions of systems with Hamiltonian functions of the form
\[
H\left(\left\{ s_{i}\right\} \right)
=\sum_{e\in E}h_{e}\bigl(\bigl|s_{v_{e}^{+}}-s_{v_{e}^{-}}\bigr|_{q}\bigr)
\]
(zero external field) it is already sufficient to know the relative state of spins whose corresponding vertices are connected by an edge. Accordingly, we map each vector $\left(s_{v}\right)_{v\in V}$ of spin configurations to the corresponding one $(B^{\sigma})^T s = \left(\bigl|s_{v_{e}^{+}}-s_{v_{e}^{-}}\bigr|_{q}\right)_{e\in E}$ of differences along edges using the incidence matrix $B^{\sigma}$. These vectors are automatically consistent with spin-configurations, and moreover, there can be no more of them than we have already given. 

As shown, an interaction pattern is encoded into a graph and this graph is encoded into a stabilizer state. Furthermore, the corresponding interaction strengths (as well as a temperature) are encoded into a product state. This way we encode all the information about the partition function of a thermal state into two states with comparatively simple structure.

\begin{example}
\label{exa:StabAndProdPsiGee}\emph{Here  we consider examples of states $\ket{\alpha}$, which encode the interaction strengths of the Hamiltonian function.} For the models we consider, these are product states $\ket{\alpha}=\bigotimes_{e\in E}\left|\alpha_{e}\right\rangle $, which are derived immediately from the respective Hamiltonian functions given in section (\ref{sec:Background-on-spin}).
\end{example}
\begin{enumerate}
\item For the $q$-state Potts model the state $\ket{\alpha}$ is derived from the Hamiltonian function~\eqref{eq:HamPottsClock}, with the function $J$ given by $J_{\mbox{\tiny{Potts}}}(\Theta_{ij}):=-\varepsilon\delta(\Theta_{ij})$. This Hamiltonian function is characterized by two-body interactions, whose strengths are encoded into states $\left|\alpha_{e}\right\rangle $ which take the form 
\[
\left|\alpha_{e}\right\rangle =\left|\alpha\right\rangle _{\mbox{\tiny{Potts}}}
=e^{\beta\varepsilon}\left|0\right\rangle +\sum_{j=1}^{q-1}\left|j\right\rangle .
\]

\item For the $q$-state clock model the state $\ket{\alpha}$ is derived from the Hamiltonian function~\eqref{eq:HamPottsClock}, with the function $J$ given by $J_{\mbox{\tiny{clock}}}(\Theta_{ij}):=-\varepsilon\cos(\Theta_{ij})$. The individual two-body interaction strengths are thus encoded into states
\[
\left|\alpha_{e}\right\rangle =\left|\alpha\right\rangle _{\mbox{\tiny{clock}}}
=\sum_{j=0}^{q-1}e^{\beta\varepsilon\cos\left(2\pi j/q\right)}\left|j\right\rangle .
\]

\item As a special case, for $q=2$ we obtain, in an analogous fashion, the states $\ket{\alpha}$ and $\ket{\alpha_{e}}$ for the Ising model
\[
\left|\alpha_{e}\right\rangle =\left|\alpha\right\rangle _{\mbox{\tiny{Ising}}}
=\left|0\right\rangle +e^{-\beta J}\left|1\right\rangle .
\]

\end{enumerate}
\emph{In the following part we look at examples of states $\ket{\psi_{G}}$, which encode the interaction patterns of associated Hamiltonian functions}, thereby investigating special cases of graphs and their corresponding stabilizer states.

\begin{enumerate}
\item \emph{Tree Graphs.} Here we consider models whose interaction patterns are characterized by tree graphs, i.e., graphs containing no loops. The statement that ``$n$ columns $\left\{ c_{i}\right\} _{i=1,...,n}$ of the incidence matrix $B^{\sigma}$ of a graph $G$ are linearly dependent'' means that there is a non-trivial linear combination such that $\sum_{i=1}^{n}\lambda_{i}c_{i}=0$. Hence there is at least one vector that equals the negative sum of the remaining ones, say, $c_{1}=-\lambda_{1}^{-1}\sum_{i=2}^{n}\lambda_{i}c_{i}$. Since the columns describe the start and end points of the edges, this means that the graph contains a loop. In turn, loop-less graphs (= tree graphs) have an incidence matrix with $N=\left|E\right|$ linearly independent columns and hence $N$ linearly independent rows. This means that the rows span the entire space  $\mathbb{Z}_{q}^{N}$ $\left(=C_{G}\left(q\right)\right)$ and hence 
\begin{equation} 
\left|\psi_{G}\right\rangle 
=\sum_{v\in\mathbb{Z}_{q}^{N}}\left|v\right\rangle
\propto\left(\sum_{j=0}^{q-1}\left|j\right\rangle \right)^{\otimes N}.
\end{equation}
In conclusion, we observe that the states derived from tree-graphs are product states.
\item \emph{A cycle.} Here we consider models whose interaction patters are cycles, i.e.\ a closed loop. If the graph is a closed chain,  the incidence matrix looks (besides reordering of the edges) like this
\[
B^{\sigma}=\left(\begin{array}{ccccc}
-1 & 1 & 0 &  & 0\\
0 & -1 & 1 & \cdots & 0\\
0 & 0 & -1 &  & 0\\
& \vdots &  & \ddots & 1\\
1 & 0 & 0 & \cdots & -1\end{array}\right).
\]
We see that a vector $v$ that is perpendicular to all rows has the property $v_{i}=v_{j}$ for all $i$ and $j$, and hence  $C_{G}\left(q\right)^{\perp}=\mathrm{span}\left\{ \left(1,1,1,1,...,1\right)^T\right\} $. We hence choose
\[
\left\{ Z^{\otimes N},X^{\left(n\right)}\left(X^{-1}\right)^{\left(n+1\right)}|n=1,...,N-1\right\}
\]
as  generating set of the stabilizer. We can verify that the state  $\left|\psi_{G}\right\rangle =\sum_{j=0}^{N-1}\left(\left|j_{x} \right\rangle ^{\otimes N}\right)$, where $\left|j_{x}\right\rangle $ is an eigenstate of the $X$-operator, is an eigenstate of the generator of the stabilizer and hence the stabilizer itself. This state is invariant under reordering of the edges and hence the proof is independent of the choice of $B^{\sigma}$ that was chosen in the beginning. Thus, the states derived from graphs that are closed chains are (generalized) Greenberger-Horne-Zeilinger states (GHZ states.)  In particular, for $q=2$ one obtains the state  $|+\rangle^{\otimes N} + |-\rangle^{\otimes N}$ (where $|+\rangle$ and  $|-\rangle$ are the eigenstates of the Pauli matrix $\sigma_x$).
\item \emph{The Kitaev model.} The Kitaev model of topologically protected quantum states is defined as follows. On each edge of a toric lattice with checkerboard structure we place one qubit, the edge qubit. The toric code state (actually a subspace) is the common eigenstate of a set of operators that are constructed using the neighborhood relations of the toric lattice. More precisely, for each but one of the smallest possible loops $L_{i}$ (the \emph{plaquettes}) in the lattice, we define one operator \[ B_{i}:=\prod_{\left(a,b\right)\in L_{i}}Z^{\left(a,b\right)}.\]  We leave out one because it would not be independent from the others. Similarly, each vertex $a$ (there is no qubit in the vertices) has a neighborhood $N_{a}$ of adjacent edges, forming a star. On the qubits of each but one of these stars we define the operators 
\[
A_{a}:=\prod_{b\in N_{a}}X^{\left(a,b\right)}.
\]
One has to be left out because it is not independent of the others, as in the case of the plaquettes. All these operators mutually commute, because in each loop a vertex has either zero or two nearest neighbors. Hence, these operators generate a stabilizer, whose fixed point is the toric code state. We notice that this stabilizer consists of $2^{2N-2}$ independent operators defined on a $2N$-site quantum system and hence the stabilized object is not a single state but a subspace of dimension $4$. We remark that the connection between the 2D Ising model and planar (toric) code states was first proven and utilized in Ref.~\cite{BR06}.\\ In view of the huge variety of classical spin models and their interaction graphs, we want to point out that this state can be defined more abstractly and more closely related to the $B^{\sigma}$-matrix construction used in the other examples, as shown in the following: We assume $q=2$ and consider an arbitrary graph $G^{\sigma}=\left(V,E\right)$ with the essential property to contain $N-1=\left|V\right|-1$ independent loops $\left\{ L_{i}\right\} _{i=1}^{N-1}$. The loops now naturally define a specific neighborhood $N_{a}$ of each vertex $a$, namely the union of the sets of nearest neighbors of $a$ in each loop  $N_{a}=\bigcup_{L_{i}}\left\{ b\in V;\left(a,b\right)\in L_{i}\right\} $. With the loops and neighborhoods specified, we define as above the  operators
\[
A_{a}:=\prod_{b\in N_{a}}X^{\left(a,b\right)};\quad B_{i}
:=\prod_{\left(a,b\right)\in L_{i}}Z^{\left(a,b\right)}.
\]
All of these operators mutually commute, because in each loop a vertex has either zero or two nearest neighbors. There are $N-1$ independent operators $A_{a}$ and $N-1$ independent operators $B_{i}$, which can be seen as follows. Considering the operators $B_{i}$, the statement $\prod_{i\in I}B_{i}=\one$ with a set of loops $I$ implies that $I$ contains dependent loops, which is impossible for $\left|I\right|<N$. Similarly, considering the operators $A_{a}$, for each set of vertices $V'$ the identity $\prod_{a\in V'}A_{a}=\one$ means that the sets $N_{a}=\left\{ \left(a,b\right),\quad b\in N_{a}\right\} $ (when considered together) contain each edge twice. This is impossible if $\left|V'\right|<N$ by construction of the interaction graph. On the other hand $\prod_{a\in V}A_{a}=\one$ and $\prod_{i}B_{i}=\one$ by similar arguments. As a special instance we recover the example given above if $G$ is the periodic two-dimensional lattice, where the common fixed point of the operators $A_{a}$ and $B_{i}$ defines the toric code state~\cite{Ki03}, as introduced in the context of topological quantum computation.
\end{enumerate}

We will use yet another generalization of this construction procedure in the subsequent section, in order to access correlation functions and partition functions of systems with external magnetic fields.

\subsection{External fields and correlation functions}

The encoding scheme discussed in the last section is neither suited to evaluate correlation functions nor partition functions of systems with external fields, like those described by 
\[
H\left(\left\{ s_{i}\right\} \right)
=\sum_{e\in E}h_{e}\bigl(\bigl|s_{v_{e}^{+}}-s_{v_{e}^{-}}\bigr|_{q}\bigr)
+\sum_{v\in V}b_{v}\bigl(s_{v}\bigr).
\]
To overcome this limitation we (have to) use a different encoding scheme. Instead of the state $\left|\psi_{G}\right\rangle $ we will now use the state 
\[
\left|\varphi_{G}\right\rangle 
:=\sum_{\mathbf{s}\in\mathbb{Z}_{q}^{N}}\left|\mathbf{s}\right\rangle 
\bigl|\left(B^{\sigma}\right)^T\mathbf{s}\bigr\rangle
\]
to encode the interaction pattern, where $B^{\sigma}$ is again the incidence matrix of the interaction graph $G$.

\begin{lem}
\label{lem:stabilizer-state2}The state $\left|\varphi_{G}\right\rangle $ is a stabilizer state. Its stabilizer is generated by the $N$ operators
\begin{eqnarray*}
K_{a} & = & X^{\left(a\right)}\prod_{e: \exists b\in V s.th. (a,b)=e\in E}
\left(X^{\left(e\right)}\right)^{\sigma}\\
K_{e} & = & Z^{\left(e\right)}\left(Z^{\left(a\right)}\right)^{-\sigma}
\left(Z^{\left(b\right)}\right)^{\sigma},
\end{eqnarray*}
for every $a\in V$ and for every $e=\left(a,b\right)\in E$, where $\sigma$ is either $+1$ or $-1$, depending on the orientation of the edge ($\sigma:=B_{e,a}^{\sigma}=-B_{e,b}^{\sigma}$). In our notation, an upper index in brackets denotes the qudit acted on by the operator. 
\end{lem}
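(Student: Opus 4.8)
The plan is to verify directly that $\left|\varphi_{G}\right\rangle$ is, up to normalization, the unique common $+1$-eigenstate of the listed operators, and then to invoke the counting criterion recalled in Sec.~\ref{sec:StabilizerStates}: an abelian subgroup of $\mathcal{G}_{M+N}^{q}$ of order $q^{M+N}$ that does not contain a nontrivial multiple of $\one$ stabilizes a unique state. Throughout I write a basis vector of the $(M+N)$-qudit space as $\left|\mathbf{s}\right\rangle\bigl|(B^{\sigma})^T\mathbf{s}\bigr\rangle$ with $\mathbf{s}=(s_{v})_{v\in V}$ a vertex configuration, and I use that the component of $(B^{\sigma})^T\mathbf{s}$ on the edge $e=(a,b)$ equals $c_{e}=\sum_{v}B^{\sigma}_{v,e}s_{v}=\sigma(s_{a}-s_{b})$, where $\sigma=B^{\sigma}_{a,e}=-B^{\sigma}_{b,e}$.

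First I would show that each generator fixes $\left|\varphi_{G}\right\rangle$. For $K_{a}$, the factor $X^{(a)}$ increments $s_{a}$ by one, while each $(X^{(e)})^{\sigma}$ shifts the edge component $c_{e}$ by $B^{\sigma}_{a,e}$; since raising $s_{a}$ by one alters $(B^{\sigma})^T\mathbf{s}$ precisely in the components of edges incident to $a$, and by exactly $B^{\sigma}_{a,e}$, the operator $K_{a}$ sends the term $\left|\mathbf{s}\right\rangle\bigl|(B^{\sigma})^T\mathbf{s}\bigr\rangle$ to $\left|\mathbf{s}'\right\rangle\bigl|(B^{\sigma})^T\mathbf{s}'\bigr\rangle$, where $\mathbf{s}'$ is $\mathbf{s}$ with $s_{a}\mapsto s_{a}+1$. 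As $\mathbf{s}\mapsto\mathbf{s}'$ is a bijection of the summation range, $K_{a}\left|\varphi_{G}\right\rangle=\left|\varphi_{G}\right\rangle$. For the diagonal operator $K_{e}$, acting on $\left|\mathbf{s}\right\rangle\bigl|(B^{\sigma})^T\mathbf{s}\bigr\rangle$ produces the phase $\exp\!\left(\tfrac{2\pi i}{q}(c_{e}-\sigma s_{a}+\sigma s_{b})\right)$, which is trivial because $c_{e}=\sigma(s_{a}-s_{b})$; hence $K_{e}$ fixes every term and thus $\left|\varphi_{G}\right\rangle$.

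Next I would establish that the operators generate an abelian group. Each $K_{a}$ is a pure-$X$ operator and each $K_{e}$ a pure-$Z$ operator, so by Lemma~\ref{lem:commutation} any pair $K_{a},K_{a'}$ and any pair $K_{e},K_{e'}$ commute automatically, and the symplectic product of $K_{a}$ with $K_{e}$ reduces to $-\,\xi_{K_{a}}\cdot\zeta_{K_{e}}$. When $a$ is not an endpoint of $e$ their supports are disjoint and this vanishes; when $a$ is an endpoint of $e$, the product picks up exactly two terms, one from the shared vertex qudit $a$ and one from the shared edge qudit $e$, equal to $+\sigma$ and $-\sigma$, which cancel. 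I expect this sign bookkeeping --- which is exactly what the orientation-dependent exponents, and in particular the asymmetric placement of $-\sigma$ and $\sigma$ in $K_{e}$, are engineered to make work --- to be the main obstacle of the proof.

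Finally I would count. Restricted to the vertex registers, the $X$-parts of the $K_{a}$ are the standard basis of $\mathbb{F}_{q}^{M}$, and restricted to the edge registers, the $Z$-parts of the $K_{e}$ are the standard basis of $\mathbb{F}_{q}^{N}$; since the $X$- and $Z$-sectors decouple under the homomorphism $\mathcal{G}_{M+N}^{q}/_{\sim}\to\mathbb{F}_{q}^{2(M+N)}$, the $M+N$ generators map to independent vectors, whose images generate a subgroup of order $q^{M+N}$. Because every element of the group fixes the nonzero vector $\left|\varphi_{G}\right\rangle$ with eigenvalue $1$, the only scalar multiple of $\one$ lying in the group is $\one$ itself; hence the homomorphism is injective on the group, the group has order exactly $q^{M+N}$, and it excludes every nontrivial multiple of $\one$. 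The criterion from Sec.~\ref{sec:StabilizerStates} then yields that $\left|\varphi_{G}\right\rangle$ is the unique stabilized state, so it is a stabilizer state. The detailed argument for the group order parallels that of Lemma~\ref{lem:stabilizer-state} and could be deferred analogously.
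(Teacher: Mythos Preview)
Your proof is correct and follows essentially the same route as the paper: verify that each $K_{a}$ and $K_{e}$ fixes $\left|\varphi_{G}\right\rangle$ term by term, and then establish independence by observing that under the $\mathbb{F}_{q}^{2(M+N)}$ encoding the generators contain identity blocks (the paper writes this as the block matrix with $\one_{|V|}$ and $\one_{|E|}$ on the diagonal). The only difference is that you explicitly check commutativity via Lemma~\ref{lem:commutation} and explicitly argue that no nontrivial scalar lies in the group, whereas the paper leaves both implicit (both follow automatically once one knows the Pauli-type operators share a nonzero $+1$-eigenvector); this extra care is fine but not a genuinely different approach.
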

\emph{Proof.} We have to show that these operators have $\left|\varphi_{G}\right\rangle $ as a fixed point and, since the number of operators defined this way is $N=\left|V\right|+\left|E\right|$ and hence equals the number of qudits in the quantum system in state $\left|\varphi_{G}\right\rangle $, we have to show that they are independent. Under this condition they generate the stabilizer of a single state. To see the stabilizing property of the operators $K_a$, we compute
\begin{multline*}
X^{\left(a\right)}\prod_{e: \exists b\in V s.th. (a,b)=e\in E}
\left(X^{\left(e\right)}\right)^{\sigma}\left|\mathbf{s}\right\rangle \bigl|
\left(B^{\sigma}\right)^T\mathbf{s}\bigr\rangle\\
=\left|\mathbf{s}'\right\rangle \bigl|\left(B^{\sigma}\right)^T\mathbf{s}'\bigr\rangle
\end{multline*}
with $\mathbf{s}'=\left(s_{0},...,s_{a}+1\mbox{ mod }q,...,s_{\left|V\right|}\right)$, because $X^{\left(a\right)}\left|\mathbf{s}\right\rangle =\left|\mathbf{s}'\right\rangle $ and 
\begin{multline*}
\prod_{b:\left(a,b\right)
=e\in E}\left(X^{\left(e\right)}\right)^{\sigma}\bigl|
\left(B^{\sigma}\right)^T\mathbf{s}\bigr\rangle\\
=\left|\left(c_{a}+\left(B^{\sigma}\right)^T\mathbf{s}\right)
\mbox{ mod }q\right\rangle =\left|\left(B^{\sigma}\right)^T\mathbf{s}'\right\rangle ,
\end{multline*}
where $c_{a}$ is the $a$th column of $B^{\sigma}$. Likewise,   $K_{e}\left|\varphi_{G}\right\rangle =\left|\varphi_{G}\right\rangle $, because 
\begin{multline*}
\left(Z^{\left(a\right)}\right)^{-\sigma}
\left(Z^{\left(b\right)}\right)^{\sigma}\left|\mathbf{s}\right\rangle \\
=\left(Z^{\left(a\right)}\right)^{-B_{e,a}^{\sigma}}
\left(Z^{\left(b\right)}\right)^{-B_{e,b}^{\sigma}}
\left|\mathbf{s}\right\rangle \\
=\exp\left\{ -2\pi i\left(B_{e,a}^{\sigma}s_{a}
+B_{e,b}^{\sigma}s_{b}\right)/q\right\} \bigl|
\left(B^{\sigma}\right)^T\mathbf{s}\bigr\rangle,
\end{multline*}
and
\begin{multline*}
Z^{\left(e\right)}\bigl|\left(B^{\sigma}\right)^T\mathbf{s}\bigr\rangle\\
=\exp\left\{ 2\pi i\left(B_{e,a}^{\sigma}s_{a}
+B_{e,b}^{\sigma}s_{b}\right)/q\right\} \bigl|\left(B^{\sigma}\right)^T\mathbf{s}\bigr\rangle,
\end{multline*}
so the phases cancel.

The set of $2n$ operators that were just defined are mapped, by the isomorphism $F_{q}^{2n}$, to a set of $2n$ vectors that can be arranged in the following matrix
\[
\left(\begin{array}{cc}
\one_{\left|V\right|} & 0\\
\left(B^{\sigma}\right)^T & 0\\
0 & -B^{\sigma}\\
0 & \one_{\left|E\right|}\end{array}\right).
\]
This matrix has full rank, considering the $\one s$. Hence the operators generate the full stabilizer. $\hfill\square$

In the case $q=2$ we recover a true graph state by an application of Hadamard transformation on the edge qubits. For an illustrative example, see Fig.~\ref{fig:decorated}. As before, the classical spin sites correspond to vertices in the interaction graph G of the classical model. The interacting pairs of sites are then mapped to a quantum site, one for each edge (the \emph{edge qudits}). What is different from the original scheme is that the individual classical spin sites --acted on by local fields -- are mapped to quantum sites as well, one for each vertex (the \emph{vertex qudits}). The resulting graph is called a \emph{decorated graph}. The resulting many body quantum states are again, by construction, stabilizer states.

\begin{figure*}
\includegraphics[width=1.2\columnwidth]{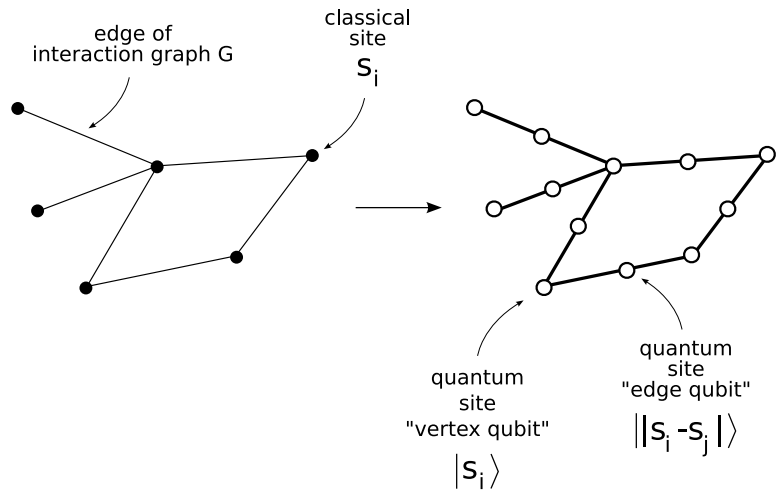}

\caption{\emph{The extended construction principle.} This figure shows an example of the extended encoding of a classical interaction pattern into a stabilizer state. Thin graph on the left: the classical interaction graph G; thick graph on the right: the derived graph relating quantum sites in a stabilizer state. The classical spin sites correspond to vertices in a graph G. The interacting pairs of sites are mapped to a quantum site, one for each edge (\emph{edge qudits}). The individual classical spin sites on which the local fields act are then, too, mapped to quantum sites, one for each vertex (\emph{vertex qudits}) -- this is different from the original scheme. The resulting graph is called a \emph{decorated graph}. The quantum sites are, by construction, in a stabilizer state.}

\label{fig:decorated}
\end{figure*}

\subsubsection{Thermal quantities}

We now come to the central result of this section. By means of the state $\left|\varphi_{G}\right\rangle $ and appropriately chosen product states, we can compute the partition function of systems with local external fields as well as n-point functions.

\begin{thm}
\label{thm:partfun_ext}The partition function  $Z_{G}\left(\left\{ h_{e},b_{v}\right\},\beta\right)$ of a classical  spin system at inverse temperature $\beta$, defined on the graph $G=\left(V,E\right)$ by the Hamiltonian function  \textup{$H\left(\left\{ s_{i}\right\} \right) =\sum_{e\in E}h_{e}\bigl(\bigl|s_{v_{e}^{+}}-s_{v_{e}^{-}} \bigr|_{q}\bigr)+\sum_{v\in V}b_{v}\bigl(s_{v}\bigr)$}, can be written as the overlap of a stabilizer state and a product state
\begin{eqnarray*}
Z_{G}\left(\left\{ h_{e},b_{v}\right\} ,\beta\right) 
& = & (\bigotimes_{v\in V} \bra{\alpha'_{v}}\bigotimes_{e\in E}
\bra{\alpha_{e}})\ket{\varphi_{G}},
\end{eqnarray*}
where 
\begin{eqnarray*}
\ket{\alpha_{e}}
& = & \sum_{j=0}^{q-1}e^{-\beta h_{e}\left(j\right)}\ket{j}\\
\ket{\alpha'_{v}}
& = & \sum_{j=0}^{q-1}e^{-\beta b_{v}\left(j\right)}\ket{j}.
\end{eqnarray*}

\end{thm}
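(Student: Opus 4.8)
The plan is to proceed exactly as in the proof of the field-free theorem, now using the extended state $\ket{\varphi_G}$, which is a stabilizer state by Lemma~\ref{lem:stabilizer-state2}. First I would insert the defining expansion $\ket{\varphi_G}=\sum_{\mathbf{s}}\ket{\mathbf{s}}\bigl|(B^\sigma)^T\mathbf{s}\bigr\rangle$, the sum running over all spin configurations $\mathbf{s}=(s_v)_{v\in V}\in\mathbb{Z}_q^{|V|}$, into the overlap. The crucial structural observation is that the bra is a product over the vertex qudits and the edge qudits, while each summand of $\ket{\varphi_G}$ is likewise a product state $\ket{\mathbf{s}}\otimes\bigl|(B^\sigma)^T\mathbf{s}\bigr\rangle$, with $\ket{\mathbf{s}}$ supported on the $|V|$ vertex qudits and $\bigl|(B^\sigma)^T\mathbf{s}\bigr\rangle$ on the $|E|$ edge qudits. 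Hence the overlap factorizes, for each fixed $\mathbf{s}$, into a product of single-qudit scalar products.

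Next I would evaluate the two families of single-qudit overlaps. For the vertex qudits, $\braket{\alpha'_v}{s_v}=e^{-\beta b_v(s_v)}$ directly from the definition of $\ket{\alpha'_v}$; for the edge qudits, the $e$-th component of $(B^\sigma)^T\mathbf{s}$ equals $\bigl|s_{v_e^+}-s_{v_e^-}\bigr|_q$ by the incidence-matrix identity recorded earlier, so $\braket{\alpha_e}{((B^\sigma)^T\mathbf{s})_e}=e^{-\beta h_e(|s_{v_e^+}-s_{v_e^-}|_q)}$. Multiplying these together reconstitutes the full Boltzmann weight
\[
\prod_{v\in V}e^{-\beta b_v(s_v)}\prod_{e\in E}e^{-\beta h_e(|s_{v_e^+}-s_{v_e^-}|_q)}=e^{-\beta H(\{s_i\})},
\]
and summing over all $\mathbf{s}$ yields $\sum_{\mathbf{s}}e^{-\beta H(\{s_i\})}=Z_G$, as claimed.

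The computation is essentially routine and parallels the field-free case, so I do not expect a genuinely hard step. The one conceptual ingredient that makes it work---and that distinguishes it from the first theorem---is that $\ket{\varphi_G}$ retains the entire spin configuration explicitly on the vertex qudits rather than only the edge differences; this is precisely what allows the local-field factors $e^{-\beta b_v(s_v)}$ (and, later, the insertions needed for correlation functions) to be read off by overlapping with the $\ket{\alpha'_v}$. The only points requiring care are the bookkeeping of the tensor factorization into vertex and edge contributions, and noting that since all coefficients of $\ket{\alpha_e}$ and $\ket{\alpha'_v}$ are real Boltzmann weights, no complex-conjugation subtlety arises in forming the bra.
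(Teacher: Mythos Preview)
Your proposal is correct and follows essentially the same route as the paper: invoke Lemma~\ref{lem:stabilizer-state2}, insert the expansion of $\ket{\varphi_G}$, factorize the overlap over vertex and edge qudits, evaluate each single-qudit bracket as a Boltzmann factor, and sum over configurations. The additional remarks on why the vertex register is needed and on the reality of the coefficients are sound but not required for the argument.
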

\begin{proof}
The state $\left|\varphi_{G}\right\rangle $ is a stabilizer state according to lemma (\ref{lem:stabilizer-state2}), and we compute, with an arbitrarily chosen orientation $\sigma$ of the graph $G$,
\begin{multline*}
(\bigotimes_{v\in V} \bra{\alpha'_{v}} \bigotimes_{e\in E}
\bra{\alpha_{e}}) \ket{\varphi_{G}}\\
=(\bigotimes_{v\in V}\bra{\alpha'_{v}}\bigotimes_{e\in E}
\bra{\alpha_{e}})
\sum_{\mathbf{s}\in\mathbb{Z}_{q}^{N}}\ket{\mathbf{s}}
\ket{\left(B^{\sigma}\right)^T\mathbf{s}}\\
=\sum_{\mathbf{s}\in\mathbb{Z}_{q}^{N}} \prod_{v \in V} e^{-\beta b_{v}\left(s_{v}\right)}
\prod_{e\in E}e^{-\beta h_{e}(\bigl|s_{v_{e}^{+}}-s_{v_{e}^{-}}\bigr|_{q})}\\
=\sum_{\mathbf{s}\in\mathbb{Z}_{q}^{N}} e^{-\beta H\left(\left\{ s_{i}\right\} \right)}.
\end{multline*}
\end{proof}

Likewise, we can write down a theorem for the n-point correlation functions.

\begin{thm}
The n-point correlation functions $\left\langle s_{i_{1}}, s_{i_{2}}, ..., s_{i_{n}} \right\rangle_{\beta}$ of a classical spin system at inverse temperature $\beta$, defined on the graph $G^{\sigma}=\left(V,E\right)$ by the Hamiltonian function  \textup{$H\left(\left\{ s_{i}\right\} \right) =\sum_{e\in E}h_{e}\bigl(\bigl|s_{v_{e}^{+}}-s_{v_{e}^{-}}\bigr|_{q}\bigr) +\sum_{v\in V}b_{v}\bigl(s_{v}\bigr)$}, can be written as an overlap of a stabilizer state and a product state (up to a factor of $Z$, which is the partition function of the classical spin system). More precisely,
\begin{eqnarray*}
\left\langle s_{i_{1}},s_{i_{2}},...,s_{i_{n}}\right\rangle_{\beta} & =Z^{-1} &
(\bigotimes_{v\in V}\bra{\alpha'_{v}\left(i_{1},...,i_{n}\right)}
\bigotimes_{e\in E}\bra{\alpha_{e}})\ket{|\varphi_{G}},
\end{eqnarray*}
where
\begin{eqnarray*}
\ket{\alpha_{e}}&
= & \sum_{j=0}^{q-1}e^{-\beta h_{e}\left(j\right)}\ket{j}\\
\ket{\alpha'_{v}\left(i_{1},...,i_{n}\right)}&
= &\sum_{j=0}^{q-1}\cos\left(2\pi j/q\right)^{m_{\nu}}
e^{-\beta b_{v}\left(j\right)}\ket{j},
\end{eqnarray*}
and $m_{\nu}$ is the number of occurrences of $\nu$ in the n-tuple $\left(i_{1},...,i_{n}\right)$.
\end{thm}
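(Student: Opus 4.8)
The plan is to mimic the proof of Theorem~\ref{thm:partfun_ext} almost verbatim, since $\ket{\varphi_{G}}$ is the same stabilizer state and the edge factors are unchanged; the only genuinely new ingredient is the insertion of the $\cos$-weights into the vertex product states. First I would expand the overlap by writing $\ket{\varphi_{G}}=\sum_{\mathbf{s}}\ket{\mathbf{s}}\bigl|(B^{\sigma})^{T}\mathbf{s}\bigr\rangle$, the sum running over spin configurations $\mathbf{s}=(s_{v})_{v\in V}$, and noting that the bra $\bigotimes_{v}\bra{\alpha'_{v}(i_{1},\dots,i_{n})}\bigotimes_{e}\bra{\alpha_{e}}$ factorizes across the vertex and edge qudits. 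Because all coefficients of $\ket{\alpha_{e}}$ and $\ket{\alpha'_{v}(i_{1},\dots,i_{n})}$ are real, the overlap with a computational basis vector just reads off the corresponding coefficient; hence, for each fixed $\mathbf{s}$, the vertex qudits contribute $\prod_{v\in V}\cos(2\pi s_{v}/q)^{m_{v}}e^{-\beta b_{v}(s_{v})}$ and the edge qudits contribute $\prod_{e\in E}e^{-\beta h_{e}(|s_{v_{e}^{+}}-s_{v_{e}^{-}}|_{q})}$, exactly as in the proof of Theorem~\ref{thm:partfun_ext}.

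The key combinatorial step, and the only place where the argument departs from the pure partition-function case, is the identification
\[
\prod_{v\in V}\cos(2\pi s_{v}/q)^{m_{v}}=\prod_{k=1}^{n}\cos(2\pi s_{i_{k}}/q)=\prod_{k=1}^{n}\cos(\Theta_{i_{k}}),
\]
which holds precisely because $m_{v}$ was defined as the multiplicity of $v$ in the tuple $(i_{1},\dots,i_{n})$: collecting the $n$ factors $\cos(\Theta_{i_{k}})$ according to the vertex they refer to produces exactly the exponent $m_{v}$ on each $\cos(2\pi s_{v}/q)$, while vertices absent from the tuple receive exponent $0$ and drop out. I would state this explicitly, since it makes transparent how the repeated-index structure of the correlation function is encoded by the powers appearing in $\ket{\alpha'_{v}(i_{1},\dots,i_{n})}$.

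Combining the two observations, and using $\prod_{v}e^{-\beta b_{v}(s_{v})}\prod_{e}e^{-\beta h_{e}(\cdots)}=e^{-\beta H(\{s_{i}\})}$ as before, the overlap collapses to $\sum_{\mathbf{s}}\cos(\Theta_{i_{1}})\cdots\cos(\Theta_{i_{n}})\,e^{-\beta H(\{s_{i}\})}$; dividing by $Z$ reproduces the definition of $\langle s_{i_{1}},\dots,s_{i_{n}}\rangle_{\beta}$ quoted in Sec.~\ref{sec:Background-on-spin}, completing the proof. I expect no serious obstacle here: the stabilizer property of $\ket{\varphi_{G}}$ and the factorization of the overlap are inherited directly from Lemma~\ref{lem:stabilizer-state2} and Theorem~\ref{thm:partfun_ext}, so the only care required is the bookkeeping of the multiplicities $m_{v}$ and the (harmless) remark that the real coefficients render the conjugation in the bra irrelevant.
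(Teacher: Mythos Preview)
Your proposal is correct and follows essentially the same approach as the paper's proof: expand $\ket{\varphi_{G}}$ as $\sum_{\mathbf{s}}\ket{\mathbf{s}}\bigl|(B^{\sigma})^{T}\mathbf{s}\bigr\rangle$, factorize the overlap over vertex and edge qudits, and identify the result with the defining sum for the correlation function from Sec.~\ref{sec:Background-on-spin}. If anything, you are slightly more explicit than the paper in spelling out the combinatorial identity $\prod_{v}\cos(2\pi s_{v}/q)^{m_{v}}=\prod_{k=1}^{n}\cos(\Theta_{i_{k}})$ and in noting that reality of the coefficients makes the bra conjugation irrelevant.
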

\begin{proof}
The state $\ket{\varphi_{G}}$ is a stabilizer state according to lemma (\ref{lem:stabilizer-state2}), and we compute
\begin{multline*}
(\bigotimes_{v\in V}\left\langle \alpha'_{v}
\left(i_{1},...,i_{n}\right)\right|\bigotimes_{e\in E}
\bra{\alpha_{e}})\ket{\varphi_{G}}\\
=(\bigotimes_{v\in V}\bra{\alpha'_{v}\left(i_{1},...,i_{n}\right)}
\bigotimes_{e\in E}\bra{\alpha_{e}})
\sum_{\mathbf{s}\in\mathbb{Z}_{q}^{N}}\ket{\mathbf{s}} 
\ket{\left(B^{\sigma}\right)^T\mathbf{s}}\\
=\sum_{\mathbf{s}\in\mathbb{Z}_{q}^{N}}\prod_{v\in V}
\cos\left(2\pi s_{\nu}/q\right)^{m_{\nu}}e^{-\beta b_{v}\left(s_{v}\right)}
\prod_{e\in E}e^{-\beta h_{e}(\bigl|s_{v_{e}^{+}}-s_{v_{e}^{-}}\bigr|_{q})}\\
=\sum_{\mathbf{s}\in\mathbb{Z}_{q}^{N}}\cos\left(\Theta_{i_{1}}\right)
\cos\left(\Theta_{i_{2}}\right)...\cos\left(\Theta_{i_{n}}\right)
e^{-\beta H\left(\left\{ s_{i}\right\} \right)}.
\end{multline*}
where $\Theta_{i}=2 \pi s_i/q$. We compare this to the definition of the n-point correlation function given in section \ref{sec:Background-on-spin}. This concludes the theorem.
\end{proof}

\section{\label{sec:Extending-the-formalism}Extending the formalism}

\subsection{The most general framework}

So far we have used product states of single edge-qudit sites, namely the states $\left|\alpha\right\rangle =\bigotimes_{e\in E} \left|\alpha_{e}\right\rangle $, in the overlap with the states $\left|\varphi_{G}\right\rangle $ and $\left|\psi_{G}\right\rangle $, to calculate partition functions and correlation functions. Allowing for tensor products of entangled states, $\left|\alpha\right\rangle =\bigotimes_{\varepsilon\subset E} \left|\alpha_{\varepsilon}\right\rangle $, where the $\varepsilon$ are subsets of $E$ with few elements, extends the set of possible encodings of classical spin systems. This is the content of this section.

One shortcoming of the encoding of the interaction graph into the states $\left|\varphi_{G}\right\rangle $ and $\left|\psi_{G}\right\rangle $ is the inability of the interaction to distinguish between classical spin states that have the same relative state  $\left|s_{i}-s_{j}\right|_q=\left|s'_{i}-s'_{j}\right|_q$ but have different values $s_{i}\neq s'_{i}$, $s_{j}\neq s'_{j}$. This inability stems from the fact that an attempt to encode pairs of neighboring states $\left(s_{i},s_{j}\right)$ into one edge qudit $\left(s_{i},s_{j}\right)\mapsto\left|e_{ij}\right\rangle $ via the $B$-matrix formalism does not lead to a stabilizer state and hence fails, \emph{if $\left|e_{ij}\right\rangle $ takes more states than each of the sites $s_{i}$ or $s_{j}$.} One way out of this dilemma is to encode the pairs of neighboring spin sites in the graph of the classical model into more than one qubit, while extending the overlap state $\left|\alpha\right\rangle $ to states beyond  product states. Although these states are not product states anymore, we can still interpret them as product states of composite particles, extending over few sites as we restrict ourselves to subsets $\varepsilon$ of $E$ with few elements. The entangled states moreover include neighboring sites only, which adds to the picture of composite sites (quasi-local states).

We describe this generalization now and investigate the relationship to the more specialized cases. Under certain assumptions concerning the classical Hamiltonian function, a formal mapping from the most general case to the more specialized ones is possible. Taking this step, i.e. performing this formal transformation, gives us a mathematical picture which is often much more enlightening than the original one.

\subsection{Encoding $m$-body interactions, each site appearing in maximally $n$ terms of the Hamiltonian function.}

The most general case to consider is the one where we

\begin{itemize}
\item allow each classical spin to appear in as many as $n$ terms of the Hamiltonian function
\item allow each site to interact with $m-1$ others (Hamiltonian function with $m$-body terms)
\item allow all configurations of the $m$ interacting spins in each term to be differentiated energetically.
\end{itemize}
Note however, that a simulation of thermal states of these systems on a classical computer scale unfavorably  in $m$ and $n$, as we will see in section \ref{sub:Simulations-on-classical}.

The first point in the list is addressed in the following way. Since each site is allowed to take part in $n$ interactions, we need $n$ instances of it in the stabilizer state. Of course, all instances of the local quantum systems have to be in the same state when measured. Hence we map each site $e_{i}$ to an $n$-body GHZ state: $e_{i} \mapsto \sum \ket{s_{i,1}s_{i,2}...s_{i,n}}$. To address the latter two points of this list, we consider the following. To create a quantum state $\left| \gamma_{G} \right\rangle $ that enables us to differentiate energetically between all possible spin configurations of an $m$- body interaction, we map each site $e_i$ taking part in the interaction to a single quantum spin state $e_{i}\mapsto\ket{s_{i}}_{e_{i}}$. The corresponding state $\ket{\alpha}$, which is used for the contraction that yields the partition function and which in the preceding sections used to be a product state, consequently has to be an entangled state in this picture. On the sites $\left\{ e_{i}\right\} _{i=1}^{m}$ taking part in one $m$-body interaction, the state $\ket{\alpha}$ takes the form
\[
\ket{\alpha}=\sum_{\left(s_{1},s_{2},...,s_{m}\right)}
e^{-\beta h\left(s_{1},s_{2},...,s_{m}\right)}\ket{s_{1}s_{2}...s_{m}}.
\]
Note however, that a simulation of thermal states of these systems on a classical computer scales unfavorably in $m$ and $n$, as we will see in section \ref{sub:Simulations-on-classical}. For further details of this encoding, let us now have a look at examples.

\subsubsection{Encoding $2$-body interactions, each site appearing in maximally $n$ terms of the Hamiltonian function: Edge models}

A special case of the discussed generalization is the one where we (as in the preceding sections) stick to Hamiltonian functions with $2$-body terms where each site is involved in $n$ interactions.  This kind of Hamiltonian function plays a role in higher dimensional lattices  and spin glasses, for example. For their treatment we propose, in the following, a way to discriminate  the classical spin configurations  beyond resolving relative states (as we did before).

To create a quantum state $\left|\gamma_{G}\right\rangle $ that enables us to differentiate energetically between all possible spin configurations, we proceed as follows. We identify two qudits with each edge $e=\left(ij\right)\in E$ of the graph $G$ and provide them with a product basis  $\left\{ \left|s_{i}\right\rangle _{e_{i}}\left|s_{j}\right\rangle _{e_{j}} | s_i, s_j = 0 ... q-1 \right\}$, where $e_{i}$ is one of the edge qudits and $e_{j}$ is the other one. These qudits will be called the \emph{edge qudits} corresponding to the edge. We map states of the classical spin sites to quantum state of the whole quantum many body system of edge qudits via
\[
\mathbb{Z}_{q}^{N}\ni\mathbf{s}=\left(s_{0},...,s_{N}\right)
\mapsto\bigotimes_{e=\left(ij\right)\in E}
\left|s_{i}\right\rangle_{e_{i}}\left|s_{j}\right\rangle _{e_{j}}.
\]
This way, we attach GHZ-states to the vertices, with the number of particles equaling the number of incident edges. A graphical representation of this encoding is given in Fig.~\ref{fig:ghzscheme} (a). Note that each classical spin is mapped to as many edge qudits as there are edges attached to the classical spin vertex.%
\begin{figure*}
\includegraphics[width=1.2\columnwidth]{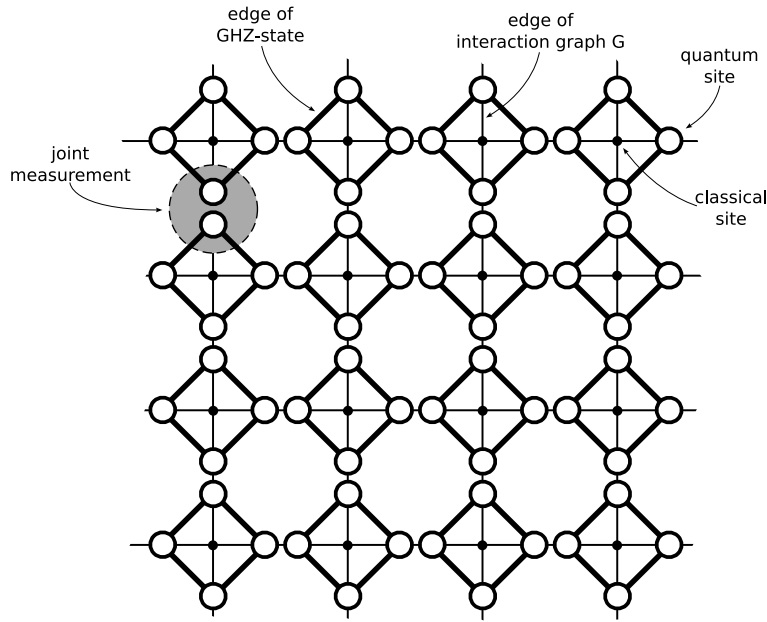}

\caption{\emph{Alternative encoding schemes I: Edge models. (The GHZ scheme.)}  This figure shows an example of an encoding of a classical interaction pattern into a product of GHZ states. The classical interaction graph, a square lattice in this example, is given by the underlying thin grid, the vertices symbolizing classical spin sites and the edges symbolizing their interactions. Each edge holds a pair of edge qubits, as indicated by the dots. The edge-qubits that belong to the same classical spin site are connected by thick lines, indicating that they form a GHZ state. The circle with dashed circumference indicates one pair of qubits $\left|s_{i}\right\rangle \left|s_{j}\right\rangle $ contracted with one state $\left|\alpha_{ij}\right\rangle $ in the Hilbert space of the pair of edge qubits. }

\label{fig:ghzscheme}
\end{figure*}

\begin{lem}
\label{lem:gammaGisGHZ}The superposition of the quantum states belonging to all possible classical states
\[
\left|\gamma_{G}\right\rangle 
:=\sum_{\mathbf{s}\in\mathbb{Z}_{q}^{N}}\bigotimes_{e=\left(ij\right)\in E}
\left|s_{i}\right\rangle _{e_{i}}\left|s_{j}\right\rangle _{e_{j}}
\]
is a product of GHZ-states and hence a stabilizer state.
\end{lem}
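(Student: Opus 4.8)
The plan is to show that the state $\left|\gamma_{G}\right\rangle$ factorizes across vertices, and that each vertex factor is a GHZ state. The key structural observation is that although the sum ranges over all configurations $\mathbf{s}\in\mathbb{Z}_{q}^{N}$, the amplitude assigned to a given product of basis states is simply $1$ whenever that product is \emph{consistent}, i.e.\ whenever every edge qudit attached to a classical spin site $v$ carries the same value $s_{v}$, and $0$ otherwise. Each classical spin $v$ appears in exactly $n_{v}$ edge qudits (one for each incident edge), and these $n_{v}$ qudits are forced by the map $\mathbf{s}\mapsto\bigotimes_{e=(ij)}\left|s_{i}\right\rangle_{e_{i}}\left|s_{j}\right\rangle_{e_{j}}$ to all take the value $s_{v}$. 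Since distinct classical spins control disjoint sets of edge qudits, the constraints decouple vertex by vertex.

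Concretely, I would \emph{regroup the tensor factors by vertex rather than by edge}. For each vertex $a\in V$, collect the $n_{a}$ edge qudits that carry the label of $s_{a}$; call this collection $Q_{a}$, so the total Hilbert space is $\bigotimes_{a\in V}\mathcal{H}_{Q_{a}}$. Under this regrouping, the single summation over $\mathbf{s}=(s_{0},\dots)$ splits as an independent sum over each $s_{a}$, and the summand factorizes as a tensor product over $a$. This gives
\[
\left|\gamma_{G}\right\rangle
=\bigotimes_{a\in V}\left(\sum_{s_{a}=0}^{q-1}\left|s_{a}\right\rangle^{\otimes n_{a}}\right),
\]
where $\left|s_{a}\right\rangle^{\otimes n_{a}}$ denotes the value $s_{a}$ placed on all $n_{a}$ edge qudits in $Q_{a}$. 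Each factor is exactly the generalized $q$-level GHZ state on $n_{a}$ parties, as was already noted for the cycle example earlier in the excerpt.

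It then remains to argue that this is a stabilizer state. Here I would appeal to the fact, established in the cycle example, that the generalized GHZ state $\sum_{j=0}^{q-1}\left|j\right\rangle^{\otimes n}$ is stabilized by $Z^{(k)}(Z^{-1})^{(k+1)}$ for consecutive parties together with $X^{\otimes n}$; since a tensor product of stabilizer states on disjoint systems is again a stabilizer state (the combined stabilizer being generated by the union of the individual generators, suitably padded with identities), $\left|\gamma_{G}\right\rangle$ is a stabilizer state. The main obstacle is less a deep difficulty than a bookkeeping one: carefully setting up the index correspondence between the edge-labelled factorization and the vertex-labelled regrouping, so that the factorization of the sum is manifestly correct. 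Once the regrouping is made explicit, the factorization is essentially immediate, and the GHZ identification follows by inspection.
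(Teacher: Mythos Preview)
Your proposal is correct and follows essentially the same approach as the paper: both proofs regroup the edge-indexed tensor product by vertex, so that the global sum over $\mathbf{s}$ factorizes into independent sums over each $s_{a}$, yielding $\bigotimes_{a\in V}\sum_{s_{a}}\left|s_{a}\right\rangle^{\otimes n_{a}}$. The paper's argument is slightly terser and leaves the ``product of GHZ states is a stabilizer state'' step implicit, whereas you spell it out via the explicit generators from the cycle example; otherwise the reasoning is the same.
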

\emph{Proof.} A reordering of the sites groups all edge qubits belonging to the same spin site $i$ 
\[
\bigotimes_{e=\left(ij\right)\in E}\left|s_{i}\right\rangle _{e_{i}}
\left|s_{j}\right\rangle _{e_{j}}=\bigotimes_{i\in V}
\bigotimes_{e=\left(ij\right)}\left|s_{i}\right\rangle _{e_{i}}
\]
and writing it this way we see that the state $\left|\gamma_{G}\right\rangle $ has the structure
\[
\left|\gamma_{G}\right\rangle 
=\sum_{\mathbf{s}\in\mathbb{Z}_{q}^{N}}\bigotimes_{i\in V}
\bigotimes_{e=\left(ij\right)}\left|s_{i}\right\rangle _{e_{i}}
\overset{\mbox{\tiny{reordering}}}{\mapsto}\bigotimes_{i\in V}\sum_{s_i\in\mathbb{Z}_{q}}
\bigotimes_{e=\left(ij\right)}\left|s_{i}\right\rangle _{e_{i}},
\]
where $\sum_{s_i\in\mathbb{Z}_{q}}  \bigotimes_{e=\left(ij\right)}\left|s_{i}\right\rangle _{e_{i}}$ is a GHZ state.$\hfill\square$

The overlap to evaluate the partition function or correlation functions has now to be performed with one state per edge qubit \emph{pair} $\braket{ \alpha_{e}}{s_{i}s_{j}}$. Since this overlapping state $\ket{\alpha}$ allows us to adapt the energies $h_{ij}$ to each individual spin, the possibility of evaluating partition functions with local energy terms as well as correlation functions is immediately given.

To avoid the necessity of encoding the correlation function directly into the states $\ket{\alpha_{e}}$, we add one more quantum site to the GHZ state. This enables us to measure the state of the classical site directly. Keep in mind that this is technically not necessary, because the state of the site is directly accessible already without the extension.

\begin{thm}
\label{thm:partfun-ghz}The partition function  $Z_{G}\left(\left\{ h_{e},b_{v}\right\},\beta\right)$ of a classical spin system at inverse temperature $\beta$, defined on the graph $G=\left(V,E\right)$ by the Hamiltonian function  \textup{$H\left(\left\{ s_{i}\right\} \right) =\sum_{\left(ij\right)\in E}h_{\left(ij\right)}\bigl(s_{i},s_{j}\bigr)$}, can be written as the overlap of a stabilizer state and a product state (over edge qudit pairs) 
\begin{eqnarray*}
Z_{G}\left(\left\{ h_{ij},b_{v}\right\} ,\beta\right)
& = & (\bigotimes_{\left(ij\right)\in E}
\bra{\alpha_{\left(ij\right)}})\ket{\gamma_{G}},
\end{eqnarray*}
where
\begin{eqnarray*}
\ket{\alpha_{\left(ij\right)}}
& = & \sum_{s_{i},s_{j}=1}^{q}e^{-\beta h_{ij}\left(s_{e_{i}},s_{e_{j}}\right)}
\ket{s_{i}}_{e_{i}}\ket{s_{j}}_{e_{j}}
\end{eqnarray*}

\end{thm}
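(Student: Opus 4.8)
The plan is to follow the same pattern as the two preceding theorems: invoke Lemma~\ref{lem:gammaGisGHZ}, which guarantees that $\ket{\gamma_G}$ is a genuine (stabilizer) state, and then evaluate the requested overlap by a direct computation in the computational basis. First I would substitute the explicit expansion of $\ket{\gamma_G}$ into the overlap and pull the sum over global spin configurations $\mathbf{s}$ outside the contraction with the product bra,
\begin{equation*}
\Bigl(\bigotimes_{(ij)\in E}\bra{\alpha_{(ij)}}\Bigr)\ket{\gamma_G}
=\sum_{\mathbf{s}\in\mathbb{Z}_q^N}
\Bigl(\bigotimes_{(ij)\in E}\bra{\alpha_{(ij)}}\Bigr)
\bigotimes_{e=(ij)\in E}\ket{s_i}_{e_i}\ket{s_j}_{e_j}.
\end{equation*}

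The key step is to note that both the product bra and the tensor-product ket are organized edge by edge over the \emph{same} set of edge-qudit pairs, so the inner product factorizes into one scalar per edge. Using the definition of $\ket{\alpha_{(ij)}}$ together with the orthonormality of the computational basis on each edge-qudit pair, every factor collapses to a single Boltzmann weight,
\begin{equation*}
\bra{\alpha_{(ij)}}\bigl(\ket{s_i}_{e_i}\ket{s_j}_{e_j}\bigr)
=e^{-\beta h_{ij}(s_i,s_j)}.
\end{equation*}
Taking the product over all edges and then summing over $\mathbf{s}$ reassembles the Hamiltonian in the exponent,
\begin{equation*}
\sum_{\mathbf{s}}\prod_{(ij)\in E}e^{-\beta h_{ij}(s_i,s_j)}
=\sum_{\mathbf{s}}e^{-\beta\sum_{(ij)\in E}h_{ij}(s_i,s_j)}
=\sum_{\mathbf{s}}e^{-\beta H(\{s_i\})}=Z_G,
\end{equation*}
which is exactly the claim.

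In contrast to the $\ket{\psi_G}$ and $\ket{\varphi_G}$ constructions, there is no incidence-matrix bookkeeping to carry out here, so I expect essentially no computational obstacle; the labour has already been done in setting up $\ket{\gamma_G}$ and $\ket{\alpha_{(ij)}}$. The one point that genuinely requires care -- and on which the whole construction rests -- is the \emph{synchronization} enforced by the GHZ structure of $\ket{\gamma_G}$ established in Lemma~\ref{lem:gammaGisGHZ}: because all edge qudits incident to a common vertex $i$ are tied together in a GHZ state, each surviving term of the expansion corresponds to a single, globally consistent vertex configuration $\mathbf{s}=(s_i)_{i\in V}$, so that the per-edge factor $e^{-\beta h_{ij}(s_i,s_j)}$ always refers to the \emph{same} value $s_i$ across every edge meeting at $i$. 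It is precisely this consistency that lets the edgewise product be rewritten as $e^{-\beta H}$ of a bona fide spin configuration, and hence lets the fully general, edge-local couplings $h_{ij}(s_i,s_j)$ be resolved -- the very feature that the more specialized $\ket{\psi_G}/\ket{\varphi_G}$ encodings could not capture.
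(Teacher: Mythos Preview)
Your proposal is correct and follows essentially the same route as the paper: invoke Lemma~\ref{lem:gammaGisGHZ} to certify that $\ket{\gamma_G}$ is a stabilizer state, substitute its explicit expansion into the overlap, factorize edgewise to obtain $\prod_{(ij)\in E}e^{-\beta h_{ij}(s_i,s_j)}$, and sum over $\mathbf{s}$ to recover $Z_G$. The paper's proof is terser and omits your explanatory remarks about the GHZ synchronization, but the logical skeleton is identical.
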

\begin{proof}
The state $\left|\gamma_{G}\right\rangle $ is a product of GHZ states and hence a stabilizer state according to lemma \ref{lem:gammaGisGHZ}, and we compute, with an arbitrarily chosen orientation $\sigma$ of the graph $G$,
\begin{multline*}
(\bigotimes_{\left(ij\right)\in E} \bra{\alpha_{\left(ij\right)}})
\ket{\gamma_{G}}\\
=(\bigotimes_{\left(ij\right)\in E}\bra{\alpha_{\left(ij\right)}})
\sum_{\mathbf{s}\in\mathbb{Z}_{q}^{N}}\bigotimes_{e=\left(ij\right)\in E}
\ket{s_{i}}_{e_{i}} \bra{s_{j}}_{e_{j}}\\
=\sum_{\mathbf{s}\in\mathbb{Z}_{q}^{N}}
\prod_{\left(ij\right)\in E}e^{-\beta h_{ij}\left(s_{e_{i}},s_{e_{j}}\right)}\\
=\sum_{\mathbf{s}\in\mathbb{Z}_{q}^{N}}
e^{-\beta H\left(\left\{ s_{i}\right\} \right)}.
\end{multline*}
\end{proof}

\subsubsection{Encoding $4$-body interactions, each site appearing in maximally $2$ terms of the Hamiltonian function: Vertex models}

An important class of models are the vertex models. These models also fit into our framework, as will be shown now. A prominent example of a vertex model stems from a $2D$ regular lattice where each classical site interacts with two groups of three neighboring particles (each individually) (see Fig.~\ref{fig:vertexscheme}). Hence we have the situation where $4$-body interactions take place, each site appearing in maximally $2$ terms of the Hamiltonian function. Consequently, we encode each classical spin site into a $2$-body GHZ-state (Bell state), and entangle the quartets of sites, corresponding to the interactions, in the state $\ket{\alpha}$.

This setting yields a vertex model in two dimensions, where the projections (of the subsets of the GHZ states taking place at each vertex of the vertex model) are determined by the set of states $\ket{\alpha}$. Similar models in higher dimensions can be obtained easily in an analogous fashion.
\begin{figure*}
\includegraphics[width=1.2\columnwidth]{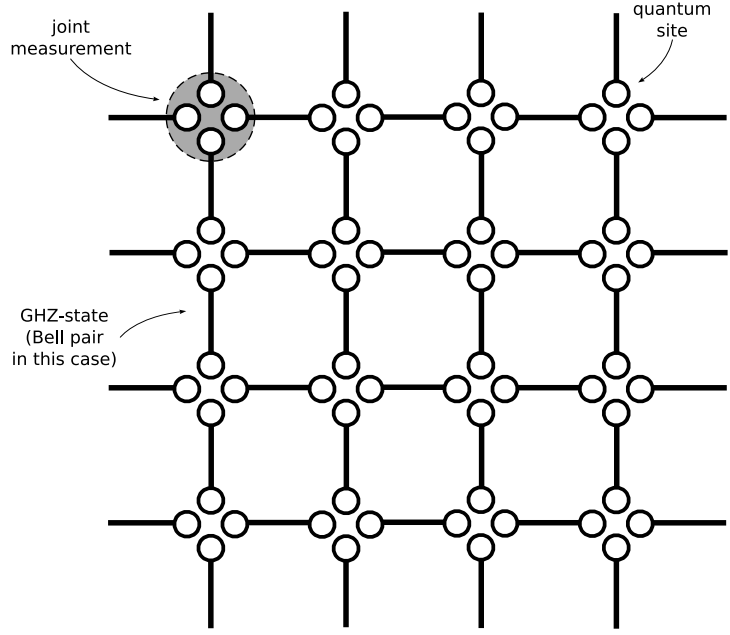}

\caption{\emph{Alternative encoding schemes II: Vertex models.} This figure shows an example of an encoding of a classical interaction pattern into a vertex model, where each thick line represents one Bell pair. In this example, each classical spin site enters in two four-site classical interactions. Accordingly, four edge qubits (circle with dashed circumference) form the smallest subsystem of the Hilbert space used for the overlap states $\left|\alpha_{\varepsilon}\right\rangle $.}

\label{fig:vertexscheme}
\end{figure*}

\subsection{Relations between the encoding schemes}

An interesting question is how the generalized models that were just described relate to the encoding scheme encompassing the states  $\left|\varphi_{G}\right\rangle $ and $\left|\psi_{G}\right\rangle $. We want to discuss this now and furthermore give additional relations between the states  $\left|\varphi_{G}\right\rangle $ and $\left|\psi_{G}\right\rangle $, adding to what was presented in the preceding sections.

\subsubsection{Relations between $\left|\varphi_{G}
\right\rangle $ and $\left|\psi_{H}\right\rangle $}

There are instances where for different graphs $G$ and $H$ the cutspaces of $\left|\varphi_{G}\right\rangle $ and $\left|\psi_{H}\right\rangle $ are closely related. Two examples will now be demonstrated and give us some more insight into the internals of the construction.

The first way to look at the construction of the cutspace of  $\left|\varphi_{G}\right\rangle $ is to modify the graph $G$ by changing the mapping of $G$ to the quantum spin sites. We remember that (in the case of the construction of $\left|\psi_{G}\right\rangle $), the method was to map each edge to one quantum spin site. As an alternative, we derive now from $G$ an new graph by placing on each edge one additional vertex. This new graph we call the \emph{decorated graph} $\tilde{G}$, which possesses $N=\left|V\right|+\left|E\right|$ vertices. The crucial point is now to identify the \emph{vertices} in $\tilde{G}$ with the qudits that we chose as a product basis in the definition of the state  $\left|\varphi_{G}\right\rangle $. The original vertices (that appear in $G$ and in $\tilde{G}$) are called \emph{vertex qudits} and the qudits that were added at the edges are called \emph{edge qudits}. The incidence matrix of the decorated graph $\tilde{G}$ is now $\left(\one|B^{\sigma}\right)$ with  $\left|\varphi_{G}\right\rangle =\sum_{\mathbf{s}\in\mathbb{Z}_{q}^{N}}| \left(\one|B^{\sigma}\right)^T\mathbf{s}\bigr\rangle$, where for each summand $\left|\mathbf{s}\right\rangle  \bigl|\left(B^{\sigma}\right)^T\mathbf{s}\bigr\rangle$ the state $\left|\mathbf{s}\right\rangle $ is a state of the vertex qudits and $\bigl|\left(B^{\sigma}\right)^T\mathbf{s}\bigr\rangle$ is a state of the edge qudits. We note that the original method is a restriction of the just proposed mapping of vertices to the edge qudits.

A second way of mapping the graph $G$ to another one that can be used to construct the cutspace of $\left|\varphi_{G}\right\rangle $ is the following. Let us add one vertex to the graph $G$ that is connected to all other vertices. Let us call this vertex $h$ and the new graph $G+h$. The incidence matrix of $G+h$ is
\[
B\left(G+h\right)^T=\left(\begin{array}{cc}
B^{\sigma}\left(G\right)^T & 0\\
\one & \begin{array}{c}
1\\
\vdots\\
1\end{array}\end{array}\right).
\]
The vector of classical spin sites $\mathbf{s}$ has to be extended to include the site $h$, hence we obtain a new vector $\mathbf{s}'= \left(\mathbf{s},s_{h}\right)$. The canonical way to construct  $\left|\psi_{G+h}\right\rangle $ now is 
\begin{eqnarray*}
\left|\psi_{G+h}\right\rangle  
& = & \sum_{\mathbf{s}'}\ket{B\left(G+h\right)^T\mathbf{s}'}\\
& = & \sum_{\mathbf{s},s_{h}}\ket{B^{\sigma}\left(G\right)^T\mathbf{s}}
\ket{\mathbf{s}+\left(s_{h},s_{h},...s_{h}\right)}\\
& = & 2\sum_{\mathbf{s}}
\ket{B^{\sigma}\left(G\right)^T\mathbf{s}}\ket{\mathbf{s}},
\end{eqnarray*}
because for all values of $s_{h}$, the equation
\[
\ket{B^{\sigma}\left(G\right)^T\left(\mathbf{s}
+\left(s_{h},s_{h},...s_{h}\right)\right)}
=\ket{B^{\sigma}\left(G\right)^T\mathbf{s}}
\]
holds. Hence $\left|\psi_{G+h}\right\rangle =2\left|\varphi_{G}\right\rangle $.

A conclusive remark seems appropriate. As has been shown, the stabilizer of the states are derived from the incidence matrix of their interaction graph. In the case of $\left|\psi_{G}\right\rangle $, the span of the rows of $B^{\sigma}$ forms the cutspace directly. Following the arguments in the sections above, the stabilizer of $\left|\varphi_{G}\right\rangle $ is constructed analogously, but from the span of the rows of the matrix  $\left(\one_{\left|V\right|}|B^{\sigma}\right)$ or $B\left(G+h\right)^T$ instead. Although obviously being related, the difference in the construction changes the quantum states qualitatively to a great extend. For instance, when constructing  $\left|\varphi_{G}\right\rangle $ we do not obtain the same state classes  as in the examples (\ref{exa:StabAndProdPsiGee}). Instead of the state  $\left(\sum_{j=0}^{N-1}\left|j_{x}\right\rangle \right)^{\otimes N}$ (in case of a tree graph), or the state  $\sum_{j=0}^{N-1}\left(\left|j_{x}\right\rangle ^{\otimes N}\right)$ (in case of a cycle) or the toric code state, we always obtain states that are locally equivalent to one-dimensional or two-dimensional cluster states, respectively.

Finally, by means of measurements, we are able to obtain the state $\ket{\psi_{G}}$ from the state $\ket{\varphi_{G}}$. By overlapping the vertex qudits of $\left|\varphi_{G}\right\rangle $ with the state $\left(\sum_{j=0}^{q-1}\left|j\right\rangle \right)^{\otimes\left|V\right|}$ we immediately recover $\left|\psi_{G}\right\rangle $. On the one hand, this formally has the meaning of projecting out the dimensions of the state that are stabilized by operators corresponding to the $\one_{\left|V\right|}$-part in the matrix  $\left(\one_{\left|V\right|}|B^{\sigma}\right)$. On the other hand, it has the physical interpretation of setting the
local external fields to zero.

\subsubsection{Going from the general picture to $\left|\varphi_{G}\right\rangle $ and $\left|\psi_{G}\right\rangle $}

The states $\left|\varphi_{G}\right\rangle $ and $\left|\psi_{G}\right\rangle $ encode two-body interactions. Hence the scheme that is a direct super-set of the $\left|\varphi_{G}\right\rangle $ and $\left|\psi_{G}\right\rangle $ encodings is the GHZ scheme. Let this GHZ state be $\ket{GHZ}$.

Contractions with states $\ket{\alpha_{\left(ij\right)}}$ that do not discriminate between quantum states $\ket{s_{i}}_{e_{i}}\ket{s_{j}}_{e_{j}}$ with the same value of $\left|s_{i}-s_{j}\right|_q$ yield directly the appropriate quantum description for $\ket{\psi_{G}}$. To obtain the stabilizer description for this case, each sub-state of $\ket{GHZ}$ consisting of a pair $\ket{s_{i}}_{e_{i}}\ket{s_{j}}_{e_{j}}$ that is measured against a two-qudit state $\ket{\alpha_{\left(ij\right)}}$ is identified with a new single qudit. This qudit in turn corresponds to an edge in the adjacency matrix of the graph $G$ defining the state $\ket{\psi_{G}}$. The vertices $G$ correspond to the GHZ sub-states in the state $\ket{GHZ}$. Hence all the information about the graph $G$ can be recovered from the graphical scheme corresponding to $\ket{GHZ}$. The state $\ket{\alpha'}$ that encodes the interaction strengths is not difficult to find either. Since $\ket{\alpha_{\left(ij\right)}}$ does not discriminate between quantum states $\ket{s_{i}}_{e_{i}}\ket{s_{j}}_{e_{j}}$ with the same value of $\left|s_{i}-s_{j}\right|_q$, we obtain
\[
\ket{\alpha'_{\left(ij\right)}}
=\sum_{s}\left(\sum_{\left|s_{i}-s_{j}\right|_q=s}
e^{-\beta h\left(s_{i},s_{j}\right)}\right)\ket{s}.
\]

Recovering a description of $\ket{GHZ}$ in terms of a state $\left|\varphi_{G}\right\rangle $ can be performed similarly, provided that the Hamiltonian function terms can
be written as
\[
h\left(s_{i},s_{j}\right)
=\bar{h}_{ij}\left(\left|s_{i}-s_{j}\right|_q\right)
+h_{i}\left(s_{i}\right)+h_{j}\left(s_{j}\right).
\]
The GHZ state in the general encoding is a product state of smaller GHZ states
\[
\ket{GHZ}=\bigotimes_{k}\ket{GHZ_{k}}.\]
Each of the states $\ket{GHZ_{k}}$ has to be extended by one site by the mapping
\[
\ket{GHZ_{k}}
=\sum_{s}\bigotimes_{i=1}^{N_{k}}\ket{s}_{i}
\mapsto\sum_{s}\bigotimes_{i=1}^{N_{k}+1}\ket{s}_{i}=:\ket{GHZ_{k}'}.
\]
To obtain the stabilizer description for this case, each sub-state of $\ket{GHZ}$ consisting of a pair $\ket{s_{i}}_{e_{i}}\ket{s_{j}}_{e_{j}}$ that is contracted with a two-qudit state $\ket{\alpha_{\left(ij\right)}}$ is identified with a new single (edge) qudit of the decorated graph corresponding to $\ket{\varphi_{G}}$. This edge-qudit in turn corresponds to an edge in the adjacency matrix $B$ of the graph $G=\left(\one|B\right)$ defining the state $\ket{\varphi_{G}}$. The sub-states that are not measured this way are the ones that were added in the mapping above. These will be used to encode the local fields and hence will be mapped to the vertex qudits of the decorated graph defining the state $\ket{\varphi_{G}}$. The part that is more complicated here than in the case of $\ket{\varphi_{G}}$ is finding the new state $\ket{\alpha'}$ encoding the interaction strengths. To do so, we have to find, for each term of the Hamiltonian function $h\left(s_{i},s_{j}\right)$, a corresponding form $h\left(s_{i},s_{j}\right) =\bar{h}_{ij}\left(\left|s_{i} -s_{j}\right|_q\right)+h_{i}\left(s_{i}\right) +h_{j}\left(s_{j}\right)$. The part $\bar{h}\left(\left|s_{i}-s_{j}\right|_q\right)$ will be encoded in the part of $\ket{\alpha'}$ that is measured against the edge-qudits, e.g.,
\[
\ket{\alpha'_{\left(ij\right)}}
=\sum_{s}\left(\sum_{\left|s_{i}-s_{j}\right|_q=s}
e^{-\beta\bar{h}_{ij}\left(\left|s_{i}-s_{j}\right|_q\right)}\right)\ket{s}.
\]
The local field corresponding to the vertex qudit with states   $\ket{s_{k}}_{N_{k}+1}$, belonging to the extended GHZ sub-state $\ket{GHZ_{k}}$, is found by a summation of all corresponding fields
\[
h_{N_{k}+1}\left(s_{k}\right)=\sum_{j=1}^{N_{k}}h_{j}\left(s_{k}\right),
\]
where $h_{j}\left(s_{j}\right)$ are the new terms of the Hamiltonian function gained from the original terms $h\left(s_{i},s_{j}\right)$, that belong to measurements on sites on the GHZ sub-state $\ket{GHZ_{k}}$.

\section{\label{sec:Applications}Applications}

This section contains applications of the framework given in the sections above. The first application shows how to derive the relation between a classical spin model on a graph and the corresponding model on the dual graph. The second application shows the implications of quantum mechanical symmetries existing in our description of classical systems by means of a quantum system.

Finally, we investigate the possibility to use simulations of the quantum system on a classical computer in order to obtain the statistics of  quantum measurement results. This investigation yields some insight into the complexity of the computation of the partition function and correlation functions of the classical system. We give a sufficient criterion for the structure of the interaction graph of the classical model, such that the computation of the partition function and correlation functions scale polynomially with system size.

\subsection{Duality relations for planar graphs}

We review Ref.~\cite{VDB06}. From graph theory it is known that  for any planar graph $G$ we can construct its dual graph $D$. In this section we want to demonstrate that the partition function $Z_{G}$ of a classical spin model defined on the graph $G$ and the partition function $Z_{D}$ of the model derived on the corresponding dual graph $D$ have a simple and meaningful relation.

To show this, we note that any orientation $\sigma$ of a graph $G^{\sigma}$ induces an orientation of its dual graph $D$~\cite{GR01}, which we also denote by $\sigma$ (we refer to~\cite{GR01}, page 168 for details). Moreover the incidence matrices $B\left(D^{\sigma}\right)$ and $B\left(G^{\sigma}\right)$ corresponding to the two graphs have the property $B\left(G^{\sigma}\right)B\left(D^{\sigma}\right)^T=0$ and the spaces generated by the rows of these matrices are each others duals $C_{G}\left(q\right)^{\perp}=C_{D}\left(q\right)$ . Hence, the stabilizer of $\left|\psi_{D}\right\rangle $ can be written as
\[
\mathcal{S}_{\left|\psi_{D}\right\rangle }
=\left\{ X\left(v\right)Z\left(u\right)|v\in C_{D}
\left(q\right),u\in C_{G}\left(q\right)\right\} .
\]
The quantum Fourier transform,
\[
F:=\frac{1}{\sqrt{q}}\sum_{j,k=0}^{q-1}
e^{\frac{2 \pi i k j}{q}}\ket{j}\! \bra{k},
\]
has the property to map $X$ and $Z$ to each other under conjugation: $FXF^{\dagger}=Z$ and $FZF^{\dagger}=X$, and can accordingly be used to map $\mathcal{S}_{\left|\psi_{D}\right\rangle }$ to  $\mathcal{S}_{\left|\psi_{G}\right\rangle }$, one-to-one, since  $F^{\otimes N}X\left(v\right)Z\left(u\right) \left(F^{\otimes N}\right)^{\dagger}=Z\left(v\right)X\left(u\right).$ Considering the identity
\[
\rho_{\mathcal{S}}=\frac{1}{q^{N}}\sum_{g\in\mathcal{S}}g
\]
for the density matrix $\rho_{\mathcal{S}}$ of a stabilizer state that is stabilized by the $q^{N}$ operators in $\mathcal{S}$, we infer that
\[
\left|\psi_{D}\right\rangle =F^{\otimes N}\left|\psi_{G}\right\rangle.
\]
The corresponding partition function $Z_{G}$ can thus be rewritten
as
\[
\bra{\psi_{G}} \left(\bigotimes_{e\in E} \ket{\alpha_{e}} \right) = \bra{\psi_{D}} \left(\bigotimes_{e\in E} \ket{\alpha'_{e}}\right),
\]
where $\ket{\alpha'_{e}} = F^{\dagger}\ket{\alpha_{e}}$. This transformation carries over to the energy terms in the Hamiltonian function of the model on the dual graph, where we find $Z_{G}\left(q,\sigma, \left\{ h_{e}\right\} \right)=Z_{D}\left(q,\sigma,\left\{ h_{e}'\right\} \right)$ with new energy terms $h_{e}'$, which are derived from the old ones by
\[
e^{-\beta h_{e}'\left(j\right)}:=\frac{1}{\sqrt{q}}\sum_{k=0}^{q-1}
e^{-\frac{2\pi i k j}{q}}e^{-\beta h_{e}\left(k\right)}
\]
for every $j=0,...,q-1$.

We now want to examine the relation of the the Potts model on a graph $G$ without external field and its corresponding model on the dual graph $D$. The Potts model, characterized by the Hamiltonian function
\[
H\left(\left\{ s_{i}\right\} \right)
=-\sum_{e=\bigl\langle i,j\bigr\rangle}J_{e}\delta_{ij},
\]
is encoded in two quantum states, $\left|\psi_{G}\right\rangle $ and $\bigotimes_{e\in E}\left|\alpha_{e}\right\rangle $ with
\[
\left|\alpha_{e}\right\rangle 
=\left|\alpha\right\rangle _{\mbox{\tiny{Potts}}}
=e^{\beta J_{e}}\left|0\right\rangle +\sum_{j=1}^{q-1}\left|j\right\rangle.
\]
The application of $F^{\dagger}$ on $\left|\alpha_{e}\right\rangle $ yields
\[
q^{1/2}e^{-\beta h'_{e}\left(j\right)}=\begin{cases}
e^{\beta J_{e}}+q-1 & \mbox{if }j=0\\
e^{\beta J_{e}}-1 & \mbox{if }j=1,...,q-1.\end{cases}
\]
Since the energies are again the same for all $j=1,...,q-1$, we have another Potts model (on the dual graph $D$) whose interaction strength $J'_{e}$ fulfills the relation
\[
e^{\beta J'_{e}}:=\frac{e^{\beta J_{e}}+q-1}{e^{\beta J_{e}}-1}.
\]
Equivalently, we write $\left(e^{\beta J'_{e}}-1\right) \left(e^{\beta J_{e}}-1\right)=q,$ and hence recover the well known high-low temperature duality relation for the Potts model partition function~\cite{Wu84}.

\subsection{Local symmetries}

See Ref.~\cite{VDB06}. Local symmetries of stabilizer states can be used to show that several different models of classical spin systems actually have the same partition functions. More precisely, any local unitary $U=\bigotimes_{e}U_{e}$ operator with eigenstate $\ket{\psi_{G}}$ \begin{equation} U\ket{\psi_{G}}=\lambda\ket{\psi_{G}}\label{eq:local-unitary-eigenstate}\end{equation} generates a model with the same interaction pattern but modified interaction strengths. Using Eq.~\eqref{eq:local-unitary-eigenstate} we obtain the symmetry relation
\[
(\bigotimes_{e\in E} \bra{\alpha_{e}}) \ket{\psi_{G}}=(\bigotimes_{e\in E}
\bra{\tilde{\alpha}_{e}})\ket{\psi_{G}}
\]
where
\[
\bigotimes_{e\in E} \ket{\tilde{\alpha}_{e}}
=\lambda^{*}U\bigotimes_{e\in E}\ket{\alpha_{e}}.\]
The mapping
\[
\sum_{j=0}^{q-1}e^{-\beta h_{e}\left(j\right)}\left|j\right\rangle 
=\left|\alpha_{e}\right\rangle \mapsto U_{e}\ket{\alpha_{e}}
=\sum_{j=0}^{q-1}e^{-\beta h_{e}\left(j\right)}U_{e}\left|j\right\rangle
\]
implies another mapping of the energies defining the prefactors of the basis states $\ket{j}$. This can lead to unphysical interaction strengths, e.g., imaginary ones.

Similarly, a relation for the states $\ket{\varphi_{G}}$ can be found, where the local symmetry is now corresponding to a change of interaction strengths and local field strengths
\[
(\bigotimes_{v\in V} \bra{\alpha'_{v}} \bigotimes_{e\in E} \bra{\alpha_{e}}) \ket{\varphi_{G}} = (\bigotimes_{v\in V} \bra{\tilde{\alpha}'_{v}} \bigotimes_{e\in E} \bra{\tilde{\alpha}_{e}}) \ket{\varphi_{G}},
\]
where
\[
\bigotimes_{v\in V} \ket{\tilde{\alpha}'_{v}} \bigotimes_{e\in E}
\ket{\tilde{\alpha}_{e}} = \lambda^{*}U \bigotimes_{v\in V}
\ket{\alpha'_{v}} \bigotimes_{e\in E} \ket{\alpha_{e}}.
\]

The effect on the correlation function is again similar, but generically different correlation functions will, by the same symmetry transformation, be mapped to the corresponding correlation functions of different models. By definition, the state $\ket{\alpha}$ enabling us to read out the value $\left\langle s_{i_{1}},s_{i_{2}},...,s_{i_{n}} \right\rangle _{\beta}$ is $\bigotimes_{v\in V}\ket{\alpha'_{v}} \bigotimes_{e\in E}\ket{\alpha_{e}}$ with
\begin{eqnarray*}
\ket{\alpha_{e}} & = & 
\sum_{j=0}^{q-1}e^{-\beta h_{e}\left(j\right)}\ket{j}\\
\ket{\alpha'_{v}\left(i_{1},...,i_{n}\right)}& = & 
\sum_{j=0}^{q-1}\cos\left(2\pi j/q\right)^{m_{\nu}}
e^{-\beta b_{v}\left(j\right)}\ket{j},
\end{eqnarray*}
where $m_{\nu}$ is the number of occurrences of $\nu$ in the n-tuple $\left(i_{1},...,i_{n}\right)$. Now
\[
U_{\nu}\ket{\alpha'_{v}\left(i_{1},...,i_{n}\right)}
=U_{\nu}\sum_{j=0}^{q-1}
\cos\left(2\pi j/q\right)^{m_{\nu}}e^{-\beta b_{v}\left(j\right)}\ket{j},
\]
so in general not only $h_{e}\left(j\right)$ and $b_{v}\left(j\right)$ will be altered, but the prefactors $\cos\left(2\pi j/q\right)^{m_{\nu}}$ play the role of weights. These are specific for the correlation function in question and enter the calculation of the energy terms $b_{\nu}\left(j\right)$ belonging to the symmetry.

The fact that the states $\ket{\psi_{G}}$ and $\ket{\varphi_{G}}$ are stabilizer states is advantageous, because all elements from the stabilizer define such a symmetry operation already, which we will use in the following examples.

\begin{example}
We consider now the change of a classical model with $q=2$ encoded into a state $\ket{\psi_{G}}$, caused by a symmetry operation. Let the classical graph have a vertex $a$ with a set of edges $E_{a}$ connecting to it. One column $c_{a}$ of the incidence matrix corresponds to the vertex $a$. The stabilizer element $X\left(c_{a}\right)Z\left(0\right)$ applied to the state $\ket{\alpha}$ (encoding the interaction strengths) maps all interactions strengths $J_{e},e\in E_{a}$ to $-J_{e}$ and does not touch the other ones. We hence obtain the result that
\[
Z\left(\left\{ J_{e}\right\} \right)=Z\left(\left\{ \tilde{J}_{e}\right\} \right),
\]
where
\[
\tilde{J}_{e}=
\begin{cases}
-J_{e} & e\in E_{a}\\
J_{e} & \mbox{otherwise}
\end{cases}.
\]

Next, we consider the change of a classical model with $q=2$ encoded into a state $\ket{\varphi_{G}}$, which is caused by a symmetry operation. The matrix generating the cutspace is now $C=\left(\one|B\right)^T$. The construction of the local unitary symmetry operation using one column of $C$, like in the example above, yields now
\[
Z\left(\left\{ b_{\nu},J_{e}\right\} \right)
=Z\left(\left\{ \tilde{b}_{\nu},\tilde{J}_{e}\right\} \right),
\]
where
\[
\tilde{J}_{e}=
\begin{cases}
-J_{e} & e\in E_{a}\\
J_{e} & \mbox{otherwise}
\end{cases}
\]
and $\tilde{b}_{a}=-b_{a}$ and $\tilde{b}_{\nu}=b_{\nu}$ otherwise.
\end{example}

\subsection{\label{sub:Simulations-on-classical}Simulations on classical computers}

An interesting aspect of the proposed mapping from classical to quantum systems is the established link between two different mathematical formalisms. As shown, algorithms for the computation of overlaps of stabilizer states with product states can be used to compute partition sums and correlation functions of classical spin systems -- and vice versa. In both cases, hard and computationally feasible instances of these calculations are known, and we can now extend efficient algorithms from one domain to the other. This connection allows us to prove the following

\begin{thm}
\label{thm:simul_comprehensible}There exists an algorithm that allows one to compute the partition function and the correlation functions of classical spin models defined on graphs exactly and with an effort that scales polynomially in the number of spin sites, provided that the tree-width of the graph used to define the classical model scales logarithmically in the number of spin sites.
\end{thm}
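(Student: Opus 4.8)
The plan is to reduce the computation of $Z_G$ and of the $n$-point functions to the evaluation of an overlap between a stabilizer state and a product state, as already furnished by the theorems of Sec.~\ref{sec:Encoding-classical-spin}, and then to evaluate this overlap by contracting an efficient tensor-network description of the stabilizer state against the product state. Concretely, for models without [with] external fields the partition function equals $\bra{\psi_G}\bigl(\bigotimes_{e\in E}\ket{\alpha_e}\bigr)$ [respectively $(\bigotimes_{v\in V}\bra{\alpha'_v}\bigotimes_{e\in E}\bra{\alpha_e})\ket{\varphi_G}$], and the $n$-point functions are given by the same overlaps up to the normalization $Z^{-1}$ and a modification of the single-particle factors $\ket{\alpha'_v}$. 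Since in every case the contracted state is a \emph{product} state, it suffices to exhibit an algorithm that computes $\braket{\psi_G}{\alpha}$ (respectively the $\ket{\varphi_G}$-overlap) for an arbitrary product state; correlation functions then cost only a constant factor more, as one merely replaces individual tensor factors.

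The core step is to build a \emph{tree tensor network} (TTN) for $\ket{\psi_G}$ guided by a tree decomposition of the interaction graph $G$ of width $tw$. Cutting the decomposition tree at any one of its internal edges partitions the edge-qudits of $\ket{\psi_G}$ into two blocks, and the associated separator $S$ (the shared bag) satisfies $|S|\le tw+1$. In the defining sum $\ket{\psi_G}=\sum_{\mathbf{s}}\ket{(B^\sigma)^T\mathbf{s}}$ the two blocks are correlated \emph{only} through the spin values $\{s_v\}_{v\in S}$ on the separator: fixing these values makes the remaining summation factorize across the cut. Hence the Schmidt rank of $\ket{\psi_G}$ across this bipartition is at most $q^{|S|}\le q^{\,tw+1}$. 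Invoking the optimal-tree result of Ref.~\cite{SDV06}, $\ket{\psi_G}$ therefore admits a TTN whose bond dimension obeys $d\le q^{\,tw+1}=q^{O(tw)}$.

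I would then contract this network against $\ket{\alpha}$. Projecting each leaf onto the corresponding single-qudit state $\ket{\alpha_e}$ is an $O(q\,d)$ operation, after which the contraction proceeds from the leaves towards the root, each of the $O(|V|+|E|)$ internal nodes costing $\mathrm{poly}(d)$. The total effort is thus $\mathrm{poly}(|V|)\cdot q^{O(tw)}$, so whenever $tw=O(\log|V|)$ one has $q^{O(tw)}=\mathrm{poly}(|V|)$ and the whole algorithm runs in polynomial time. For models with local fields, and for the correlation functions, one repeats the construction with $\ket{\varphi_G}=\sum_{\mathbf{s}}\ket{\mathbf{s}}\ket{(B^\sigma)^T\mathbf{s}}$, whose cutspace is generated by $(\one\,|\,B^\sigma)$; here the vertex qudits carry the separator spins directly, so the same separator argument again yields a Schmidt rank bounded by $q^{O(tw)}$ across every cut, and the contraction cost is identical.

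The main obstacle I expect is not the per-cut bound itself, which follows cleanly from the separator factorization above, but the passage from low Schmidt rank across \emph{each individual} bipartition to a single global TTN of uniformly bounded bond dimension that can be contracted in one sweep: this is exactly the content imported from Ref.~\cite{SDV06}, and care is needed to verify that the $q^{\,tw+1}$ bound holds simultaneously along every edge of the decomposition tree and that the guaranteed network is genuinely contractible in $\mathrm{poly}(d)$. A secondary, more technical point is the provision of the tree decomposition in the relevant regime; it may be taken as given, or produced by standard fixed-parameter algorithms when $tw=O(\log|V|)$, without affecting the overall polynomial scaling.
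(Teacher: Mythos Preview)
Your proposal is correct in outline and reaches the same conclusion, but it takes a more direct route than the paper's proof. The paper works through the \emph{cycle matroid} of $G$: it passes from the tree-width of $G$ to the branch-width of the matroid (via the Hlin\v{e}n\'y--Whittle correspondence), uses the connectivity function $\lambda$ to obtain an \emph{exact} Schmidt-rank formula $\chi=q^{\lambda(A)-1}$ across any edge-bipartition, and then invokes Oum's polynomial-time approximation algorithm to produce a branch decomposition with $b_T(\lambda)\le 3t+1$. Your separator argument---fixing the spins on a vertex separator $S$ of size $\le tw+1$ factorizes the defining sum, so the Schmidt rank is at most $q^{|S|}$---is more elementary and gives a tighter bound $q^{tw+1}$ (versus the paper's $q^{3t-1}$ after the approximation step), at the cost of yielding only an upper bound rather than the exact rank.

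The one place where the paper does substantially more work than your sketch is the step you yourself flag as the main obstacle: turning the per-cut Schmidt bound into an explicit, efficiently computable TTN. The paper's part~(iii) constructs the Schmidt vectors at each cut from the coset structure $(C_P^{\perp})^{\perp}/C_P$ and shows how to pair them across the bipartition, so that every tensor entry is computed in polynomial time; invoking Ref.~\cite{SDV06} alone gives existence but not an efficient construction for a state specified only implicitly by its stabilizer. In your framework the analogous constructive step is the standard junction-tree/variable-elimination assignment: index each bond by the $q^{|S|}$ configurations of the separator spins and let each bag-tensor marginalize over the spins internal to that bag. Making this explicit would close the gap. On your secondary point, note that generic FPT tree-width algorithms can be super-polynomial when $tw=O(\log N)$; the paper sidesteps this by using a genuinely polynomial-time \emph{approximation} (Oum), which is why the $3t$ factor appears.
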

The proof is rather technical and is given in  appendix~\ref{sec:Proof-of-Simultability}. Thus, one finds that partition  functions on graphs which are sufficiently  similar to a tree graph (a property made precise by the notion of tree-width)  can be efficiently evaluated. Similar results have been obtained in, e.g.,  Refs.~\cite{MS06}.

\subsection{Relations to measurement based quantum computation}

In this section we discuss how the mappings between classical spin systems and the quantum stabilizer formalism presented in this work, may provide insights in the study of measurement-based (or ``one-way'') quantum computation (MQC).

The one-way quantum computer is a model of quantum computation introduced in Ref.~\cite{RB01}. In contrast to the quantum circuit model, where quantum computations proceed by unitary evolutions, in MQC any computation is realized via single-qubit measurements only. More precisely, a one-way quantum computation essentially consists of two main steps: first, a system of many qubits is prepared in a highly entangled state, the ``2D cluster state''~\cite{BR01}, which is an instance of a stabilizer state. Second, part (possibly all) of the qubits in the system are measured individually. The qubits are measured one after the other in a specific order, and each qubit is measured in a certain basis which may (and typically does) depend on the outcomes of previous measurements. It is this ``measurement pattern'' which determines the quantum algorithm which is implemented.

It was shown in Refs.~\cite{RB01, RBB03} that the one-way quantum computer is a universal model for quantum computer, i.e., it is capable of (efficiently) simulating every quantum computation performed within the quantum circuit model. We refer to Ref.~\cite{RBB03} for more details about MQC.

Note that the model of MQC exhibits a remarkable feature, namely that the entire resource of a quantum computation is carried by the entangled cluster state in which the system is initially prepared. Indeed, as local measurements can only destroy entanglement, all the entanglement present within a one-way quantum computation must be provided by the initial resource state. Therefore, in order to understand the computational power of quantum computers, a study of the properties of 2D cluster states, and other resource states, is called for.

Even though it is by now well-established that the 2D cluster states are universal resource states for MQC (and several other states have also been found to be universal~\cite{Va06a,GE07}, it is not yet fully understood which properties of these states are responsible for their universality. This issue has been the topic of recent investigations~\cite{Va06a,Brprep} (see also \cite{BR06,NGESP06,VDB07,Moprep}), where it was studied under which conditions a given quantum state may be a universal resource for MQC, and under which conditions it does \emph{not} provide any computational speed-up with respect to classical computation. While significant progress has been made in these works, this important problem is far from being fully understood.

What can the present connections between classical spin systems and quantum stabilizer states teach us about MQC? To this end, consider a one-way computation having one of the stabilizer states $|\varphi_{G}\rangle$ or $|\psi_{G}\rangle$ as a resource, where $G$ is some graph. One may then ask which computational power can such resource states provide for MQC -- i.e., which states among the $|\varphi_{G}\rangle$ and $|\psi_{G}\rangle$ are universal resource states, and which states are fully simulatable classically. Next we will see how the relation between these quantum states and the associated classical spin systems, as established in this paper, provides insights in this issue.

To do so, consider Eq.~\eqref{eq:Zeqoverlap}, which identifies overlaps between a resource state $|\eta_G \rangle$ ($\equiv|\psi_{G}\rangle$ or $|\varphi_{G}\rangle$) and a product state $|\alpha\rangle$, as the partition function $Z_{G}$ of the associated classical spin model on the graph $G$. Now note that such overlaps (to be precise, their squared modulus) equal the probabilities of outcomes of local measurements performed on the resource state $|\eta_G\rangle$. Therefore, if it is possible to compute such overlaps (and thus the corresponding measurement probabilities) efficiently, it becomes possible to simulate local measurement processes on such a resource, on a classical computer. Resources for which such efficient classical simulation is possible, by definition cannot offer any computational speed-up as compared to classical computation. Using Eq.~\eqref{eq:Zeqoverlap}, we now see that the problem of computing measurement probabilities of local measurements boils down to the evaluation of the partition function of the associated classical model. In particular, we find that \emph{classical models which are ``solvable''---i.e., their partition function can be efficiently evaluated---give rise to resource states for which the associated probabilities of local measurements can be computed efficiently}. Therefore, the present mappings establish a relation between the solvability of a classical spin systems and the computational power of the associated resource state.

Let us illustrate these relations with some examples for Ising models on different lattice types, with or without magnetic fields (see also Figs.~\ref{fig:nondecorated} and \ref{fig:decorated}). Consider e.g., the simple case of a 1D Ising model with periodic boundary conditions, without external field. This model is known to be solvable: its partition function can be evaluated in a time which scales polynomially with the number of spins. Using our correspondence, the associated quantum state $|\psi_{G}\rangle$ is a GHZ state (see example \ref{exa:StabAndProdPsiGee}). This state is known to be an efficiently classically simulatable resource state for MQC. A similar conclusion can be drawn for the 1D Ising model in the presence of an external field, which is solvable as well. Using our mappings, the associated quantum state $|\varphi_{G}\rangle$ is a 1D cluster state, which is indeed also known to be simulatable (see, e.g.,~\cite{VDB07}). Finally, also the 2D Ising model without field is known to be solvable -- this is Onsager's famous result. The corresponding stabilizer state $|\psi_{G}\rangle$ is the toric code state. And indeed, this state is a simulatable resource -- in fact, the latter property has been shown in Ref.~\cite{BR06} by \emph{using} the relation between this state and the solvable 2D Ising model.

An Ising model which is not solvable is the 2D Ising model in the presence of an external field. In fact, the evaluation of its partition function is an NP-hard problem. The corresponding stabilizer state is the 2D (decorated) cluster state. Interestingly, this state is a \emph{universal} resource for MQC. Therefore, we find that also in this case the computational difficulty of a classical model is reflected in the quantum computational power of the associated quantum state.

\section{\label{sec:Summary-and-Conclusion}Summary and Conclusion}

In this work, we have displayed several mappings from Hamiltonian functions of classical spin systems to states of quantum spin systems. We map the interaction pattern given by the Hamiltonian function of the classical system to quantum stabilizer states and the interaction strengths as well as local field strengths to quantum product states. The overlap of these states yields the macroscopic quantities of the thermal states of the classical spin system: the partition function and correlation functions at freely selectable temperatures (which are also encoded into the product states).

The described mappings circumfere different classes of admissible Hamiltonian functions. From the original and exemplary approach~\cite{VDB06} suited for two-body interactions without local fields, we derive a more generalized mapping capable to yield correlation functions as well as to include local fields. Finally, we introduce a version capable to treat arbitrary Hamiltonian functions with $n$-body terms. Each of these mappings is interesting in its own right and offers an individual viewpoint and individual aspects in the formal approach. The relations between the different mappings were investigated.

We moreover gave several applications of the proposed mappings, namely: a simple derivation of the duality relation of a graph and its dual; a simple derivation of the impact of local symmetries of the stabilizer state on the classical model described by it; a constructive proof of a sufficient criterion for the possibility to efficiently evaluate of the thermal quantities of a classical spin system on a classical computer; and we discussed the relation of the computational accessibility of a classical spin system with the power of a quantum computer.

\acknowledgments{This work was supported by the FWF and the European Union (QICS, OLAQUI, SCALA). MVDN acknowledges support by the  excellence cluster MAP. The authors thank G.~Ortiz, M.A.~Martín-Delgado  and G.~De las Cuevas for discussions.}

\appendix

\section{A proof for the given number of stabilizer elements\label{sec:NumElProof}}

In section \ref{sub:The-basic-principle} we constructed the set of operators $X\left(v\right)Z\left(u\right)$ [see Eq.~\eqref{eq:XvZu}] with $v\in C_{G}\left(q\right)$ and $u\in C_{G}\left(q\right)^{\perp}$, where by construction $C_{G}\left(q\right)$ is the $\mathbb{Z}_{q}$-sub-module of $\mathbb{Z}_{q}^{N}$ that is generated by the rows of the incidence matrix $B^{\sigma}$. For this set to be a stabilizer of the single state $\left|\psi_{G}\right\rangle $ it is necessary that $\left|\psi_{G}\right\rangle $ is a fixed point of these operators (as already shown in the indicated section) and that is has cardinality $q^{N}$. The latter point we show now.

\begin{lem}
The number of independent operators generated by $X\left(v\right)Z\left(u\right)$ [see Eq.~\eqref{eq:XvZu}] with $v\in C_{G}\left(q\right)$ and  $u\in C_{G}\left(q\right)^{\perp}$ is $q^{N}$. \end{lem}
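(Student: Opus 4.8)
The plan is to reduce the counting to a statement about the two $\mathbb{Z}_q$-modules $C_G(q)$ and $C_G(q)^\perp$, and then to invoke the duality of finite abelian groups. First I would observe that, by Lemma~\ref{lem:stabilizer-state}, the operators $X(v)Z(u)$ with $v\in C_G(q)$ and $u\in C_G(q)^\perp$ form a group not containing $-\one$, so that the phase attached to each such operator is fixed by the construction in Eq.~\eqref{eq:XvZu}. Under the encoding $\mathcal{G}_N^q/_\sim \to \mathbb{F}_q^{2N}$ introduced above, the operator $X(v)Z(u)$ is sent to the vector $(v,u)$. Since this encoding is injective modulo prefactors and all nontrivial phases (in particular $-\one$) are absent from the set, two distinct pairs $(v,u)\neq(v',u')$ yield genuinely distinct operators. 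Hence the number of operators equals the number of admissible pairs, namely $|C_G(q)|\cdot|C_G(q)^\perp|$.

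It therefore remains to show that $|C_G(q)|\cdot|C_G(q)^\perp|=q^N$. Here I would use the perfect pairing $\langle v,u\rangle := e^{2\pi i\, v\cdot u/q}$ on $\mathbb{Z}_q^N$, which identifies the finite abelian group $\mathbb{Z}_q^N$ with its own character group. For the submodule $C_G(q)\subseteq \mathbb{Z}_q^N$ the orthogonal complement $C_G(q)^\perp=\{u:\, v\cdot u=0 \bmod q \ \forall v\in C_G(q)\}$ is precisely the annihilator of $C_G(q)$ under this pairing. By the standard duality of finite abelian groups, the annihilator is isomorphic to the character group of the quotient $\mathbb{Z}_q^N/C_G(q)$, and therefore $|C_G(q)^\perp|=|\mathbb{Z}_q^N/C_G(q)|=q^N/|C_G(q)|$. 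Multiplying the two cardinalities gives $|C_G(q)|\cdot|C_G(q)^\perp|=q^N$, as claimed.

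The main obstacle is that for composite $q$ the ring $\mathbb{Z}_q$ is not a field, so one cannot simply invoke the rank--nullity identity $\dim C_G(q)+\dim C_G(q)^\perp=N$; indeed $C_G(q)$ need not even admit a well-defined dimension. The character-theoretic (annihilator) argument sidesteps this entirely, since it uses only that $\mathbb{Z}_q^N$ is a finite abelian group which is self-dual under the given pairing. If one prefers a more computational route, the same order formula follows from the Smith normal form of $B^\sigma$ over $\mathbb{Z}$ reduced modulo $q$, but this requires tracking the interaction of the invariant factors $d_i$ with $q$ through the quantities $\gcd(d_i,q)$, which is considerably more tedious; for $q$ prime the statement is of course immediate from linear algebra over the field $\mathbb{F}_q$.
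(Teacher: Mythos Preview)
Your proof is correct and takes a genuinely different route from the paper's. Both arguments reduce the claim to the identity $|C_G(q)|\cdot|C_G(q)^\perp|=q^N$ and then observe that distinct pairs $(v,u)$ give distinct operators. The difference lies in how that cardinality identity is established. The paper chooses what it calls an orthonormal basis $\{c_i\}$ of $C_G(q)$, defines the projection $\varphi(w)=\sum_i c_i\langle c_i|w\rangle$, identifies $\ker\varphi=C_G(q)^\perp$ and $\mathrm{ran}\,\varphi=C_G(q)$, and then applies the first isomorphism theorem for modules. Your argument instead invokes Pontryagin duality: the pairing $e^{2\pi i\,v\cdot u/q}$ identifies $\mathbb{Z}_q^N$ with its own character group, so $C_G(q)^\perp$ is the annihilator of $C_G(q)$ and hence has order $|\mathbb{Z}_q^N|/|C_G(q)|$.

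What your approach buys is exactly what you flag in your final paragraph: for composite $q$ the ring $\mathbb{Z}_q$ is not a field, so the paper's assertion that ``all sub-modules are free'' and admit orthonormal bases is delicate at best (a generic submodule of $\mathbb{Z}_q^N$ need not be free, and Gram--Schmidt is unavailable). Your character-theoretic argument is insensitive to this, working uniformly for any subgroup of any finite abelian group carrying a perfect self-pairing. The paper's approach, when it does go through, is perhaps more elementary in that it avoids any appeal to duality of abelian groups, but it is less robust. Your remark that Smith normal form gives yet another route (tracking $\gcd(d_i,q)$) is accurate and would in fact repair the paper's argument in the composite case, since for oriented incidence matrices the invariant factors are all $1$.
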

\emph{Proof.} We note that the module $\mathbb{Z}_{q}^{N}$ and hence also all its sub-modules are free modules. Accordingly we can chose a basis, from which the modules or sub-modules are generated respectively. With the scalar product $\left\langle \cdot |\cdot \cdot \right\rangle $ we construct an orthonormal basis $\left\{ c_{i}\right\} $ and with it the following mapping
\[
\varphi:\mathbb{Z}_{q}^{N}\rightarrow\mathbb{Z}_{q}^{N},w
\mapsto\sum_{c_{i}\in C_{G}\left(q\right)}c_{i}\braket{c_{i}}{w}.
\]
This is a module-homomorphism, since for $\lambda,\mu\in\mathbb{Z}_{q}$ and $a,b\in\mathbb{Z}_{q}^{N}$
\begin{multline*}
\varphi\left(\lambda a+\mu b\right)
=\sum_{c_{i}\in C_{G}\left(q\right)} c_{i}\braket{c_{i}}{\lambda a+\mu b}=\\
\lambda \sum_{c_{i}\in C_{G}\left(q\right)} c_{i} \braket{c_{i}}{a} +\mu
\sum_{c_{i}\in C_{G}\left(q\right)} c_{i} \braket{c_{i}}{b} =\lambda\varphi\left(a\right)+\mu\varphi\left(b\right),
\end{multline*}
by the linearity of the scalar product. The kernel of $\varphi$, $\mathrm{ker}\left(\varphi\right)$, is the set $C_{G}\left(q\right)^{\perp}$ because being a (orthonormal) basis $\left\{ c_{i}\right\} $ is independent. The range of $\varphi$, $\mathrm{ran}\left(\varphi\right)$, is the set $C_{G}\left(q\right)$, because for every $w\in C_{G}\left(q\right)$ we have $w=\sum_{i}\lambda_{i}c_{i}$ and $\varphi\left(w\right)=\sum_{c_{i}\in  C_{G}\left(q\right)}\sum_{j}\lambda_{j}c_{i}\braket{c_{i}}{c_{j}} =w$. The homomorphism $\varphi$, as any module-homomorphism induces an isomorphism
\[
\mathbb{Z}_{q}^{N}/\mathrm{ker}\left(\varphi\right)
\tilde{\longrightarrow}\mathrm{ran}\left(\varphi\right),
\]
which provides us with the formula
\[
\frac{\bigr|\mathbb{Z}_{q}^{N}\bigl|}{\bigl|C_{G}\left(q\right)^{\perp}\bigr|}
=\frac{\bigr|\mathbb{Z}_{q}^{N}\bigl|}{\bigl|\mathrm{ker}\left(\varphi\right)\bigr|}
=\bigl|\mathrm{ran}\left(\varphi\right)\bigr|=\bigl|C_{G}\left(q\right)\bigr|
\]
relating the number of elements in these sets. This implies
\begin{equation}
q^{N}=\bigl|C_{G}\left(q\right)^{\perp}\bigr|\bigl|C_{G}\left(q\right)\bigr|.
\label{eq:qnccperp}
\end{equation}
The number on the r.h.s.\ equals the number of the constructed operators $X\left(v\right)Z\left(u\right)$, which are, as a set, isomorphic to
\[
\left\{ \left(c,s\right)|c\in C_{G}\left(q\right),s\in 
C_{G}\left(q\right)^{\perp}\right\}.
\]
This concludes the proof. $\hfill\square$

\section{\label{sec:TTNtheory}Tensor tree networks and tensor tree states}

We follow an approach of Shi, Duan and Vidal and consider the description of states in terms of a tensor network with tree structure~\cite{SDV06,MS06}. We now want to give a short overview of fundamental definitions and theorems concerning these tensor tree states (TTS).

\subsection{Basic definitions}

The building block of a tensor network are complex $d_{1}\times d_{2}\times...\times d_{n}$ tensors with elements $A_{i_{1}i_{2}...i_{n}}$. The number $n$ is called the \emph{rank} of the tensor $A$ and the number $d_{k}$ is called the \emph{rank of the index} $i_{k}$. The maximal number $d$ that the indices can assume, $d=\max_{k}d_{k}$, is called the \emph{dimension} of the tensor. A summation over two indices $i_{l}$ and $j_{l'}$ of common rank of two tensors $A^{[r]}$ and $A^{[s]}$,
\begin{multline*}
A_{i_{1}i_{2}...\hat{i_{l}}...i_{n}j_{1}j_{2}...\hat{j_{l'}}...j_{n'}}^{[r,s]}\\
=\sum_{k}A_{i_{1}i_{2}...(i_{l}=k)...i_{n}}^{[r]}A_{j_{1}j_{2}...(j_{l'}=k)...j_{n'}}^{[s]},
\end{multline*}
is called a contraction of the indices $i_{l}$ and $j_{l'}$. A set of tensors together with pairs of indices that are to be contracted is called a \emph{tensor network}. The maximal dimension $D$ of all tensors, $D=\max_{A}d[A]$, is called the \emph{dimension of the network}. Tensor networks can be represented by graphs: the each vertex of the graph corresponding to one tensor of the network and each edge corresponding to one pair of contracted indices. The indices to be contracted are referred to as \emph{internal} indices and the other ones as \emph{open}. The notation of graph theory carry over to the tensor networks, e.g., we talk about ``subcubic'' tensor trees. A tree graph (network) is called \emph{subcubic} if each vertex (tensor) has degree (rank) 1 or 3. The vertices with rank 1 are called \emph{leaves}.

It is possible to write the coefficients $A_{\mathbf{s}}$ of a generic pure $N$-qudit state $\left|\varphi\right\rangle = \sum_{\mathbf{s}}A_{\mathbf{s}}\left|\mathbf{s}\right\rangle $, where $\left\{ \left|\mathbf{s}\right\rangle \right\} $ is a product basis, as a contraction of a fixed set of tensors. Trivially, one tensor of rank $N$ and a dimension equal to the number of states of the qudits is sufficient. In fact, representations for any graph-structure can be found, provided the rank of the internal indices being sufficiently large. Depending on the internal structure of the state to be represented, even representations with internal indices of comparatively small rank might be found, hence reducing the number of complex parameters representing the network. This displays the principle that the more structure there is in the state, the less information is (potentially) needed to settle the remaining degrees of freedom. Conversely, any tensor network with $N$ open indices can be used to define a pure $N$-qubit state.

\begin{figure*}
\includegraphics[width=0.8\columnwidth]{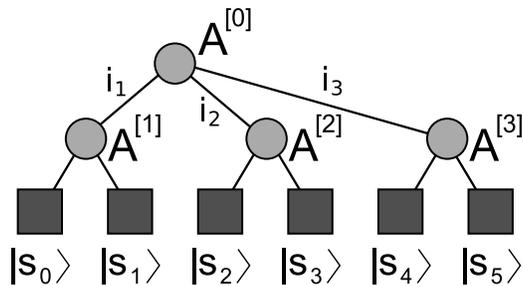}

\caption{\emph{A subcubic tensor network.} The vertices correspond to  tensors (circles) or physical sites (squares, ``leaves'') respectively. Edges indicate contractions over common indices. The bipartition of a state corresponding to the cut of a single vertex cannot have any Schmidt-rank higher than the rank of the connecting indices. The class of states generated by tensor networks covers all possible pure states provided that the dimension of the network is sufficiently large.} 

\label{fig:trees}
\end{figure*}

As an illustrative example, the tree depicted in Fig.~\ref{fig:trees}~a) corresponds to the state
\[
\ket{\tau}=\sum_{\mathbf{s}}\sum_{i_{j}}A_{i_{1}i_{2}i_{3}}^{0}A_{i_{1}s_{0}s_{1}}^{1}
A_{i_{2}s_{2}s_{3}}^{2}A_{i_{3}s_{4}s_{5}}^{3}\ket{\mathbf{s}}.
\]
Another well known example of subcubic tensor tree states are the matrix product states (MPS) with open boundary conditions. They have the simple form
\begin{multline*}
\ket{\mbox{MPS}}=\sum_{\mathbf{s}}\sum_{i_{j}}A_{i_{0}s_{0}s_{1}}^{0}
\ket{s_{0}s_{1}}A_{i_{0}i_{1}s_{2}}^{1}\ket{s_{2}}A_{i_{1}i_{2}s_{3}}^{2}\ket{s_{3}}\\
\times...A_{i_{N}s_{N-1}s_{N}}^{N-1}\ket{s_{N-1}s_{N}}.
\end{multline*}

\subsection{Efficient scaling}

A subcubic TTN with $N$ open indices (representing a subcubic TTS of $N$ qudits) and dimension $D$ depends on at most $O\left(ND^{3}\right)$ complex parameters. Thus, a family of states over $N$ qudits whose TTN-description has a dimension scaling polynomially in $N$ allows for a description with a number of parameters scaling polynomially in $N$. Concerning the contractions of TTS with product states we obtain the following result, (see also Refs.~\cite{SDV06,MS06})

\begin{lem}
\label{lem:overlap_complexity}Calculating the overlap of a complete product state of $N$ qudits with a subcubic TTS of dimension $D$ over $N$ qudits has a complexity of at most $O\left(ND^{3}\right)$.
\end{lem}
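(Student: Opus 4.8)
The plan is to exploit the tree structure of the network together with the subcubic bound on the tensor ranks, organizing the whole overlap as a single leaf-to-root sweep (a message-passing contraction) in which no intermediate object ever exceeds dimension $D^{3}$. First I would observe that the sought overlap is itself a \emph{complete} contraction of a network that is still a tree: writing the product state as $\bigotimes_{j}\ket{\phi_{j}}$ and attaching each single-site vector $\ket{\phi_{j}}$ to the corresponding open index of the subcubic TTS merely hangs a rank-$1$ tensor off the end of each open leg, which creates no loops. The augmented network, call it $T$, is therefore again a tree, all of whose indices are now internal and each of rank at most $D$. It thus suffices to contract $T$ fully into a scalar.

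Next I would bound the number of tensors to be processed. In a subcubic tree every vertex has degree $1$ (a leaf) or degree $3$; applying the handshake lemma to a tree with $L$ leaves forces exactly $L-2$ internal vertices, so the total number of tensors, and likewise the number of bonds to be contracted, is $O(N)$. This linear count is what will let a per-tensor cost bound accumulate to the claimed overall bound.

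The contraction itself proceeds by rooting $T$ at an arbitrary tensor and visiting the vertices in post-order. For each vertex $v$, once all of its children have been processed, I would form a \emph{message} to its parent: the tensor $A^{[v]}$ with the incoming child-messages contracted on the respective child-bonds, and with any open leg contracted against the relevant $\ket{\phi_{j}}$, leaving a vector supported only on the bond toward the parent. The value of $\braket{\phi}{\tau}$ is then read off at the root after its incoming messages are contracted.

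The quantitative heart of the argument — and the step I expect to require the most care — is the per-tensor cost estimate, which is exactly where the subcubic hypothesis enters. Since $v$ has total rank at most $3$, the tensor $A^{[v]}$ stores at most $D^{3}$ entries, and contracting it against the (at most two) incoming child-vectors of length $\le D$ costs $O(D^{3})$ for the first index and $O(D^{2})$ for the second, while contracting an open leg against $\ket{\phi_{j}}$ is no more expensive; hence each of the $O(N)$ tensors is handled in $O(D^{3})$ time, for a total of $O(ND^{3})$. The main obstacle is the scheduling argument that justifies this bound: one must verify that the post-order (leaf-to-root) order guarantees that when $v$ is reached each child has already been collapsed to a single vector, so that no intermediate tensor of rank exceeding $3$ — and thus no object larger than $D^{3}$ — is ever formed. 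A careless contraction order would merge several high-rank tensors at once and destroy the estimate, so pinning down this ordering is the crux.
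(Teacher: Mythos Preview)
Your argument is correct and is essentially the same as the paper's: both contract the tree from the leaves inward so that every intermediate object stays at rank $\le 3$, and both count $O(N)$ internal tensors each handled in $O(D^{3})$. The only cosmetic difference is the scheduling---you root the tree and sweep in post-order (message passing), whereas the paper repeatedly locates an internal tensor adjacent to at least two leaves, contracts it, and observes that the result is again a leaf in a smaller subcubic tree; the two orderings are equivalent for the purpose of the bound.
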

\emph{Proof.} Let the product state be 
\[
\ket{\alpha}=\bigotimes_{l\in\mbox{\tiny{leaves}}}\ket{\alpha_{l}}
\]
and the TTS be $\ket{\tau}$. The calculation of $\braket{\alpha}{\tau}$ is a contraction of a subcubic tensor network where the leaves are tensors with values $\braket{\alpha_{l}}{l}$. In a subcubic tree, there is at least one tensor with at least two leaves attached. A contraction of this tensor with its attached leaves requires an effort of order $D^{3}$. If this tensor has three leaves attached we are done. If not, this tensor will now be a leaf tensor attached to one other tensor and the tree will still be subcubic. As before, the tree will now have at least one tensor which has at least two leaves attached. We continue this procedure and because there are $N-2$ tensors in the tree, we end up with an effort of the order $ND^{3}$.$\hfill\square$

Hence the contraction of a family of states over $N$ qudits, whose TTN-description has a dimension scaling polynomially in $N$, with product states of the appropriate Hilbert spaces scales polynomially in $N$.

\subsection{Entanglement in TTS}

We want to state one more important result concerning the entanglement content of TTS with dimension $D$. Since the rank of the index corresponding to an edge connecting two tensors is limited by this number $D$, only $D$ linearly independent combinations of states corresponding to the sub-trees attached to this edge are possible. Hence we have

\begin{lem}
\label{lem:limited_schmidt_number}The number of Schmidt coefficients of a TTS with dimension $D$ in a bipartition of the qudits that corresponds to cutting exactly one edge in the graph of the corresponding TTN is limited by $D$. This Schmidt number can be reached. $\hfill\square$
\end{lem}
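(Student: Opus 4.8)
The plan is to exploit the defining structural property of a tree: removing a single edge $e$ from the tree disconnects the network into exactly two components, $T_L$ and $T_R$, joined by that one edge alone. First I would label the contracted index carried by $e$ as $k$, ranging over $d_e\le D$ values. Contracting all tensors lying in $T_L$ while leaving $k$ open produces, for each value of $k$, a tensor whose remaining open indices are precisely the physical (leaf) indices on the left; call the resulting vectors $\ket{\phi^L_k}$ in the Hilbert space $\mathcal{H}_L$ of the left qudits. Contracting $T_R$ analogously gives vectors $\ket{\phi^R_k}$ in $\mathcal{H}_R$. Because $e$ is the \emph{only} edge crossing the cut, the full contraction factorizes as a single sum over $k$,
\[
\ket{\tau}=\sum_{k=1}^{d_e}\ket{\phi^L_k}\otimes\ket{\phi^R_k}.
\]

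This immediately bounds the Schmidt rank. Writing the coefficients as a matrix $A_{\mathbf{s}_L,\mathbf{s}_R}$ with row index $\mathbf{s}_L$ (left qudits) and column index $\mathbf{s}_R$ (right qudits), the factorization reads $A_{\mathbf{s}_L,\mathbf{s}_R}=\sum_k(\phi^L_k)_{\mathbf{s}_L}(\phi^R_k)_{\mathbf{s}_R}$, i.e.\ $A=\Phi_L\Phi_R^T$ with $\Phi_L$ of size $(\dim\mathcal{H}_L)\times d_e$ and $\Phi_R$ of size $(\dim\mathcal{H}_R)\times d_e$. Hence $\mathrm{rank}(A)\le d_e\le D$. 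Since the number of nonzero Schmidt coefficients across the $L{:}R$ bipartition equals $\mathrm{rank}(A)$, equivalently the rank of the reduced density matrix $\rho_L=\mathrm{Tr}_R\,\ketbrad{\tau}$, the Schmidt number is at most $D$.

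For saturation I would construct a TTS in which the two families are linearly independent with $d_e=D$. Concretely, choose the tensors in $T_L$ and $T_R$ so that $\{\ket{\phi^L_k}\}_{k=1}^{D}$ and $\{\ket{\phi^R_k}\}_{k=1}^{D}$ are each orthonormal, which requires only enough physical dimension in each component, e.g.\ one leaf of local dimension $\ge D$ on each side. Then the displayed expression is already a Schmidt decomposition with $D$ equal, nonzero coefficients, so the Schmidt number equals $D$ exactly and the bound is attained.

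The one step that needs care, and the only place the tree hypothesis is essential, is the factorization $\ket{\tau}=\sum_k\ket{\phi^L_k}\otimes\ket{\phi^R_k}$. I would justify it by noting that the indices summed in the contraction split into those internal to $T_L$, those internal to $T_R$, and the single crossing index $k$; once $k$ is fixed the first two sums are independent and separate into $\ket{\phi^L_k}$ and $\ket{\phi^R_k}$ (a short induction on the contraction order makes this precise). Were the network to contain a cycle crossing the cut, several indices would cross and one would obtain only the weaker bound $D^{m}$ with $m$ the number of crossing edges; it is precisely acyclicity that guarantees a single crossing edge and hence the clean bound $D$.
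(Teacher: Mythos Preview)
Your argument is correct and follows the same idea as the paper: the paper's entire justification is the single sentence preceding the lemma, namely that the index on the cut edge has rank at most $D$, so only $D$ linearly independent combinations of the two sub-tree states can occur. Your write-up simply makes this explicit via the factorization $\ket{\tau}=\sum_{k=1}^{d_e}\ket{\phi^L_k}\otimes\ket{\phi^R_k}$ and the corresponding rank bound on the coefficient matrix, and adds a concrete saturation example that the paper omits; there is no substantive difference in approach.
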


\section{\label{sec:Proof-of-Simultability}A proof of theorem 
\ref{thm:simul_comprehensible}}

We will prove theorem \ref{thm:simul_comprehensible} using the machinery developed in the sections \ref{sec:Encoding-classical-spin}  and \ref{sec:Extending-the-formalism}. The underlying idea of the proof is to map the classical spin problem (of finding the partition function) to the corresponding quantum problem (of finding an overlap), which is then solved by a simulation on a classical computer. To treat the simulation aspect, we need some results from the theory of tree tensor networks~\cite{SDV06,MS06}. We will use the tree tensor networks to encode the stabilizer states which are the images of the interaction patterns of the classical spin systems. The necessary notation and theorems have been summarized in the appendix \ref{sec:TTNtheory}. With the language developed there, we reformulate theorem \ref{thm:simul_comprehensible}.

\subsection{Simulation complexity for the states $\left|\varphi_{G}\right\rangle $ and $\left|\psi_{G}\right\rangle $}

\begin{thm}
\label{thm:scaling}For the states $\ket{\psi_{G}}$ and $\ket{\varphi_{G}}$ (as defined above) a tree tensor network description can be computed with an effort growing polynomially in the number of classical spin sites $N$, provided that the tree-width of the graph $G$ grows logarithmically in $N$. This tree tensor network description allows to compute the overlaps $\braket{\alpha}{\psi_{G}}$ and $\braket{\alpha'}{\varphi_{G}}$ of these states with product states with an effort that grows polynomially in $N$.
\end{thm}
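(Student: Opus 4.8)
The plan is to reduce the statement to the entanglement structure of the two stabilizer states and then invoke the tree-tensor-network machinery of Appendix~\ref{sec:TTNtheory}. The second assertion is the easy one: once a subcubic TTN of dimension $D$ for $\ket{\psi_{G}}$ (or $\ket{\varphi_{G}}$) is in hand, Lemma~\ref{lem:overlap_complexity} yields the overlap with any complete product state in time $O(ND^{3})$, which is polynomial in $N$ as soon as $D=\mathrm{poly}(N)$. Hence the whole theorem rests on producing, in polynomial time, a subcubic TTN whose dimension is polynomial in $N$ whenever the tree-width $w$ of $G$ obeys $w=O(\log N)$. By Lemma~\ref{lem:limited_schmidt_number} the minimal admissible $D$ for a given tree skeleton equals the largest Schmidt rank occurring across the bipartitions obtained by cutting single edges of the skeleton, so the core of the argument is a bound of the form $D\le q^{O(w)}$.

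First I would use that both $\ket{\psi_{G}}=\sum_{\mathbf{s}}\ket{(B^{\sigma})^{T}\mathbf{s}}$ and $\ket{\varphi_{G}}=\sum_{\mathbf{s}}\ket{\mathbf{s}}\ket{(B^{\sigma})^{T}\mathbf{s}}$ are uniform superpositions over the $\mathbb{Z}_{q}$-row space of an incidence-type matrix (that of $G$, respectively of the decorated graph with matrix $(\one|B^{\sigma})$). For such a code state the Schmidt rank across any bipartition of the physical qudits into sets $A$ and $\bar{A}$ is $|C_{A}/C_{A}^{0}|$, where $C_{A}$ is the projection of the code onto the $A$-coordinates and $C_{A}^{0}$ the sub-code supported entirely on $A$. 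Since the code is generated by the rows indexed by the classical vertices, and a generating row that does not straddle the cut contributes either $0$ or an element of $C_{A}^{0}$, the quotient $C_{A}/C_{A}^{0}$ is generated by the projections of exactly those rows whose support meets both $A$ and $\bar{A}$. Consequently the Schmidt rank is at most $q^{b}$, where $b$ counts these \emph{boundary vertices}.

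Next I would convert a tree decomposition of $G$ of width $w$ into the TTN skeleton, following the junction-tree construction underlying the tree-width simulation results~\cite{SDV06,MS06}. Each edge-qudit is attached as a leaf to a bag containing both its endpoints, each vertex-qudit (in the $\ket{\varphi_{G}}$ case) to a bag containing that vertex, and every node of the decomposition tree is binarized so that the skeleton becomes subcubic. The key combinatorial point, which I would verify using the connectedness of the subtree of bags containing a fixed vertex, is that every edge of the binarized skeleton induces a bipartition whose boundary vertices all lie in a single bag, hence number at most $w+1$. Together with the Schmidt-rank bound this gives $D\le q^{w+1}$, polynomial in $N$ when $w=O(\log N)$. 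The tensors can then be written down explicitly within the same budget, since each node only enforces consistency of the $\le w+1$ spin variables on its separator and fixes its attached leaf values as functions of the endpoint spins present in its bag; thus each of the $O(N)$ tensors has at most $q^{O(w)}=\mathrm{poly}(N)$ entries.

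The main obstacle I anticipate is the book-keeping of the binarization step: one must guarantee that the auxiliary internal edges created when splitting a high-degree bag still correspond to separators contained in a single bag, so that the $q^{w+1}$ bond bound holds \emph{uniformly} over all skeleton edges and not merely over those of the original decomposition. This is precisely where the connected-subtree property of tree decompositions does the real work, and it is what makes the otherwise routine counting go through. A secondary point worth flagging is the cost of obtaining the tree decomposition itself; I would either take it as given or invoke a known algorithm producing a decomposition of width $O(w)$, folding its cost into the overall polynomial estimate.
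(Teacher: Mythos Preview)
Your argument is correct and complete in outline, but it takes a genuinely different route from the paper. You work directly with a tree decomposition of $G$: edge-qudits (and, for $\ket{\varphi_G}$, vertex-qudits) are hung on bags containing their endpoints, the decomposition is binarised, and the Schmidt rank across any skeleton edge is bounded by $q^{b}$ with $b$ the number of generating rows that straddle the cut. The separator property then forces all such ``boundary vertices'' into a single bag, giving $D\le q^{w+1}$. This is essentially the junction-tree argument behind the Markov--Shi type results~\cite{MS06,SDV06}, specialised to linear-code states.

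The paper instead passes through the \emph{branch-width of the cycle matroid} of $G$. It proves an exact Schmidt-rank formula $\chi=q^{\lambda(A)-1}$ in terms of the matroid connectivity function $\lambda$ (Lemma~\ref{lem:chieins}), links matroid branch-width to graph tree-width via~\cite{HW06}, and invokes Oum's approximation algorithm~\cite{OPhD} to produce a branch decomposition with $b_T(\lambda)\le 3t+1$, yielding $D\le q^{3t-1}$. The Schmidt vectors are then constructed explicitly as coset sums. Your approach is more elementary and gives a tighter exponent (modulo the quality of the tree-decomposition algorithm you invoke); the paper's approach buys an \emph{exact} identification of the Schmidt rank with a graph-theoretic quantity and an explicit Schmidt basis, which is conceptually sharper and feeds into the later discussion of the general encoding schemes. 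Your quotient $|C_A/C_A^{0}|$ is in fact equal to the paper's $q^{|V|+c(E)-c(A)-c(E\setminus A)}$, so the two routes agree where they overlap.
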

\emph{Proof.} The proof consists of three parts. i) In a preparatory step, we will summarize the ties between the tree-width of $G$ and the branch-width of its cycle matroid. ii) We will then use this result to derive a bound for the Schmidt-rank of a TTN-description of the states $\ket{\psi_{G}}$ and $\ket{\varphi_{G}}$ and hence derive an upper bound for the computational effort to compute the overlaps $\braket{\alpha}{\psi_{G}}$ and $\braket{\alpha'}{\varphi_{G}}$. iii) Finally we give an algorithm to find the (tensor-)coefficients in the TTN-description. Parts ii) and iii) have been given in a similar form for $q=2$ already in Ref.~\cite{VDB07}.

i) Let us first fix some notation, which can be found in more detail, together with missing proofs, for example in the references~\cite{GR01,HW06,RS84}. 

\begin{defn}
(matroid) A \emph{matroid} is a set $\Omega$ together with a rank function $rk$ on its subsets. A \emph{rank function} fulfills the following properties
\end{defn}
\begin{itemize}
\item If $A$ and $B$ are subsets of $\Omega$ and $A\subset B$, then $rk\left(A\right)\leq rk\left(B\right)$. \item For all subsets $A$ and $B$ of $\Omega$,
\[ 
rk\left(A\cap B\right)+rk\left(A\cup B\right)\leq rk\left(A\right)+rk\left(B\right).
\]

\item If $A\subset\Omega$, then $rk\left(A\right)\leq\left|A\right|.$
\end{itemize}

A special instance of a matroid is the \emph{set of columns} of the incidence matrix $B^{\sigma}$ of a graph $G=\left(V,E\right)$, called the \emph{cycle matroid} $M\left(G\right)$ of the graph $G$. 

A natural choice of a rank function on a cycle matroid is the dimension of the span of the column vectors. Following the ideas of lemma \ref{lem:prefactor} we deduce that with this choice of rank function and for a subset of column vectors $A$ we have the relation $rk\left(A\right)=\left|V\right|-c$, where $c$ is the number of connected components in the graph $G_{A}=\left(V,A\right)$.

\begin{defn}
(connectivity function) With the rank function $rk$ of the cycle matroid $M\left(G\right)$ of the graph $G=\left(V,E\right)$ we define the \emph{connectivity function $\lambda$ on a subset of edges $A\subset E$
by} 
\[
\lambda\left(A\right):=rk\left(A\right)+rk\left(E-A\right)-rk\left(E\right)+1.
\]

\end{defn}
It is a symmetric function with respect to $A\leftrightarrow E-A$. An important observation is that with $rk\left(A\right)=\left|V\right|-c(A)$ follows the equality
\begin{equation}
\lambda\left(A\right):=\left|V\right|+c\left(E\right)-c\left(A\right)
-c\left(E-A\right)+1,\label{eq:lambda-and-c}
\end{equation}
where $c\left(E\right),c\left(A\right)$ and $c\left(E-A\right)$ are the numbers of the connected components in the respective subsets of edges.

\begin{defn}
(branch decomposition) Let $T$ be a subcubic tree (see appendix \ref{sec:TTNtheory}
and Fig.~\ref{fig:trees}) with edges $E$. The deletion of an edge $e\in E$ of the tree corresponds to a bipartition of the set of leaves of the tree, because the deletion divides the tree into two (connected) components. The set of bipartitions of leaves induced by a tree is called a \emph{branch decomposition} of the leaves.
\end{defn}
In the following we will identify the edges of our (decorated) interaction graph $G$ describing the classical spin system with the leaves of a suitable tree. The set of possible trees then corresponds to a set of different branch decompositions. With the connectivity function we can define the branch width of a branch decomposition.

\begin{defn}
(branch-width) The \emph{branch-width $b_{T}\left(\lambda\right)$} associated with the branch decomposition induced by a tree \emph{$T$} with edges $E_{T}$ is the value
\[
b_{T}\left(\lambda\right):=\max_{e\in E_{T}}\lambda\left(A_{e}\right),
\]
where $A_e \subset E(T)$ is the subset of edges belonging to one of the  remaining contiguous sub-trees of tree $T$ obtained by deleting  edge $e \in E(T)$ from tree $T$ [note the  symmetry $\lambda(A_e)=\lambda(T-A_e)$.] The \emph{branch-width of the cycle matroid} of $G$ is defined as
\[
b\left(\lambda\right):=\min_{T}b_{T}\left(\lambda\right).
\]

With this notation, we formulate the following theorems, to be found, together with the proofs, in Ref.~\cite{HW06}.
\end{defn}
\begin{lem}
(Theorem 3.2 in Ref.~\cite{HW06}) Let $G$ be a graph with at least one edge, and let $M\left(G\right)$ be the cycle matroid of $G$. Then the tree-width of $G$ equals the tree-width of $M\left(G\right)$.$\hfill\square$
\end{lem}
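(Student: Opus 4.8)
The plan is to treat this as a purely matroid-theoretic fact and prove it by reducing both sides to the rank identity $rk(A)=|V|-c(A)$ recorded above, which is exactly the bridge converting the \emph{algebraic} connectivity function $\lambda$ into the \emph{combinatorial} data of how vertices are shared across an edge cut. First I would unfold the two competing notions of width: the ordinary tree-width of $G$, defined through tree decompositions with vertex bags $\{B_t\}_{t\in V(T)}$ (width $=\max_t|B_t|-1$, subject to the edge-covering and running-intersection conditions), and the Hlin\v{e}n\'y--Whittle tree-width of the cycle matroid $M(G)$, defined through a tree $T$ carrying the ground set $E$ at its nodes, where the width of a node is measured through the connectivity function $\lambda$ of the edge partitions induced at that node. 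Equality would then be established by the two inequalities $\mathrm{tw}(M(G))\le \mathrm{tw}(G)$ and $\mathrm{tw}(G)\le \mathrm{tw}(M(G))$, each realised by transporting one kind of decomposition into the other.

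For the direction $\mathrm{tw}(M(G))\le \mathrm{tw}(G)$ I would start from an optimal tree decomposition $(T,\{B_t\})$ of $G$ and build a matroid decomposition on (a subdivision of) the same tree: each edge $e=(u,v)\in E$ is placed at a node $t$ whose bag satisfies $\{u,v\}\subseteq B_t$, which exists by the edge-covering axiom. Deleting an edge of $T$ splits $E$ into a set $A$ and its complement, and using Eq.~\eqref{eq:lambda-and-c} one checks that $\lambda(A)$ is governed by the number of vertices incident to edges on \emph{both} sides of the split, up to a correction coming from the component counts $c(E),c(A),c(E-A)$. Since the running-intersection property forces every such shared vertex to lie in the bag straddling the cut, the node-width is dominated by the corresponding bag size, giving the inequality.

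For the reverse inequality $\mathrm{tw}(G)\le \mathrm{tw}(M(G))$ I would invert the construction: from an optimal matroid decomposition I would recover vertex bags by assigning to each node the set of vertices incident to edges that appear on more than one side of the cuts meeting at that node, again using $rk(A)=|V|-c(A)$ to bound the size of these reconstructed bags through $\lambda$. One then has to verify that the resulting family of bags genuinely satisfies the tree-decomposition axioms, most delicately the connected-subtree (running-intersection) condition for every vertex, so that it is an admissible tree decomposition of $G$ and not merely a labelling of $T$.

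The hard part will be the bookkeeping of additive constants and of isolated or one-sided vertices needed to make the two widths \emph{exactly} equal rather than equal up to an additive constant. As the triangle already shows, $\lambda(A)$ coincides with the number of shared vertices in some cuts but differs from it by the $+1$ term and the component corrections $c(E)-c(A)-c(E-A)$ in others; every vertex that touches only one side of all cuts, and every spurious isolated component, must be tracked so that node-widths and bag-sizes line up on the nose. This reconciliation is most intricate in the $\mathrm{tw}(G)\le \mathrm{tw}(M(G))$ direction, where a valid graph decomposition has to be reconstructed from the weaker matroid data, and it is precisely this delicacy that makes the clean equality nontrivial; for the quantitative statement I would ultimately defer to the detailed argument of Ref.~\cite{HW06}, whose Theorem~3.2 is quoted here.
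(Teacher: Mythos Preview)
The paper does not prove this lemma at all: it is quoted verbatim from Ref.~\cite{HW06} as Theorem~3.2 there, and the $\hfill\square$ after the statement signals that no proof is supplied. The surrounding text makes this explicit: ``we formulate the following theorems, to be found, together with the proofs, in Ref.~\cite{HW06}.'' So there is no in-paper proof to compare your proposal against.

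Your sketch is a reasonable outline of the Hlin\v{e}n\'y--Whittle argument (transport an optimal decomposition in each direction, use the rank identity $rk(A)=|V|-c(A)$ to convert between bag sizes and connectivity values), and you correctly identify where the work lies: the exact bookkeeping of constants and the verification of the tree-decomposition axioms in the matroid-to-graph direction. Since you yourself end by deferring to Ref.~\cite{HW06} for the precise equality, your proposal and the paper land in the same place: the result is imported, not proved locally.
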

and

\begin{lem}
(Theorem 4.2 in Ref.~\cite{HW06}) Let $M$ be a matroid of tree-width $t$ and branch-width $b$. Then
\[
b-1\leq t\leq\mbox{max}\left(2b-2,1\right).
\]
$\hfill\square$
\end{lem}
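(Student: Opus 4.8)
The plan is to prove the two inequalities separately, in each case by explicitly converting an optimal decomposition of one type into a decomposition of the other type on essentially the same underlying tree, and then bounding the width of the construction. The single analytic engine for both width bounds is the submodularity of the rank function $rk$ (the second axiom in the Definition of a matroid), which passes directly to the connectivity function $\lambda(A)=rk(A)+rk(E-A)-rk(E)+1$ used to measure branch-width. Recall that branch-width is computed on a subcubic tree whose leaves are the ground-set elements $E$, with width $\max_{e}\lambda(A_e)$, whereas (in the sense of Ref.~\cite{HW06}) tree-width is computed on a \emph{tree-decomposition}: a tree $T$ together with an assignment of the elements of $E$ to its nodes, with a node-width defined through $\lambda$ at each node. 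I keep the latter definition at the qualitative level, as it is the notion imported from Ref.~\cite{HW06} and not spelled out in the excerpt.

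For the lower bound $b-1\le t$, equivalently $b\le t+1$: first I would take a tree-decomposition of $M$ of width $t$ and make its tree subcubic, by expanding every node of degree greater than $3$ into a short path of degree-$3$ nodes and hanging the ground-set elements assigned to a node as new leaves; this reshaping does not raise the relevant node quantities. The resulting subcubic tree with leaf set $E$ is a branch decomposition, and each of its edges $e$ induces a separation $(A_e,E-A_e)$ that is \emph{displayed} at some node of the original tree-decomposition. The point is then that submodularity of $\lambda$ bounds $\lambda(A_e)$ by the node-width at that node plus $1$, hence by $t+1$. Maximizing over edges gives a branch decomposition of width at most $t+1$, so $b\le t+1$.

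For the upper bound $t\le\max(2b-2,1)$: I would start from an optimal branch decomposition, a subcubic tree $T$ with leaf set $E$ and $\lambda(A_e)\le b$ for every edge $e$, and reinterpret $T$ (after dealing with the leaf edges, which each carry a single element) as the tree of a tree-decomposition, assigning each element to the internal node incident to its former leaf edge. It then remains to bound the node-width at each internal node $v$. At $v$ the three incident edges display three separations, each of connectivity at most $b$; I would combine them through two applications of the submodular inequality for $\lambda$ (equivalently for $rk$), which controls the total rank interaction among the three parts and yields node-width at most $2b-2$ when $b\ge 2$. The degenerate range $b\le 1$ (matroids that are direct sums of loops and coloops, where every separation is trivial) must be handled separately and gives $t\le 1$; together with the generic estimate this produces exactly $\max(2b-2,1)$.

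The hard part will be the upper-bound direction, and within it the precise accounting of the node-width at an internal node of the branch-tree. Unlike the graph case, where a separation has a genuine vertex boundary and one counts shared boundary vertices, here the relevant ``boundary size'' is the purely rank-theoretic quantity $\lambda-1$, so the estimate $\le 2b-2$ has to come entirely from the monotonicity and submodularity axioms applied to the three crossing separations meeting at $v$; pinning the constant exactly (rather than settling for a weaker $3b$-type bound) is the delicate step, as is checking that the proposed node-to-element assignment really is a valid tree-decomposition in the sense of Ref.~\cite{HW06}. The small-$b$ cases needed to justify the $\max(\cdot,1)$ are routine but must still be verified to close the argument.
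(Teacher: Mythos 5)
The first thing to note is that the paper does not prove this lemma at all: it is imported verbatim as Theorem 4.2 of Ref.~\cite{HW06}, stated with the attribution and closed immediately with $\square$, i.e., used as a black box. So there is no proof in the paper to match your attempt against; your argument has to stand on its own, and as written it does not. What you give is a strategy outline in which exactly the steps that constitute the theorem are deferred: the definition of matroid tree-width (the node-width formula) is never pinned down --- you explicitly keep it ``at the qualitative level'' --- and both inequalities are asserted to follow from ``applications of submodularity'' without the accounting being carried out. For a statement whose entire content is a pair of sharp numerical bounds, that accounting \emph{is} the proof.

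Moreover, the deferred step genuinely fails in the naive form you describe. Fix the Hlin\v{e}n\'y--Whittle node-width: for a node $x$ whose removal from the decomposition tree leaves components with element sets $F_1,\dots,F_d$, the width of $x$ is $\sum_{i} rk(E\setminus F_i)-(d-1)\,rk(E)$. Now run your upper-bound construction, assigning each element to the internal node of the branch-tree adjacent to its leaf. At an internal node $x$ with one pendant element $f$ and two internal neighbours displaying $F_1,F_2$ with $\lambda(F_i)\le b$, the node width is $rk(E\setminus F_1)+rk(E\setminus F_2)-rk(E)$; using $rk(E\setminus F_i)\le rk(E)+b-1-rk(F_i)$ and $rk(F_1)+rk(F_2)\ge rk(F_1\cup F_2)=rk(E\setminus\{f\})\ge rk(E)-1$, this accounting yields only the bound $2b-1$, and nothing in the argument as described rules out equality (take $f$ a coloop and both displayed separations of full connectivity). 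So ``two applications of the submodular inequality'' at a node certify $t\le 2b-1$, not $t\le 2b-2$; extracting the exact constant requires modifying the decomposition or a finer argument, which is precisely the substance of Theorem 4.2 in Ref.~\cite{HW06}. A parallel issue arises in your lower-bound direction: when a node of the tree-decomposition carrying several elements is expanded into a subcubic tree, the new edges display separations mixing subtree sets $F_i$ with subsets $S$ of the elements at that node, and the natural estimate only gives $\lambda(F_{i_1}\cup\dots\cup F_{i_j}\cup S)-1\le\bigl(\lambda(F_{i_1}\cup\dots\cup F_{i_j})-1\bigr)+rk(S)$, which can be as large as twice the node width; bounding it by $t+1$ is not automatic from monotonicity and submodularity alone. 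The skeleton --- converting optimal decompositions back and forth on essentially the same tree --- is the right one, but the proof lives exactly in the parts you have left open.
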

In particular this result tells us that the tree-width is an upper bound to the branch-width. The next theorem, to be found in Ref.~\cite{OPhD}, now states that we can algorithmically compute a subcubic tree that at least comes close to the optimal tree.

\begin{lem}
\label{lem:tree-search}(Theorem 2.12 in Ref.~\cite{OPhD}) For given $k$, there is an algorithm as follows. It takes as input a finite set $E_{G}$ with $\left|E_{G}\right|\geq2$ [and the connectivity function $\lambda$]. It either concludes that $b\left(\lambda\right)>k$ or outputs a tree with $b_{T}\left(\lambda\right)\leq3k+1$. Its running time is $O\left(\delta\left|E_{G}\right|^{6}\log\left|E_{G}\right|\right)$, where $\delta$ is the time to compute $\lambda$. $\hfill\square$
\end{lem}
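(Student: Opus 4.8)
Because $\lambda$ is, by construction, a symmetric connectivity function — it satisfies $\lambda(A)=\lambda(E_G-A)$, as noted just before Eq.~\eqref{eq:lambda-and-c}, and it is submodular, inheriting this from the rank function $rk$ of the cycle matroid — the statement is really one about branch-decompositions of an abstract symmetric submodular function, and nothing beyond these two properties of $\lambda$ will be used. The plan is to exhibit a \emph{constructive} procedure that attempts to build a branch-decomposition of width at most $3k+1$ and, at every point where it cannot make progress, certifies that no decomposition of width $\le k$ exists. The natural certificate is a \emph{tangle} of order $k+1$: by the Robertson--Seymour tangle/branch-width duality (see \cite{RS84,OPhD}), a symmetric submodular $\lambda$ admits a tangle of order $k+1$ if and only if $b(\lambda)>k$. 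Hence a run that fails to build the decomposition but exhibits such a tangle correctly reports $b(\lambda)>k$.

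I would have the algorithm maintain a subcubic tree whose leaves are labelled by the blocks of a partition of $E_G$ — a \emph{partial} branch-decomposition — subject to the invariant that every edge-induced separation $A_e$ of the current tree has $\lambda(A_e)\le 3k+1$. Beginning from the single-block partition, the algorithm repeatedly selects a block $Z$ with $|Z|\ge 2$ and tries to refine it into two nonempty blocks $Z=Z_1\sqcup Z_2$ with $\lambda(Z_1),\lambda(Z_2)\le 3k+1$. Searching for such a split reduces to a bounded number of \emph{constrained minimum-$\lambda$ separation} problems: one fixes a few anchor elements on each intended side and minimises $\lambda$ over all separations of $E_G$ that keep the anchors apart. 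Since $\lambda$ is symmetric and submodular, each such constrained minimisation is an instance of submodular-function minimisation and is solved with polynomially many evaluations of $\lambda$; this is where the oracle cost $\delta$ together with a fixed power of $|E_G|$ enters the running time.

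The heart of the argument is a \textbf{splitting lemma}: if $b(\lambda)\le k$, then every block $Z$ with $|Z|\ge 2$ and $\lambda(Z)\le 3k+1$ can be split into two nonempty blocks both of connectivity at most $3k+1$. I would prove it by considering a hypothetical optimal (width $\le k$) branch-decomposition, restricting it to $Z$, and locating an edge of that decomposition that partitions $Z$ in a sufficiently balanced way; combining the value-$\le k$ separation it supplies with the current boundary bound $\lambda(Z)\le 3k+1$ through the submodular inequality $\lambda(Z\cap A)+\lambda(Z\cup A)\le\lambda(Z)+\lambda(A)$, together with the balance, is what controls the approximation constant and yields the value $3$. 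The contrapositive supplies the certificate: if some block admits no admissible split, then the family of low-order separations oriented consistently away from that block satisfies the tangle axioms — consistency, the no-three-cover condition, and the singleton condition — and thus forms a tangle of order $k+1$.

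The step I expect to be the main obstacle is precisely this tangle extraction: verifying each tangle axiom from the mere non-existence of a good split, while keeping the order at exactly $k+1$, is delicate, and it is the factor-$3$ slack built into the invariant that makes the verification succeed. The remaining complexity bookkeeping is routine: there are $O(|E_G|)$ successful splits, each preceded by a fixed number of constrained submodular minimisations, and bounding each minimisation by $O(\delta\cdot\mathrm{poly}(|E_G|))$ oracle-weighted operations and summing over all nodes yields the advertised $O(\delta\,|E_G|^{6}\log|E_G|)$ bound, the logarithmic factor arising from the scaling search inside the minimisation routine. Since success and the tangle certificate are mutually exclusive and jointly exhaustive, the algorithm always halts in one of the two stated states, proving the lemma; parts of this reconstruction follow \cite{OPhD} closely.
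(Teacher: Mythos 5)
First, a point of reference: the paper does not prove this lemma at all. It is imported verbatim, tombstone included, from Theorem 2.12 of Oum's thesis \cite{OPhD}, so there is no in-paper argument to compare against; your sketch has to be measured against the Oum--Seymour argument that the citation stands for. Measured that way, your reconstruction names the right landscape (symmetry and submodularity of $\lambda$, an iterative construction, submodular-function minimization as the search primitive, a duality certificate on failure), but it has genuine gaps at exactly the two load-bearing steps.

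Gap 1: the splitting lemma does not hold with the constants you need. With the paper's normalization one has $\lambda \geq 1$, and if $b(\lambda)\leq k$ every separation $(A,B)$ taken from an optimal decomposition satisfies $\lambda(A)\leq k$; submodularity then gives only $\lambda(Z\cap A)\leq \lambda(Z)+\lambda(A)-\lambda(Z\cup A)\leq (3k+1)+k-1=4k$, which exceeds $3k+1$ for all $k\geq 2$. The ``balance'' of the chosen edge cannot repair this: the submodular inequality contains no term that sees cardinality, so however evenly $Z$ is split, the bound stays $4k$. Hence the invariant $\lambda\leq 3k+1$ is not maintained by your refinement step, and the factor $3$ is not explained by your argument. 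Gap 2: the certificate. You invoke the full Robertson--Seymour tangle/branch-width duality and concede that extracting a tangle of order $k+1$ from the failure of a single split is the main obstacle --- it is, and it does not go through: non-existence of one admissible split neither orients all low-order separations nor yields the no-three-cover axiom. The cited proof is engineered precisely to avoid both problems: Oum and Seymour work with \emph{loose} tangles and build the decomposition bottom-up rather than top-down, growing a family $\mathcal{T}$ of subsets of $E_{G}$ closed under a merge rule (from $A,B\in\mathcal{T}$ and a low-connectivity $C\subseteq A\cup B$, adjoin $C$), starting from the admissible singletons. If the growth reaches $E_{G}$, the derivation history itself assembles into a tree of width $\leq 3k+1$ --- the constant $3$ arises from the three low-connectivity sets combined in one merge step, not from any balance argument --- and if the growth gets stuck, the stuck family is, verbatim and with nothing left to verify, the loose-tangle certificate that $b(\lambda)>k$. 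So your proposal is not a different-but-valid route; it is the route whose two hard points the actual proof was designed to bypass.
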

Because efficient algorithms to compute the tree-width of a graph $G$ (and hence with the lemma above, upper bounds for the branch-width) are known, we can assume to be able to input a $k>b\left(\lambda\right)$. This way we always end up with a tree $T$ with at most $b_{T}\left(\lambda\right)=3k+1$.

At the end of this part we know, given an interaction graph $G$ with tree-width $t$, that we can efficiently compute a branch decomposition over the set of edges such that the branch-width associated with this decomposition is smaller than or equal to $3t$.

ii) We now want to establish a link between the $\chi-$width associated with bipartitions of the states $\ket{\psi_{G}}$ and $\ket{\varphi_{G}}$ and the branch-width of the a tree inducing these bipartitions. To fix some notation, we define the matrix
\begin{equation}
M:=
\begin{cases}
B^{\sigma} & \mbox{for }\left|\psi_{G}\right\rangle \\
\left(\one_{\left|V\right|}|B^{\sigma}\right) & \mbox{for }\left|\varphi_{G}\right\rangle.
\end{cases}
\label{eq:Mdefinition}
\end{equation}
We recognize that $M$ is used to define the stabilizer of the respective states, because 
\[
C_{G}=\left\{ M^T \mathbf{s},\mathbf{s} \in\mathbb{Z}_{q}^{\left|V\right|}\right\}.
\]

\begin{lem}
\label{lem:chieins} Let $P\cup Q=E$ with $P\cap Q=\textrm{\O}$ be a bipartition of the edges in the interaction graph $G=\left(V,E\right)$ describing the interaction pattern of the classical spin system. Let $\left|\eta_{G}\right\rangle $ denote the quantum state whose stabilizer is constructed via $M$ (e.g.\ $\left|\varphi_{G}\right\rangle $ or $\left|\psi_{G}\right\rangle $). The Schmidt rank $\chi$ of the bipartition $\left|\eta_{G}\right\rangle =\sum_{i=1}^{\chi}\lambda_{i}\left| \eta_{i}^{P}\right\rangle \left|\eta_{i}^{Q}\right\rangle $, where $\left|\eta_{i}^{P}\right\rangle $ and $\left|\eta_{i}^{Q}\right\rangle $ are quantum states of the qudits corresponding to the edges in  $P$ and $Q$ respectively, satisfies the equality \[ \chi=q^{\left|V\right|+c\left(E\right)-c\left(P\right)-c\left(Q\right)},\]  where $c\left(E\right),c\left(P\right),c\left(Q\right)$ are the number of connected components in the graphs $\left(V,E\right),\left(V,P\right)$ and $\left(V,Q\right)$ respectively.
\end{lem}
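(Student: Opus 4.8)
The plan is to use that $\left|\eta_{G}\right\rangle$ is, up to normalization, the uniform superposition $\sum_{c\in C_{G}}\left|c\right\rangle$ over the linear code $C_{G}=\{M^{T}\mathbf{s}:\mathbf{s}\in\mathbb{Z}_{q}^{|V|}\}$, so that its Schmidt rank across any bipartition of the qudit coordinates is fixed purely by the combinatorics of $C_{G}$. Writing each codeword as $c=(c_{P},c_{Q})$ according to the split of coordinates, I would first group the sum by its $P$-part, $\left|\eta_{G}\right\rangle=\sum_{p\in\pi_{P}(C_{G})}\left|p\right\rangle\otimes\bigl(\sum_{c:\,c_{P}=p}\left|c_{Q}\right\rangle\bigr)$. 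For fixed $p$ the inner $Q$-vector is the indicator sum over a coset of $\pi_{Q}(K_{P})$, where $K_{P}:=\{c\in C_{G}:c_{P}=0\}$, and two values of $p$ yield the same branch exactly when they differ by an element of $\pi_{P}(K_{Q})$, with $K_{Q}:=\{c\in C_{G}:c_{Q}=0\}$. Since distinct cosets give mutually orthogonal branches on both sides, the Schmidt rank equals the number of distinct cosets, and a short count gives the key identity
\[
\chi=\frac{|C_{G}|}{|K_{P}|\,|K_{Q}|},
\]
valid because $\pi_{P}$ is injective on $K_{Q}$ (as $K_{P}\cap K_{Q}=\{0\}$).

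The second step is to evaluate the three cardinalities by the connected-component counting of Lemma~\ref{lem:prefactor}, applied now to subgraphs. For $\left|\psi_{G}\right\rangle$ (qudits $=$ edges, $M=B^{\sigma}$) a codeword $(B^{\sigma})^{T}\mathbf{s}$ vanishes on a set of edges iff $\mathbf{s}$ is constant on each connected component of the associated subgraph; counting the admissible $\mathbf{s}$ modulo the kernel of $(B^{\sigma})^{T}$ yields $|C_{G}|=q^{|V|-c(E)}$, $|K_{P}|=q^{c(P)-c(E)}$ and $|K_{Q}|=q^{c(Q)-c(E)}$, where $c(\cdot)$ counts the components (isolated vertices included) of $(V,\cdot)$. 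Substituting into the identity above gives $\chi=q^{|V|+c(E)-c(P)-c(Q)}$, which is the claim. Comparing with Eq.~\eqref{eq:lambda-and-c}, this is exactly $\chi=q^{\lambda(P)-1}$, the form that the subsequent branch-width estimate requires.

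For $\left|\varphi_{G}\right\rangle$ I would run the identical code-state argument on its cutspace $C_{G}=\{(\mathbf{s},(B^{\sigma})^{T}\mathbf{s})\}$ determined by $M$ in Eq.~\eqref{eq:Mdefinition}; equivalently, I would invoke the relation $\left|\psi_{G+h}\right\rangle\propto\left|\varphi_{G}\right\rangle$ established earlier and treat the vertex qudits as the apex-edges of $G+h$, so that the same component count applies to the augmented graph. I expect the main obstacle to be not the linear-algebra identity but the cardinality bookkeeping over $\mathbb{Z}_{q}$: since $\mathbb{Z}_{q}$ is not a field for composite $q$, one may not argue by dimension and must instead count kernels and images of the relevant module homomorphisms directly, precisely as in Lemma~\ref{lem:prefactor} and Appendix~\ref{sec:NumElProof}. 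The genuinely delicate point is the placement of the vertex qudits of $\left|\varphi_{G}\right\rangle$ relative to an edge bipartition (together with the treatment of isolated vertices), because a careless assignment changes $\chi$; routing this through the $G+h$ picture is what keeps the exponent in the connectivity-function form $\lambda-1$ needed downstream.
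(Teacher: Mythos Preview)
Your argument is correct and reaches the same exponent as the paper, but by a genuinely different route. The paper works entirely in the stabilizer picture: it writes $\rho_{P}=q^{-\mathfrak{p}}\sum_{g\in\bar{\mathcal{S}}_{P}}g$, shows $\rho_{P}^{2}\propto\rho_{P}$ so that $\rho_{P}$ is a flat projector of rank $r=q^{\mathfrak{p}}/|\bar{\mathcal{S}}_{P}|$, and then counts $|\bar{\mathcal{S}}_{P}|=z_{C}\,z_{C^{\perp}}$ by separately enumerating the elements of $C_{G}$ and of $C_{G}^{\perp}$ that vanish on the $Q$-coordinates, finally converting kernels to component counts via Lemma~\ref{lem:prefactor}. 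Your approach bypasses the stabilizer machinery and $C_{G}^{\perp}$ altogether: you use only that $\ket{\eta_{G}}$ is a uniform code superposition, exhibit an explicit Schmidt decomposition indexed by $C_{G}/(K_{P}+K_{Q})$, and read off $\chi=|C_{G}|/(|K_{P}|\,|K_{Q}|)$ directly from $K_{P}\cap K_{Q}=\{0\}$. Your quantities match the paper's as $|K_{Q}|=z_{C}$ and $|C_{G}|/|K_{P}|=|\mathrm{ran}(M_{P}^{T})|=q^{\mathfrak{p}}/z_{C^{\perp}}$, so the two counts coincide term by term.

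What each buys: the paper's projector argument is the standard stabilizer toolkit and immediately gives the extra information that all Schmidt coefficients are equal (used later when writing down the TTN tensors), while your coset argument is more elementary, avoids the dual code $C_{G}^{\perp}$, and in fact already produces the explicit Schmidt vectors $\sum_{b\in K_{Q}}\ket{(r_{i})_{P}+b_{P}}$ that the paper derives separately in the lemma following this one. Your caution about composite $q$ is appropriate and handled the same way in both approaches, by counting kernels and images rather than dimensions. Your flag about where the vertex qudits of $\ket{\varphi_{G}}$ sit in the bipartition is also well placed; the paper absorbs this by letting $E$ denote the column set of $M$ (so, for $\ket{\varphi_{G}}$, the edges of the decorated graph), and your $G+h$ reduction achieves the same effect.
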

\textbf{Proof.} Corresponding to the bipartition $P\cup Q=E$ we have a bipartition of the columns of the matrix $M$. After performing some (unimportant) permutation of the columns, the matrix $M$ takes the form $M=\left(M_{P}|M_{Q}\right)$. Let $\mathfrak{c}$ denote the number of columns of $M$ and $\mathfrak{p}$ and $\mathfrak{q}$ the number of columns of $M_{P}$ and $M_{Q}$ respectively.

Now let $\mathcal{S}$ be the stabilizer of the state $\left|\eta_{G}\right\rangle $ and $\mathcal{S}_{P}\subset\mathcal{S}$ the subset of operators $g$ that act trivially on the qudits belonging to the labels in $Q$. We define  $\bar{\mathcal{S}}_{P}:=\left\{ \mbox{Tr}_{Q}\left[g\right],g\in\mathcal{S}_{P}\right\} $. From the  theory of stabilizer states it is known that $\ket{\eta_{G}} \! \bra{\eta_{G}} =q^{-\mathfrak{c}}\sum_{g\in\mathcal{S}}g$
and hence
\begin{multline*}
\rho_{P}=\\
\mbox{Tr}_{Q}\left[\ket{\eta_{G}}\!\bra{\eta_{G}} \right]=\mbox{Tr}_{Q}\left[q^{-\mathfrak{c}}\sum_{g\in\mathcal{S}}g\right]
=q^{\mathfrak{q}-\mathfrak{c}}\sum_{g\in\bar{\mathcal{S}}_{P}}g\\
=q^{-\mathfrak{p}}\sum_{g\in\bar{\mathcal{S}}_{P}}g.
\end{multline*}
The factor $q^{\mathfrak{q}}$ comes in because the trace over all operators but $\one$ in the Pauli group is zero and $\mbox{Tr}_{Q}[\one]=q^{\mathfrak{q}}$. Furthermore, the stabilizer is a group, so we have the identity
\begin{multline*}
\left(\rho_{P}\right)^{2}=\\
q^{-2\mathfrak{p}}\sum_{g\in\bar{\mathcal{S}}_{P}}g
\sum_{h\in\bar{\mathcal{S}}_{P}}h=q^{-2\mathfrak{p}}
\sum_{g\in\bar{\mathcal{S}}_{P}}\sum_{h\in\bar{\mathcal{S}}_{P}}h
=\frac{\left|\bar{\mathcal{S}}_{P}\right|}{q^{2\mathfrak{p}}}\sum_{h\in\bar{\mathcal{S}}_{P}}h\\
=\frac{\left|\bar{\mathcal{S}}_{P}\right|}{q^{\mathfrak{p}}}\rho_{P}.
\end{multline*}
We define $r:=q^{\mathfrak{p}}/\left|\bar{\mathcal{S}}_{P}\right|$ and obtain $\left(r\rho_{P}\right)^{2}=r\rho_{P}$. Hence $r\rho_{P}$ is a projector and has (after a possibly necessary change of basis and reordering of rows) the form $r\rho_{P}=\mbox{diag}\left(1,...,1,0,...,0\right)$, or equivalently, $\rho_{P}=\mbox{diag}\left(r^{-1},...,r^{-1},0,...,0\right)$. Since $\mbox{Tr}\left[\rho_{P}\right]=1$, we have $r^{-1}\mbox{rank}\left(\rho_{P}\right)=1$ and hence $r$ equals the number of Schmidt coefficients in the bipartition of the state $\left|\eta_{G}\right\rangle $ according to the sets of edges $P$ and $Q$. Thus $\chi=r=q^{\mathfrak{p}}/\left|\bar{\mathcal{S}}_{P}\right|$.

To obtain the number $\left|\bar{\mathcal{S}}_{P}\right|$, we have now a look at the matrix $M=\left(M_{P}|M_{Q}\right)$, which we will from now on interpret as a linear mapping  $M^T:\mathbb{Z}_{q}^{\left|V\right|}\rightarrow\mathbb{Z}_{q}^{\mathfrak{c}}$. Here, $M_{P}$ is a $\left|V\right|\times\mathfrak{p}$-matrix belonging to the columns in $P$ and $M_{Q}$ is a $\left|V\right|\times\mathfrak{q}$-matrix belonging to the columns in $Q$. Recall that the stabilizer is
isomorphic to the set of operators
\[
X\left(v\right)Z\left(u\right)
:=\bigotimes_{c\in\mbox{\tiny{columns of }}M}X^{v_{c}}Z^{u_{c}},
\]
where $v\in C_{G}\left(q\right)$ and $u\in C_{G}\left(q\right)^{\perp}$. Hence $\left|\bar{\mathcal{S}}_{P}\right|$ is determined by the number of vectors $v'\subset C_{G}\left(q\right)$ and $u'\in C_{G}^{\perp}\left(q\right)$ whose elements are $0$ in the last $\mathfrak{q}$ places (e.g.\ $v'=\left(v'_{1},..., v'_{\mathfrak{p}},0,...,0\right)$). Let this  number for the set $C_{G}\left(q\right)$ be $z_{C}=\left|C_{P}\left(q\right)\right|$, where $\left(v'_{1},...,v'_{\mathfrak{p}}\right)\in C_{P}\left(q\right)$, and the corresponding number for the set $C_{G}\left(q\right)^{\perp}$ be $z_{C^{\perp}}=\left|C_{P}^{\perp}\left(q\right)\right|$, where $\left(u'_{1},...,u'_{\mathfrak{p}}\right)\in C_{P}^{\perp}\left(q\right)$. Then $\left|\bar{\mathcal{S}}_{P}\right|=z_{C}z_{C^{\perp}}$.

Let us now calculate $z_{C}$. The elements of $C_{G}$ are the image vectors of $M^T$. Furthermore, if $s\in\mbox{ker}\left(M_{Q}^T\right)$, then the image of $s$ has the desired form  $M^Ts=v'=\left(v'_{1},...,v'_{\mathfrak{p}},0,...,0\right)$. Considering that we can add any vector from the kernel of $M^T$ to $s$ without changing the image $v$, it is  $z_{C}=\left|\mbox{ker}\left(M_{Q}^T\right)\right| /\left|\mbox{ker}\left(M^T\right)\right|$.

Similarly, $z_{C^{\perp}}$ equals the number of elements in the set $C_{P}\left(q\right)^{\perp}$ where $C_{P}\left(q\right)=\mbox{ran}\left(M^T_{P}\right)$. As shown as part of Appendix A, this number is equal to  $z_{C^{\perp}}=q^{\mathfrak{p}}/\left|\mbox{ran}\left(M_{P}^T\right)\right|$ as the target space of the mapping $M_{P}^T$ is $\mathbb{Z}_{q}^{\mathfrak{p}}$.

Another basic consideration about the linear mapping  $M_{P}^T:\mathbb{Z}_{q}^{\left|V\right|}\rightarrow\mbox{ran}\left(M_{P}^T\right)$ (note: a mapping between finite spaces) tells us that
\[
q^{\left|V\right|}=\left|\mbox{ran}\left(M_{P}^T\right)
\right|\left|\mbox{ker}\left(M_{P}^T\right)\right|,
\]
hence
\[
z_{C}z_{C^{\perp}}=\frac{q^{\mathfrak{p}}}{q^{\left|V\right|}}
\frac{\left|ker\left(M_{P}^T\right)\right|\left|ker\left(M_{Q}^T\right)\right|}
{\left|ker\left(M^T\right)\right|}
\]
and
\[
\chi=\frac{q^{\left|V\right|}\left|ker\left(M^T\right)\right|}
{\left|ker\left(M_{P}^T\right)\right|\left|ker\left(M_{Q}^T\right)\right|}.
\] 
{From} lemma \ref{lem:prefactor} we now derive that  $\left|ker\left(M^T\right)\right|=q^{c\left(E\right)}$ (with analogous results for $M_{P}^T$ and $M_{Q}^T$).$\hfill\square$

\begin{rem}
\label{rem:num-elem-in-c}As a side remark we note the identities
\begin{equation}
r=q^{\mathfrak{p}}/\left|C_{P}\left(q\right)\right|
\left|C_{P}^{\perp}\left(q\right)\right|\label{eq:schmidtrank}
\end{equation}
and
\[
\left|\mbox{ran}\left(M_{P}^T\right)\right|=\left|C_{P}\left(q\right)\right|
=q^{\left|V\right|}/\left|\mbox{ker}\left(M_{P}^T\right)\right|
=q^{\left|V\right|-c\left(P\right)},
\]
which can be obtained from the proof above.
\end{rem}

\begin{cor}
Considering identity (\ref{eq:lambda-and-c}), we deduce that the Schmidt-rank $\chi$ of a bipartition of edge-qudits $E=A\cup\left(E-A\right)$ and the connectivity function $\lambda$ defined on the graph $G$ satisfy the following equation\[ \chi=q^{\lambda\left(A\right)-1}.\]
\end{cor}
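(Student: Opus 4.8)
The plan is to simply combine the two results already in hand, since the corollary is essentially a change of notation. First I would identify the bipartition $P\cup Q=E$ (with $P\cap Q=\emptyset$) appearing in Lemma \ref{lem:chieins} with the bipartition $E=A\cup(E-A)$ of the corollary, by setting $P=A$ and $Q=E-A$. Lemma \ref{lem:chieins} then yields directly the Schmidt rank of this bipartition as $\chi=q^{|V|+c(E)-c(A)-c(E-A)}$, where $c(\cdot)$ denotes the number of connected components of the graph on the indicated edge set.

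Next I would recall the identity (\ref{eq:lambda-and-c}), which reformulates the connectivity function of the cycle matroid as $\lambda(A)=|V|+c(E)-c(A)-c(E-A)+1$. Subtracting $1$ from both sides isolates exactly the exponent that appears in the expression for $\chi$, namely $\lambda(A)-1=|V|+c(E)-c(A)-c(E-A)$.

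Substituting this into the formula from Lemma \ref{lem:chieins} gives $\chi=q^{\lambda(A)-1}$, which is the claimed relation. I expect no real obstacle here: all of the substance is already carried by Lemma \ref{lem:chieins} (the genuine computation, via the reduced density matrix $\rho_P$, of the Schmidt rank in terms of component counts) and by the reformulation (\ref{eq:lambda-and-c}) of $\lambda$ in terms of those same counts. The corollary merely records the observation that the two expressions share a common exponent, differing only by the additive constant $1$ that is built into the definition of the connectivity function $\lambda$.
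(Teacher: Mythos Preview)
Your proposal is correct and matches the paper's approach exactly: the corollary is stated without a separate proof in the paper, being simply the immediate combination of Lemma~\ref{lem:chieins} with identity~(\ref{eq:lambda-and-c}), precisely as you describe.
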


Considering that the matrix $M$ defined in Eq.~\eqref{eq:Mdefinition} is the cycle matroid linked to the states $\ket{\psi_{G}}$ and  $\ket{\varphi_{G}}$ we can now state the following important result,  concluding the second part of the proof.

\begin{cor}
Using the result of lemma \ref{lem:tree-search} to find, by means of the matrix $M$, a branch decomposition of the qudits in the states $\ket{\psi_{G}}$ and $\ket{\varphi_{G}}$, we can efficiently find a subcubic TTN description such that the Schmidt number of all bipartitions following this branch decomposition satisfies
\[
\chi\leq q^{3t-1}.
\]
According to lemma \ref{lem:limited_schmidt_number}, the dimension $D$ of this TTS is limited by $3t-1$ and hence, following lemma \ref{lem:overlap_complexity}, the effort to compute the overlaps $\braket{\alpha}{\psi_{G}}$ and $\braket{\alpha'}{\varphi_{G}}$ grows with at most $O\left(\left|E_{G}\right|t^{3}\right)$. $\hfill\square$
\end{cor}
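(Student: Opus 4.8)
The plan is to chain together the results of parts i) and ii) into one efficient procedure and then invoke the tensor-tree machinery of Appendix~\ref{sec:TTNtheory}. First I would fix the combinatorics. The cited matroid theorems relating tree-width and branch-width give $b(\lambda)\le t$ for the cycle matroid $M(G)$, so I may run the algorithm of Lemma~\ref{lem:tree-search} with a parameter $k$ of order $t$. The only efficiency point to verify is that the connectivity function $\lambda$ is cheap to evaluate, and this is immediate from Eq.~\eqref{eq:lambda-and-c}: each value $\lambda(A)$ reduces to counting connected components of $(V,A)$, $(V,E-A)$ and $(V,E)$, costing polynomial time. Hence Lemma~\ref{lem:tree-search} runs in $O(\delta|E_G|^{6}\log|E_G|)$ and returns a subcubic tree $T$, whose leaves I identify with the qudits of $\ket{\psi_G}$ (one per edge of $G$) or of $\ket{\varphi_G}$ (one per column of $M$), satisfying $b_T(\lambda)\le 3t$.

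Second, I would convert this width into a uniform entanglement bound. Deleting an internal edge $e$ of $T$ partitions the leaves into $A_e$ and its complement, and the relation $\chi(A_e)=q^{\lambda(A_e)-1}$ established from Lemma~\ref{lem:chieins} gives the Schmidt rank of the corresponding bipartition of $\ket{\psi_G}$ or $\ket{\varphi_G}$ exactly. The decisive observation is that branch-width is a \emph{maximum} over the tree edges, so the single inequality $\lambda(A_e)\le b_T(\lambda)\le 3t$ controls \emph{every} cut of the decomposition at once, whence $\chi(A_e)\le q^{3t-1}$ for all $e$. By Lemma~\ref{lem:limited_schmidt_number} a subcubic tensor tree state can carry exactly these Schmidt ranks on its internal indices, so $\ket{\psi_G}$ and $\ket{\varphi_G}$ admit TTN descriptions along $T$ of dimension $D\le q^{3t-1}$; the explicit computation of the tensor coefficients is deferred to part iii).

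Finally I would read off the cost from Lemma~\ref{lem:overlap_complexity}: contracting a complete product state against a subcubic TTS of $N$ qudits and dimension $D$ takes $O(ND^{3})$. Since $N=O(|E_G|)$ (both $|V|$ and $|E|$ are $O(|E_G|)$ for a graph without isolated vertices) and $D\le q^{3t-1}$, the overlaps $\braket{\alpha}{\psi_G}$ and $\braket{\alpha'}{\varphi_G}$ are obtained in the claimed effort; in the regime $t=O(\log N)$ of Theorem~\ref{thm:simul_comprehensible} this $D$ is polynomial in $N$, so the entire evaluation is polynomial.

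I expect the main obstacle to be exactly the uniformity exploited in the second paragraph: one must guarantee that the \emph{single} tree $T$ delivered by Lemma~\ref{lem:tree-search} bounds the Schmidt rank across all of its cut edges simultaneously, which is precisely why branch-width (a max over tree edges), rather than the connectivity of one favorable bipartition, is the correct invariant, and why the identity $\chi=q^{\lambda(A)-1}$ bridging the quantum and matroid pictures is needed for \emph{all} subsets $A$ and not merely in a generic case.
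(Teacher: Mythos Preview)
Your proposal is correct and is precisely the argument the paper has in mind: the corollary in the paper carries no separate proof (it ends with~$\square$ immediately after the statement), and your chaining of part~i) $\to$ Corollary~$\chi=q^{\lambda(A)-1}$ $\to$ Lemma~\ref{lem:limited_schmidt_number} $\to$ Lemma~\ref{lem:overlap_complexity} is exactly the intended logic, with the construction of the actual tensor entries correctly deferred to part~iii). Your added remark that the branch-width, being a \emph{maximum} over tree edges, is what makes the single tree $T$ bound \emph{all} bipartitions simultaneously is a useful clarification the paper leaves implicit.
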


iii) In this part we want to discuss how to compute the tensor entries in the TTS description of the states $\ket{\psi_{G}}$ and $\ket{\varphi_{G}}$, which we will again denote generically as $\ket{\eta_{G}}$ where no distinction is necessary. The ansatz for the calculation of all tensor elements is the branch decomposition of the edge qudits (concerning the edges of the graph $G=\left(V,E_{G}\right)$) induced by the tree tensor network $T$ with edges $E_{T}$ describing the state. We select an arbitrary edge $e_{0}\in E_{T}$ of the tree to obtain an initial bipartition $E_{G}=P\cup Q$ with $Q=\left(E_{G}-P\right)$ of the edges in $E_{G}$, inducing a bipartition of the set of qudits of the state $\ket{\eta_{G}}$ . We will use the notation $P$ and $Q$ for the edges and the corresponding qudits alike.

Let us consider the Schmidt decomposition belonging to the bipartition. Recalling the proof of lemma \ref{lem:chieins}, the Schmidt coefficients of a decomposition $\ket{\eta_{G}}=\sum_{i}\lambda_{i}\ket{p_{i}}\ket{q_{i}}$, where the states $\ket{p_{i}}$ live on the Hilbert space of the edge qudits in a part $P\subset E_{G}$ and the states $\ket{q_{i}}$ live on the part $Q=E_{G}-P\subset E_{G}$ can be obtained immediately. They are all equal and have the value $\lambda_{i}=r^{-1}=\left| \bar{\mathcal{S}}_{P}\right|/q^{\mathfrak{p}}$. We remember also that there are exactly $r$ of these coefficients. Concerning the Schmidt vectors, we consider the following lemma.

\begin{lem}
A Schmidt basis for a bipartition of the edge qudits $E_{G}=P\cup Q$, $Q=E_{G}-P$ of the state $\ket{\eta_{G}}$ is given by the set of states $\left\{ \ket{p_{i}}\ket{q_{i}}\right\} _{i=1}^{r}$ where
\begin{eqnarray*}
\ket{p_{i}} & := & q^{\left(\mathfrak{p}-\left|V\right|\right)/2}
\sum_{c_{P}\in C_{P}}\ket{c_{P}+\tilde{p}_{i}}\\
\ket{q_{i}} & := & q^{\left(\mathfrak{q}-\left|V\right|\right)/2}
\sum_{c_{Q}\in C_{Q}}\ket{c_{Q}+\tilde{q}_{i}}.
\end{eqnarray*}
Here $\tilde{p_{i}}\in\left(C_{P}^{\perp}\right)^{\perp}$, such that the cosets $\tilde{p_{i}}+C_{P}$ are distinct for different values of $i$, and $C_{P}$ is the cut space of the subspace belonging to the edges belonging to the edge qudits in $P$. ($\left\{ \tilde{q_{i}}\right\} \subset\left(C_{Q}^{\perp}\right)^{\perp}$ is defined analogously; all additions in the kets are modulo $q$).
\end{lem}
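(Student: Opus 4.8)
The plan is to substitute the bipartite splitting of the columns of $M$ directly into the defining sum for $\ket{\eta_G}$ and then regroup the resulting product terms into orthogonal blocks indexed by cosets. First I would write the (unnormalised) state as $\ket{\eta_G}=\sum_{\mathbf{s}\in\mathbb{Z}_q^{|V|}}\ket{M^T\mathbf{s}}=|\ker M^T|\sum_{c\in C_G}\ket{c}$, where Lemma~\ref{lem:prefactor} supplies the multiplicity $|\ker M^T|$ with which each cut vector is hit. After the (harmless) column permutation bringing $M$ into the form $(M_P|M_Q)$, every basis ket factorises as $\ket{M^T\mathbf{s}}=\ket{M_P^T\mathbf{s}}_P\ket{M_Q^T\mathbf{s}}_Q$, so the problem reduces to understanding the correlation set $R=\{(M_P^T\mathbf{s},M_Q^T\mathbf{s}):\mathbf{s}\in\mathbb{Z}_q^{|V|}\}\subseteq\mathbb{Z}_q^{\mathfrak{p}}\times\mathbb{Z}_q^{\mathfrak{q}}$.

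The group-theoretic heart of the argument is the fibre structure of $R$. I would introduce on the $P$-side two submodules: the projected cut space $\mathrm{ran}(M_P^T)$, over which the $P$-labels of $\ket{\eta_G}$ range, and the ``shortened'' code $K_P:=M_P^T(\ker M_Q^T)=\{c_P:(c_P,0)\in C_G\}$, with the analogous $K_Q$ on the $Q$-side. The assignment $\mathbf{s}\mapsto(M_P^T\mathbf{s},M_Q^T\mathbf{s})$ then induces a well-defined isomorphism $\phi:\mathrm{ran}(M_P^T)/K_P\to\mathrm{ran}(M_Q^T)/K_Q$, whose common order is precisely the Schmidt rank $r$ established in Lemma~\ref{lem:chieins} (here I would reuse Lemma~\ref{lem:prefactor} and Remark~\ref{rem:num-elem-in-c} to count $|K_P|=|\mathrm{ran}(M_P^T)|/r$). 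Choosing coset representatives $\tilde{p}_i$ of $K_P$ in $\mathrm{ran}(M_P^T)$ and letting $\tilde{q}_i$ be the $\phi$-paired representatives of $K_Q$, the key structural claim is that $R$ is the disjoint union $\bigsqcup_{i=1}^{r}(\tilde{p}_i+K_P)\times(\tilde{q}_i+K_Q)$; the verification is that for a fixed $c_P\in\tilde{p}_i+K_P$ the set of matching $c_Q$ is exactly the full coset $M_Q^T\mathbf{s}+K_Q=\tilde{q}_i+K_Q$, independent of which $c_P$ in the coset one picks.

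With this decomposition in hand the assembly is immediate. Substituting the fibring gives
\[
\ket{\eta_G}=|\ker M^T|\sum_{i=1}^{r}\Bigl(\sum_{k\in K_P}\ket{\tilde{p}_i+k}\Bigr)\Bigl(\sum_{k\in K_Q}\ket{\tilde{q}_i+k}\Bigr),
\]
and I would define $\ket{p_i}$ and $\ket{q_i}$ to be the normalised coset superpositions in the two factors. Orthonormality is then a one-line coset computation: $\braket{p_i}{p_j}$ is proportional to the number of pairs $(k,k')\in K_P\times K_P$ with $\tilde{p}_i-\tilde{p}_j=k'-k$, which equals $|K_P|$ when $i=j$ and vanishes when $i\neq j$ precisely because the cosets $\tilde{p}_i+K_P$ were chosen distinct (and likewise on the $Q$-side). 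That all Schmidt coefficients come out equal is the expected consistency check against Lemma~\ref{lem:chieins}, where $r\rho_P$ was shown to be a projector of rank $r$.

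The step I expect to be the real obstacle is the bookkeeping of the two codes and their duals over the ring $\mathbb{Z}_q$: one must keep the shortened code $K_P$ (the set summed over inside each $\ket{p_i}$) cleanly separated from the projected cut space $\mathrm{ran}(M_P^T)$ (the space in which the offsets $\tilde{p}_i$ live), since it is the quotient of the latter by the former that carries the $r$ Schmidt labels. Because $\mathbb{Z}_q$ need not be a field, I would avoid relying on naive double-duality and instead use the puncturing/shortening duality of the associated codes directly when identifying these spaces and their orthogonal complements $C_P^{\perp}$. The final routine task is then to reconcile the orthonormalising factor $|K_P|^{-1/2}$ produced by the coset count with the prefactor $q^{(\mathfrak{p}-|V|)/2}$ quoted in the statement, thereby pinning down the (flat) Schmidt spectrum of $\ket{\eta_G}$.
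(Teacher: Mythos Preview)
Your argument is sound and takes a genuinely different route from the paper's. The paper does \emph{not} decompose $\ket{\eta_G}$ directly; instead it works on the reduced density matrix side. It recalls from Lemma~\ref{lem:chieins} that $\rho_P=q^{-\mathfrak{p}}\sum_{g\in\bar{\mathcal{S}}_P}g$, then checks by hand that every $g=X(v)Z(u)\in\bar{\mathcal{S}}_P$ fixes each proposed $\ket{p_i}$ (using $v\in C_P$, $u\in C_P^{\perp}$, $\tilde p_i\in(C_P^{\perp})^{\perp}$), establishes orthonormality from the coset disjointness, counts that there are $r$ of them via $|(C_P^{\perp})^{\perp}/C_P|=r$, and finally---because this density-matrix argument says nothing about how the $P$- and $Q$-vectors pair up---appends a separate three-step algorithm to extract the correct matching $(\tilde p_i,\tilde q_i)$ from $C_G$.

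Your fibre decomposition of the correlation set $R$ is more economical: the isomorphism $\phi:\mathrm{ran}(M_P^T)/K_P\to\mathrm{ran}(M_Q^T)/K_Q$ delivers the Schmidt vectors \emph{and} their pairing in one stroke, so the paper's trailing algorithm becomes superfluous. Your approach also forces you to keep the shortened code $K_P=M_P^T(\ker M_Q^T)$ (what is summed over inside each $\ket{p_i}$) cleanly separated from the punctured code $\mathrm{ran}(M_P^T)$ (where the offsets $\tilde p_i$ live); the paper uses the single symbol $C_P$ in both roles across Lemma~\ref{lem:chieins}, Remark~\ref{rem:num-elem-in-c} and the present proof, which is exactly the ambiguity you anticipate in your ``real obstacle'' paragraph. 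What the paper's route buys is that it never needs the explicit block structure of $R$: once one knows $\rho_P$ is $r^{-1}$ times a projector, exhibiting $r$ orthonormal fixed vectors of $\bar{\mathcal{S}}_P$ suffices. Your caution about the normalising prefactor is well placed: the correct normaliser from your coset count is $|K_P|^{-1/2}=q^{(c(E)-c(Q))/2}$, and the reconciliation with the quoted $q^{(\mathfrak p-|V|)/2}$ indeed hinges on which of the two codes $C_P$ is taken to denote.
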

\textit{Proof.} We look at the states $\ket{p_{i}}$ first; the states $\ket{q_{i}}$ are treated analogously. The set of states $\left\{ \ket{p_{i}}\right\} $ has to be an orthonormal set which at the same time is a set of eigenstates of the reduced density operator $\rho_{P}=\mbox{Tr}_{Q}\left[\ketbrad{\eta_{G}}\right]$. We define $\bar{\mathcal{S}}_{P}:=\left\{ \mbox{Tr}_{Q}\left[g\right], g\in\mathcal{S}_{P}\right\}$ and recall from the proof of lemma  \ref{lem:chieins} that  $\ket{\eta_{G}}\!\bra{\eta_{G}} =q^{-\mathfrak{c}}\sum_{g\in\mathcal{S}} g$ and hence $\rho_{P}=q^{-\mathfrak{p}}\sum_{g\in\bar{\mathcal{S}}_{P}}g$. Now each $g\in\bar{\mathcal{S}}_{P}$ can be written as  $g=X\left(v\right)Z\left(u\right)$ where $v\in C_{P}$ and $u\in C_{P}^{\perp}$. Applying such an operator to $\ket{p_{i}}$ yields
\begin{eqnarray*}
q^{-\left(\mathfrak{p}-\left|V\right|\right)/2}g\ket{p_{i}} 
& = & X\left(v\right)Z\left(u\right)\sum_{c_{P}\in C_{P}}
\ket{c_{P}+\tilde{p}_{i}}\\
& = & \sum_{c_{P}\in C_{P}}\ket{c_{P}+\tilde{p}_{i}+v}
e^{2\pi iu\cdot \tilde{p}_{i}/q}\\
& = & \sum_{c'_{p}\in C_{P}}\ket{c'_{P}+\tilde{p}_{i}},
\end{eqnarray*}
since $\tilde{p}_{i}\in\left(c_{P}^{\perp}\right)^{\perp}$ and $c_{P}$ is a group. The perpendicularity property of the states $\ket{p_{i}}$ stems from the fact that the vectors $\tilde{p}_{i}$ are from distinct cosets for different values of $i$. We furthermore calculate
\[
\braket{p_{i}}{p_{i}}=q^{\left(\mathfrak{p}-\left|V\right|\right)}
\sum_{c_{P},c'_{P}\in C_{P}}\delta_{c_{P},c'_{P}}=1
\]
following from remark \ref{rem:num-elem-in-c}. The number of Schmidt vectors is indeed $r$, because the number of distinct cosets is, with a  slight generalization of the results of Appendix \ref{sec:NumElProof},  especially Eq.~\eqref{eq:qnccperp}, to $C^{\perp}_{P}$ and  $\left(C^{\perp}_{P}\right)^{\perp}$,
\[
\left|\left(C_{P}^{\perp}\right)^{\perp}/C_{P}\right|
=\left|\left(C_{P}^{\perp}\right)^{\perp}\right|/\left|C_{P}\right|
\underset{\mbox{\tiny{Eq.~\eqref{eq:qnccperp}}}}{=}
\frac{q^{\left|P\right|}}{\left|C_{P}^{\perp}\right|\left|C_{P}\right|}
\underset{\mbox{\tiny{remark \ref{rem:num-elem-in-c}}}}{=}r.
\]
Having proven that the individual states $\ket{\tilde{p}_{i}}$ and $\ket{\tilde{q}_{i}}$ have the given form, we note that the pairing  $\left(\tilde{p}_{i},\tilde{q}_{i}\right)$ for each $i$ is not arbitrary and has to be found out.  In the following we give an algorithm to find these pairs.  We assume in this context that joining the edges of $P$ and $Q$ results in a mere concatenation of the corresponding vectors to simplify the notation. This can always be achieved by a reordering of the edges. The algorithm that we use is as follows

\begin{enumerate}
\item Find the set $\left\{ \tilde{p}_{i}\right\} $, an orthonormal basis $\left\{ \tilde{c}_{Q}\right\} $ of the space $C_{Q}$ and a vector $\left(c_{P}|0\right)\in C$, where $c_{P}\in C_{P}$. \item For each $\tilde{p}_{i}$ find one vector $\left(c_{P}+\tilde{p}_{i}|a_{i}\right)\in C$ where the choice of $a_i$  is in principle arbitrary and just limited by the  set of vectors in $C$. Keep the vectors $a_i$. \item For each vector $a_i$ calculate the corresponding vector $\tilde{q}_{i}:=a_{i}-\sum_{\tilde{c}_{Q}} \tilde{c}_{Q}\left(\tilde{c}_{Q}\cdot a_{i}\right).$
\end{enumerate}
By construction, the vectors $\tilde{q}_{i}$ are all elements of $\left(C_{Q}^{\perp}\right)^{\perp}$. Furthermore we notice that there are efficient algorithms for all these steps. $\hfill\square$

This bipartition enables us to compute all tensor entries efficiently.

Consider that using the TTN description of the state $\ket{\eta_{G}}$ the states $\ket{p_{i}}$ and $\ket{q_{i}}$ can be written as
\[
\ket{p_{i}}=\sum_{jk}A_{ijk}^{[P]}\ket{j}_{P_{1}}\ket{k}_{P_{2}},
\quad\ket{p_{i}}=\sum_{jk}A_{ilm}^{[Q]}\ket{l}_{Q_{1}}\ket{m}_{Q_{2}}
\]
with suitable tensors $A^{[P]}$ and $A^{[Q]}$ and states $\ket{j}_{P_{1}},\ket{k}_{P_{2}}$. The states $\ket{j}_{P_{1}}$ are living on the Hilbert space $P_{1}$ belonging to the leaves (and hence to the corresponding qudits) that are part of the sub-tree of $T$ attached to the tensor $A^{[P]}$ by its index $i$. Also the states $\ket{j}_{P_{1}}$ can be written as $\ket{j}_{P_{1}}=\sum_{rs}A_{jrs}^{[P_{1}]}\ket{r}_{P_{11}}\ket{s}_{P_{12}}$ (analogous arguments apply to the states $\ket{k}_{P_{2}},\ket{l}_{Q_{1}},\ket{m}_{Q_{2}}$.) To be able to compute the entries of the tensors we hence need the states belonging to sub-trees which can be derived from the initial Schmidt decomposition.

\begin{lem}
\label{lem:state-division}Let $\ket{i}_{E}$ be a state on the qudits corresponding to a set of edges $E$, defined as  $\ket{i}_{E}=\sum_{c_{E}\in C_{E}}\ket{c_{E}+d\left(i\right)}$, where $C_{E}$ is the cut space of the incidence matrix of the graph $G=\left(V,E\right)$ belonging to the qudits as defined above. Let $E=P\cup Q$, $Q=E-P$ be a bipartition of the qudits. Then
\[
\ket{i}_{E}=\ket{P_{i}}\ket{Q_{i}},
\]
where
\begin{eqnarray*}
\ket{P_{i}} & = & \sum_{c_{P}\in C_{P}}\ket{c_{P}+d\left(i\right)_{P}}\\
\ket{Q_{i}} & = & \sum_{c_{Q}\in C_{Q}}\ket{c_{Q}+d\left(i\right)_{Q}}.
\end{eqnarray*}
The states $\ket{P_{i}}$ and $\ket{Q_{i}}$ live on the Hilbert spaces of $P$ and $Q$ respectively and the vectors $d\left(i\right)_{P}$ and $d\left(i\right)_{Q}$ are the parts of the vector $d\left(i\right)$ belonging to the respective qudits.
\end{lem}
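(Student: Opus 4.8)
The plan is to prove the factorisation by reducing it to a statement about the $\mathbb{Z}_q$-linear code $C_E$ together with the tensor-product structure of the computational basis. First I would record that the bipartition $E = P\cup Q$ of the edge-qudits factorises the Hilbert space, $\mathcal{H}_E = \mathcal{H}_P\otimes\mathcal{H}_Q$, and that restricting a vector $w\in\mathbb{Z}_q^{E}$ to its $P$- and $Q$-blocks is $\mathbb{Z}_q$-linear. Hence for each codeword $c_E\in C_E$ one has $\ket{c_E+d(i)} = \ket{(c_E)_P + d(i)_P}_P\otimes\ket{(c_E)_Q+d(i)_Q}_Q$, where $(c_E)_P,(c_E)_Q$ denote the two restrictions. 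This turns the whole claim into the assertion that the sum over $c_E\in C_E$ decouples into independent sums over the $P$- and $Q$-parts.

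The heart of the matter, and the step I expect to be the main obstacle, is to show that the restriction map $\rho:C_E\to C_P\times C_Q$, $c_E\mapsto\bigl((c_E)_P,(c_E)_Q\bigr)$, is an isomorphism onto the full direct product. Injectivity is immediate, since a vector that vanishes on both $P$ and $Q$ is zero. That the image lands in $C_P\times C_Q$ follows from the block structure $M^{T}=\binom{M_P^{T}}{M_Q^{T}}$ of the incidence matrix: the $P$- and $Q$-restrictions of $M^{T}s$ are $M_P^{T}s\in C_P$ and $M_Q^{T}s\in C_Q$. The delicate point is surjectivity, i.e.\ that the $P$- and $Q$-parts may be prescribed independently. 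I would settle this by counting: Lemma~\ref{lem:prefactor}, applied to $(V,E)$, $(V,P)$ and $(V,Q)$ (with isolated vertices counted as components), gives $|C_E|=q^{|V|-c(E)}$, $|C_P|=q^{|V|-c(P)}$ and $|C_Q|=q^{|V|-c(Q)}$, so the injective $\rho$ is onto exactly when $c(P)+c(Q)=|V|+c(E)$, equivalently when $\ker(M_P^{T})+\ker(M_Q^{T})=\mathbb{Z}_q^{|V|}$.

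Granting surjectivity, the conclusion is a one-line computation: summing over $c_E\in C_E$ becomes summing independently over $c_P\in C_P$ and $c_Q\in C_Q$, so that
\[
\ket{i}_E=\sum_{c_P\in C_P}\sum_{c_Q\in C_Q}\ket{c_P+d(i)_P}_P\,\ket{c_Q+d(i)_Q}_Q=\ket{P_i}\ket{Q_i},
\]
with $\ket{P_i}$ and $\ket{Q_i}$ precisely as stated.

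I would stress that the only non-routine ingredient is the component identity underlying the surjectivity of $\rho$; comparing it with Lemma~\ref{lem:chieins} shows that the factorisation is valid exactly in the Schmidt-rank-one regime $\chi=q^{|V|+c(E)-c(P)-c(Q)}=1$. This is precisely the configuration in which the lemma is used in the tree-tensor-network recursion, where the coset state $\ket{i}_E$ sits across a cut along which it is unentangled by construction. Accordingly I would make this bipartition hypothesis explicit (or note that it is guaranteed by the sub-tree cuts of the TTN), since the factorisation genuinely fails for a cut that crosses a cycle of $G$; everything else in the argument is bookkeeping.
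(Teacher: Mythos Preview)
Your route mirrors the paper's: reorder the qudits so that restriction to $P$ and $Q$ is concatenation, and use $M_E=(M_P\mid M_Q)$ to see that each $c_E\in C_E$ splits as $(c_P\mid c_Q)\in C_P\times C_Q$. The paper stops there and silently replaces $\sum_{c_E\in C_E}$ by $\sum_{c_P\in C_P}\sum_{c_Q\in C_Q}$, justifying this only by the sentence ``for each $c_E$ there is exactly one pair $(c_P,c_Q)$''---which is injectivity of the restriction map, not surjectivity. You are right that surjectivity is the substantive step, that by the counting via Lemma~\ref{lem:prefactor} it is equivalent to $c(P)+c(Q)=|V|+c(E)$, and that the factorisation genuinely fails otherwise (a single cycle cut into two paths already gives $|C_P|\cdot|C_Q|=q^{N}\neq q^{N-1}=|C_E|$). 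So your proposal is the paper's argument together with the missing hypothesis, correctly isolated and correctly tied to the $\chi=1$ case of Lemma~\ref{lem:chieins}.

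One caution on your final paragraph: it is not clear that the $\chi=1$ hypothesis is ``guaranteed by the sub-tree cuts of the TTN''. In the recursion the lemma is applied with the role of $E$ played by the part $P$ from the previous cut, and what is needed is that the \emph{subgraph} $(V,P)$ has $\chi=1$ across the further cut $P_1\mid P_2$. If a cycle of $G$ lies entirely inside $P$, any cut separating its edges gives $\chi>1$ for $(V,P)$, and the factorisation fails again. So while your diagnosis of the lemma is sharper than the paper's own proof, the assertion that the needed hypothesis is automatic in the intended application deserves more scrutiny; the paper's ensuing conclusion $A_{ijk}=\delta_{ij}\delta_{ik}$ inherits exactly the same gap.
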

\textit{Proof.} A reordering of the position of the qudits in $\ket{i}_{E}$, so that the merging of the vectors  $c_{P},c_{Q},d\left(i\right)_{P}$ and $d\left(i\right)_{Q}$ becomes a  concatenation yields
\begin{eqnarray*}
\ket{i}_{E} & = & 
\sum_{c_{P}\in C_{P},c_{Q}\in C_{Q}}
\ket{\left(c_{P}|c_{Q}\right)+\left(d\left(i\right)_{P}
|d\left(i\right)_{Q}\right)}\\
& = & \sum_{c_{P}\in C_{P},c_{Q}\in C_{Q}}
\ket{\left(c_{P}|c_{Q}\right)+d\left(i\right)}.
\end{eqnarray*}
The sets of vectors $\left\{ c_{E}\right\} ,\left\{ c_{P}\right\} $ and $\left\{ c_{Q}\right\} $ are the images of the matrices $M_{E}^T,M_{P}^T$ and $M_{Q}^T$ of identity \ref{eq:Mdefinition} respectively, where the index denotes the edges that the columns correspond to. Since $M_{E}=\left(M_{P}|M_{Q}\right)$, we obtain the result that for each $c_{E}$ there is exactly one pair $\left(c_{P},c_{Q}\right)$ immediately.$\hfill\square$

We observe that the involved sets $C_{P},C_{Q}$ and the vectors  $d\left(i\right)_{P},d\left(i\right)_{Q}$ can be found efficiently. Now we write 
$\ket{i}_{E}=\sum_{ijk}A_{ijk}\ket{P_{j}}\ket{Q_{k}}$ and deduce that 
\[
A_{ijk}=\delta_{ij}\delta_{ik},
\]
except for $A^{[0]}$ and $A^{[1]}$ which have to absorb the square root of the Schmidt coefficients also and hence $A_{ijk}^{[0,1]}=r^{-1/2}\delta_{ij}\delta_{ik}$.

This concludes the proof of theorem \ref{thm:scaling}. $\hfill\square$

\subsection{The bipartite entanglement of the general encoding schemes  (e.g.\ GHZ-product state and the vertex model state)}

So far we have only considered the computational complexity using an encoding into the states $\ket{\psi_{G}}$ and $\ket{\varphi_{G}}$. In this section we want to extend the efficiency statement to the alternative encoding schemes discussed in section \ref{sec:Extending-the-formalism}.

The major modification leading to these schemes and complicating the entanglement aspect is the extension of measurements from one qudit to two or more. In a branch decomposition, the sites being involved in these measurements have to be placed in their own sub-trees, which we will refer to as \emph{contraction sites}. The contraction of the highly entangled states $\ket{\alpha_{e}}$ with these contraction sites will in general not be efficient, but since the size of the contraction sites is limited, this only leads to a constant computational overhead. In a branch decomposition of a state of the extended encoding schemes we can represent the contraction sites as leaves.

The remaining question is ``What is the entanglement of bi-partitions in a branch decomposition where the contraction sites are leaves''? By construction, we immediately find that this question can be answered by looking at the number of states (in our schemes those are either q-dimensional Bell pairs or GHZ states) that are shared by different contraction sites and cut by the branch decomposition. See also  Figs.~\ref{fig:ghzscheme} and \ref{fig:vertexscheme}.

Once we contract the contraction sites in the graphical representation of the general picture (like given in Fig.~\ref{fig:ghzscheme}) to single vertices, we obtain a new graph where the edges represent Bell pairs shared by contraction sites. Graph theory immediately tells us that also in this case the tree width is the decisive quantity of the (contracted) graph that governs the minimum number of \emph{states}  (and hence ebits) that have to be cut in a branch decomposition. The tree width of the contracted graph is carried over from the underlying graph of the classical interaction graph. Thus we can conclude that theorem \ref{thm:scaling} applies for the alternative encoding schemes as well, and the decisive parameters can be derived immediately from the respective encoding patterns.

We also emphasize that non-planar graphs of logarithmically bounded tree-width, as well as non-local interactions are covered by this result. Results regarding efficient computation of homogeneous Potts model partition functions on graphs of logarithmically bounded tree-width have been obtained before, though with entirely different methods. We emphasize that our approach, in contrast to previous approaches, can handle without difficulty also inhomogeneous models. Moreover, it leaves a lot of space for generalizations.

\end{document}